\newcommand{\commentblock}[1]{}
\theoremstyle{plain}
\newtheorem{thm}{Theorem}
\newtheorem{lem}[thm]{Lemma}
\newtheorem{cor}[thm]{Corollary}
\theoremstyle{definition}
\definecolor{ppblue}{RGB}{46,117,182}
\definecolor{ppred}{RGB}{197, 90, 17}
\definecolor{hpCol}{RGB}{150, 0, 150}
\newcommand{\bbc}{\mathbb{C}}
\newcommand{\abs}[1]{\left| {#1} \right|} 
\newcommand{\tr}[1]{\mathrm{tr}\left( #1 \right)}
\newcommand{\prover}[2]{\underset{#1}{\rm Pr}\left(#2\right)}
\newcommand{\expval}[2]{\underset{#1}{\mathbb{E}}\left[{#2}\right]}
\newcommand{\set}[1]{\left\{#1\right\}}
\newcommand{\bitstring}[1]{\left\{0,1\right\}^{#1}}
\newcommand{\order}[1]{\mathcal O\left(#1\right)}
\newcommand{\pnorm}[2]{||#1||_{#2}}
\newcommand{\ketbra}[2]{\ensuremath{\left|#1\right\rangle\!\!\left\langle#2\right|}}
\newcommand{\matrixel}[3]{\ensuremath{\left\langle #1 \vphantom{#2#3} \right| #2 \left| #3 \vphantom{#1#2} \right\rangle}}
\newcommand{\pauli}[1]{\mathcal{P}_{#1}}
\newcommand{\clif}[1]{\mathcal{C}_{#1}}
\newcommand{\logicnor}[1]{\textsc{nor}(#1)}
\newcommand{\logicor}[1]{\textsc{or}(#1)}
\newcommand{\ensemble}{\mathcal{U}}
\newcommand{\sensemble}[1]{\mathcal{U}_{#1}}
\newcommand{\defensemble}{(\mathbb{U},\mu)}
\newcommand{\defsensemble}[1]{(\mathbb{U}_{#1},\mu_{#1})}
\newcommand{\eset}{\mathbb{U}}
\newcommand{\eprob}{\mu}
\newcommand{\seset}[1]{\mathbb{U}_{#1}}
\newcommand{\seprob}[1]{\mu_{#1}}
\newcommand{\measmap}[1]{\mathcal{M}(#1)}
\newcommand{\smeasmap}[2]{\mathcal{M}_{#2}(#1)}
\newcommand{\invmeasmap}[1]{\mathcal{M}^{-1}(#1)}
\newcommand{\invsmeasmap}[2]{\mathcal{M}^{-1}_{#2}(#1)}
\newcommand{\shadownorm}[1]{\left\lVert #1 \right\rVert_{\mathrm{s}}}
\newcommand{\sshadownorm}[2]{\left\lVert #1 \right\rVert_{\mathrm{s}(#2)}}
\newcommand{\ssshadownorm}[3]{\left\lVert #1 \right\rVert_{\mathrm{s}(#2),#3}}
\newcommand{\wt}[1]{\abs{#1}}
\newcommand{\tprob}[1]{t_d(#1)}
\newcommand{\tprobl}[1]{t_{#1,d}}
\newcommand{\stprobl}[2]{t_{#1,#2}}
\newcommand{\ttprobl}[1]{\tau_{(#1),d}}
\newcommand{\stprob}[2]{t_{#2}(#1)}
\begin{document}


\title{Shallow shadows: Expectation estimation using low-depth random Clifford circuits}

\author{Christian Bertoni}
\affiliation{Dahlem Center for Complex Quantum Systems, Freie Universit\"{a}t Berlin, Germany}
\author{Jonas Haferkamp}
\affiliation{Dahlem Center for Complex Quantum Systems, Freie Universit\"{a}t Berlin, Germany}
\author{Marcel Hinsche}
\affiliation{Dahlem Center for Complex Quantum Systems, Freie Universit\"{a}t Berlin, Germany}
\author{Marios Ioannou}
\affiliation{Dahlem Center for Complex Quantum Systems, Freie Universit\"{a}t Berlin, Germany}
\author{Jens Eisert}
\affiliation{Dahlem Center for Complex Quantum Systems, Freie Universit\"{a}t Berlin, Germany}
\address{Helmholtz-Zentrum Berlin f{\"u}r Materialien und Energie, 14109 Berlin, Germany}
\address{Fraunhofer Heinrich Hertz Institute, 10587 Berlin, Germany}
\author{Hakop Pashayan}
\affiliation{Dahlem Center for Complex Quantum Systems, Freie Universit\"{a}t Berlin, Germany}
\date{\today}
\begin{abstract}
We provide practical and powerful schemes for learning many properties of an unknown $n$-qubit quantum state using a sparing number of copies of the state. Specifically, we present a depth-modulated randomized measurement scheme that interpolates between two known classical shadows schemes based on \emph{random Pauli measurements} and \emph{random Clifford measurements}. These can be seen within our scheme as the special cases of zero and infinite depth respectively. We focus on the regime where depth scales logarithmically in $n$ and provide evidence that this retains the desirable properties of both extremal schemes whilst, in contrast to the random Clifford scheme, also being experimentally feasible. We present methods for two key tasks; estimating expectation values of certain observables from generated classical shadows and, computing upper bounds on the depth-modulated shadow norm, thus providing rigorous guarantees on the accuracy of the output estimates. We consider observables that can be written as a linear combination of $\mathrm{poly}(n)$ Paulis and observables that can be written as a low bond dimension matrix product operator. For the former class of observables both tasks are solved efficiently in $n$. For the latter class, we do not guarantee efficiency but present a 
tensor network method that works in practice; by variationally computing a heralded approximate inverse of a tensor network.
\end{abstract}

\maketitle

\section{Introduction}
\label{sec:intro}
It is a well known fact that learning a classical description of a quantum state from access to a limited number of samples (copies of the state) is a demanding task for quantum systems involving a large number of degrees of freedom. Quantum state tomography produces an extremely accurate classical description of the unknown quantum states, but consumes an enormous number of samples \cite{ODonnel2016efficient, Haah2016sample, Tomography}. 
Recently, focus has shifted towards sample efficient approaches that satisfy weaker standards of state learning but are nevertheless very useful. Huang, Kueng and Preskill’s (HKP) seminal work~\cite{Huang2020}, also known as the classical shadows protocol,
is a state learning protocol based on randomized measurements~\cite{ohliger_efficient_2013,RenyiEntropies,PhysRevLett.120.050406,RandomMeasurementsToolbox}. This protocol allows one to estimate the expectation values of many observables from the outcomes of suitable randomized measurements of the unknown state. Importantly, in contrast to the related work of  Ref.~\cite{Aaronson2018shadow}, these randomized measurements are obtained \emph{obliviously} i.e. without prior knowledge of the specific choice of observables. 
 The scheme comes with rigorous performance guarantees, permitting one to determine the trade-off between the number of samples and the accuracy of the estimated expectation value for any given observable.
Classical shadows have found many applications including estimating: expected molecular energies~\cite{Hadfield2022, hadfield2021adaptive}, the purity and ground state proximity of 
\emph{spin chains}~\cite{notarnicola2021} and gate-set properties~\cite{helsen2021estimating} for noise characterization. It has been used to detect entanglement~\cite{Elben2020,Neven2021,Rath2021} and chaos~\cite{Garcia2021} in many body systems; classify quantum data~\cite{li2021} and develop improved variational search algorithms~\cite{Boyd2022,sack2022}. For example,
by relying on the scheme's ability to obliviously estimate the expectation values of many observables, Ref.~\cite{Boyd2022} has proposed a promising new method for training variational circuits to find Hamiltonian eigenstates and Ref.~\cite{Huggins2022} improved existing computational methods and used them to estimate ground state atomization energies of molecules.

Classical shadows have seen some experimental implementation~\cite{struchalin2021experimental,elben2020mixed,zhang2021experimental}, inspired substantial further study~\cite{Lukens2021,Chen2022,Acharya2021InformationallyCP, Hadfield2022} and many related proposals and generalizations including; noise-robustness analysis and noise-robust variants~\cite{Chen2021,Koh2022classicalshadows,flammia2021averaged}, variants for fermionic systems~\cite{Zhao2021} and performance improvement in the setting of pure states~\cite{grier2022sample}. Variants of the protocol have been considered where the randomized measurements are: derandomized~\cite{Huang2021Derandomization}, reused~\cite{helsen2022thrifty, zhou2022performance} or generalized to joint measurements~\cite{grier2022sample}. Randomized measurements generated by low depth unitary ensembles~\cite{akhtar2022scalable,arienzo2022closedform}, locally-scrambled unitary ensembles~\cite{bu2022classical, Choi} or certain Hamiltonian evolutions~\cite{Hu2022} have also been considered.  

The HKP classical shadows protocol has two variants; one randomizes the computational basis measurement by first applying a global Clifford unitary sampled uniformly at random to each copy of the unknown quantum state (a.k.a.~random Clifford measurements), the other by applying a layer of random single qubit Clifford gates (a.k.a.~random Pauli measurements). Both prescriptions give rise to highly interesting and complementary schemes, 
each able to estimate expectation values for a large class of observables using a sparing number of copies of the unknown state. 
The global Clifford scheme performs well for observables with controlled Frobenius norm including quantum states (for fidelity estimation) and other low rank operators with bounded spectral norm. However, this scheme performs very poorly for high rank observables such as Pauli strings, even those with low weight. 
Additionally, the implementation of global Cliffords requires a linear depth circuit making the scheme infeasible even for moderate sized systems. In contrast, the single qubit Clifford scheme is technologically far less demanding and has seen experimental implementation~\cite{struchalin2021experimental,elben2020mixed,zhang2021experimental}, however, it requires a number of samples that scales exponentially in the weight of the observable restricting its application to observables supported on a small number of qubits. 
To make the powerful global Clifford scheme
into a tool useful in present-day laboratories and simultaneously endow it with the advantages of the single qubit Clifford scheme on local observables, it is key to identify and study a family of shadow schemes connecting these extremes.

A natural approach is to use randomized Clifford circuits modulated by circuit depth where the single qubit and global Clifford schemes of HKP can be seen as the zero and infinite depth extremes, respectively. 
However, intermediate schemes possess significant qualitative differences related to the fact that low depth Clifford circuits do not form a group. 
For intermediate schemes, two key technical challenges are:
\begin{enumerate}
\item  Inverting the channel associated with randomized measurements (a.k.a.~the  \emph{measurement channel}) to permit the estimation of expectation values from shadows. 
\item Bounding the \emph{shadow norms}, which in turn bound the variance of the relevant estimators, to obtain guarantees on the accuracy of estimates. 
\end{enumerate}
These are closely related to sufficiently characterizing the second and third moment operators with respect to the (depth-modulated) Clifford ensembles, respectively.
Accomplishing this is significantly more challenging for intermediate depths. For example, in the single qubit and global Clifford schemes of HKP, the measurement  channel has a high degree of symmetry\footnote{In the single qubit Clifford case, the measurement channel commutes with the canonical representation of the permutation group $S_n$ and the Pauli group. In the global Clifford case, it commutes with any global Clifford.}, and it is simply the local and global depolarizing channel, respectively, making inversion simple.
More generally, this channel is depth dependent and, in the intermediate regime away from these extremes, it loses some of its highly symmetric structure. 
A path to a practical, depth-modulated classical shadows scheme must overcome these obstacles.

Here, we present a practical, depth-modulated classical shadows scheme for low depths. We give methods for computing rigorous guarantees on the scheme's performance and discuss how the choice of depth and observable affects performance. More precisely, our protocol randomizes computational basis measurements using Clifford circuits of depth $d\in \{\infty,0,1,2,\ldots\}$ generated by geometrically local, two-qubit Clifford gates acting on $n$ qubits. 

We then present numerical methods for either exactly computing or upper bounding the depth-dependent shadow norm associated with a given observable, permitting guarantees on the trade-off between the number of copies of the state used and the accuracy of the estimates.
A key to our method is the probabilistic interpretation of the eigenvalues of the measurement channels and coefficients relevant in the computation of the shadow norm. This permits use of tensor networks to, efficiently in $n$, describe and manipulate key objects in the framework of classical shadows. 

We consider two classes of observables, both supporting an efficient description. These are; 
observables given as a linear combination of at most $\mathrm{poly}(n)$ Paulis (we call these \emph{sparse observables}) 
and observables given as a \emph{matrix product 
operator} (MPO) \cite{MPORep} with a $\mathrm{poly}(n)$ bounded 
bond dimension (we call these \emph{shallow observables}). An example of the former is local Hamiltonians, 
an example of the latter is projectors onto 
\emph{matrix product states} (MPS) \cite{MPSReps} for fidelity estimation. In the $d\leq \order{\log n}$ regime, our scheme permits efficient estimation of expectation values and efficient computation of shadow norm bounds with respect to sparse observables. In the shallow observable input model, we require a $\mathrm{poly}(n)$ bond dimension matrix product state representation of the inverse measurement channel. Given this, efficient estimation of expectation values and efficient computation of shadow norm bounds is achieved. We employ a variational method for computing an approximate inverse to the measurement channel produced in the form of a low bond dimension MPS. This approximation error is bounded, permitting rigorous bounds on the errors associated with expectation value estimation. In practice, our variational inversion procedure only needs to be executed once for any fixed depth $d$ and system size $n$ and works well for practically motivated parameter scales. 
However, for an arbitrary choice of parameters $n$ and $d$, this procedure is not guaranteed to produce a 
high accuracy approximate inverse within a run-time that is feasible or efficient in system size.

Our research was performed concurrently to that of Ref.~\cite{akhtar2022scalable} and their protocol overlaps significantly with ours. Nevertheless, our analysis goes further in some respects.
 Refs.~\cite{Choi,bu2022classical,akhtar2022scalable} focus on the \emph{entanglement features} formalism and the so called \emph{locally scrambled shadow norm} which bounds the sample complexity for typical input states\footnote{The locally scrambled shadow norm bounds the sample complexity averaged over all states in a one design, hence if it is bounded by some $B>0$, 
 for some $\delta>0$ a fraction $1-\delta$ of these states will have a sample complexity upper bounded by $\sim B/\delta$.} (as opposed to worst case input states). This quantity is upper bounded by the shadow norm but is used as a proxy for the shadow norm since it is easier to characterize mathematically and, for certain observables, is expected to be proportional to the shadow norm based on numerical evidence. We present bounds for the \emph{shadow norm}. Additionally, we present rigorous \emph{analytical} upper bounds to the locally scrambled shadow norm in the regime $d=\Theta(\log(n))$. This provides evidence that this regime achieves an ideal middle ground between the $d=0$ and $d=\infty$ cases in the following form. We show that for typical input states our $d=\Theta(\log(n))$ protocol is expected to yield, up to constant factors, the same sample complexity as the much more experimentally demanding global Cliffords scheme. 
 Additionally, we show that for the important class of bounded observables that are sparse and 
 have low-weight Pauli components
 (e.g., 1D local Hamiltonians), in contrast to the global scheme, our $d=\Theta(\log(n))$ scheme is provably sample efficient.

\section{Notation and overview} 

Throughout, we will use $\mathbb{I}$, $X$, $Y$, and $Z$ to denote the single qubit Pauli operators, unless otherwise stated. We use $\pauli{n}:=\set{\mathbb{I},X,Y,Z}^{\otimes n}$ to denote the $n$-qubit set of Pauli strings (without phases). We index the element of $\pauli{n}$ by $\lambda=(x,z)\in \bitstring{n} \times \bitstring{n}$ as follows
\begin{align}\label{eq:pauli_index}
    P^{\lambda}=\bigotimes_{j\in [n]} i^{x_j z_j} X^{x_j} Z^{z_j},
\end{align}
and use $\pm \mathcal{Z}$ to denote $\set{\pm P^{(0^n,z)}~|~z\in \bitstring{n}}$ i.e., the set of Pauli strings consisting only of $\mathbb{I}$ and $Z$ factors up to a phase factor of $\pm 1$.
We use $\clif{n}$ to denote the $n$-qubit Clifford group and $\oplus$ to denote binary vector addition modulo $2$.

In \cref{sec:hkp}, we give a brief summary of the method proposed in Ref.~\cite{Huang2020}. In \cref{sec:method}, we sketch out our depth-modulated classical shadows method. In the subsequent sections we then go into more details regarding the ingredients of our scheme. In Section \ref{sec:representations_M}, we discuss the properties of the \emph{measurement channel}, a key object in classical shadows, and construct efficiently computable expressions of its eigenvalues in term of probabilities. In section \ref{sec:inverse} we go over a method to obtain the inverse of the measurement channel, which is essentuial to compute expectation values. In \cref{sec:computing_shadows}, we discuss how to acquire data and manipulate it classically into an estimation of an expectation value within our scheme. Finally in \cref{sec:performance_guarantees}, we analyse the sample complexity of our method and provide numerical experiments in \cref{sec:numerics}.

\section{The HKP classical shadows protocol}\label{sec:hkp}
Given many identical copies of an unknown $n$-qubit quantum state $\rho$, the goal is to compute a classical description $\hat{\rho}$ such that for any observable $O$ in a large class, the classical description can be used to generate sufficiently accurate expectation values. Ref.~\cite{Huang2020} defines an operator norm $\shadownorm{O}$ and a procedure for computing a classical description $\hat{\rho}$ (the shadow) from $N$ copies of the unknown state $\rho$. In particular, the $N$-copy shadow can be though of as a list of $N$ independent copies each of which can be represented as a stabilizer state acted upon by the inverse of the measurement channel. After the shadow has been constructed, we are given a set of $M$ distinct observables $\set{O_i}_{i=1}^M$ with spectrum in the interval $[-1,1]$ and bounded norm, $\shadownorm{O_i}^2\leq B$. To estimate $e_i:=\tr{O_i \rho}$, the \emph{median of means} method 
\cite{lugosi2019mean}
is employed~\cite{lerasle2019lecture, Huang2020}. 

Here, the $1$-copy shadows are partitioned into equal sized blocks and averaged within each block to produce $\hat{\rho}^{(j)}$ for each block $j$. The estimate $\hat{e}_i$ is given by taking the median of $\tr{O_i \hat{\rho}^{(j)}}$ values over all blocks.
With this scheme, HKP show that a surprisingly small number of copies of $\rho$ suffice to construct $\hat{\rho}$ to sufficient accuracy such that for all $M$ observables the expectation values with respect to $\rho$ can be efficiently estimated from $\hat{\rho}$. More precisely, they show that with high probability over the randomness of the construction of the shadow, $\abs{e_i -\hat{e}_i }\leq \epsilon$ for all $i\in [M]$ provided $N\geq \Omega(B \log(M) \epsilon^{-2})$.

To construct the shadow, 
for each copy of $\rho$, a unitary $U$ is independently chosen from a subset $ \eset \subseteq \clif{n}$ of Clifford unitaries with probability $\eprob(U)$. This defines an ensemble of $n$-qubits Clifford gates $\ensemble=\defensemble$. From $\rho$, the state $U \rho U^{\dagger}$ is generated and measured in the computational basis to produce some computational basis state $\ketbra{b}{b}$. Undoing the unitary $U$ produces the stabilizer state $\sigma=U^{\dagger}\ketbra{b}{b}U$, also known as a \emph{snapshot} of $\rho$. The expected outcome of this process (over the ensemble and measurement outcomes) can be viewed as a linear map, which we will refer to as the \emph{measurement channel}.
\begin{align}\label{eq:measurement_map}
    \measmap{\rho}:=\expval{U\sim \ensemble}{\sum_{b \in \bitstring{n}} \matrixel{b}{U \rho U^{\dagger}}{b} U^{\dagger}\ketbra{b}{b}U}.
\end{align}
Assuming this channel, $\measmap{\cdot}$ is invertible, the operator $\invmeasmap{U^{\dagger} \ketbra{b}{b} U}$ defines a \emph{1-copy shadow}.
The shadow (and hence also an average of independent shadows) is an unbiased estimators of $\rho$ i.e., it reproduces $\rho$ on expectation. Since expectation values are linear functions of $\rho$, a multi-copy average of shadows can be used as an unbiased estimator of expectation values; however, how quickly (in the number of copies) this estimate converges to the expectation value is determined by the variance of the estimator which is upper bounded by the squared shadow norm $\shadownorm{O}^2$,
\begin{align}\label{eq:shadow_norm}
    \underset{\sigma}{\max} \left\{ \expval{U\sim \ensemble}{\sum_{b \in \bitstring{n}}
    \matrixel{b}{U \sigma U^{\dagger}}{b} \matrixel{b}{U \invmeasmap{O} U^{\dagger}}{b}^2 }\right\},
\end{align}
where the maximization is over density matrices $\sigma$.

We note that the procedure for constructing the shadow, the measurement channel and the shadow norm all depend on the choice of ensemble $\ensemble$. Ref.~\cite{Huang2020} explicitly considers two ensembles we label as $\sensemble{0}=\defsensemble{0}$ and $\sensemble{\infty}=\defsensemble{\infty}$ where $\seset{0}$ consists of all $n$-fold tensor products of single qubit Clifford unitaries and $\seset{\infty}$ consists of all $n$-qubit Clifford unitaries with $\seprob{0}$ and $\seprob{\infty}$ being the uniform distribution over $\seset{0}$ and $\seset{\infty}$, respectively. Notice that linear depth circuits already suffice to implement any Clifford unitary exactly \cite{bravyiHadamardFree2021}, and hence this suffices to implement the HKP global Cliffords scheme. However, in the random circuit setting one needs $d=\infty$ to generate uniform randomness \textit{exactly}, hence the notation $\mathcal U_{\infty}$. 
In these two extremal settings, the measurement channels (denoted $\smeasmap{\cdot}{0}$ and $\smeasmap{\cdot}{\infty}$) have a particularly simple form; acting as the product of single qubit depolarizing channels and a global depolarizing channel, respectively. That is, for $P\neq \mathbb{I}$ an $n$-qubit Pauli, 
these act as
\begin{align}\label{eq:HKP_meas_ops}
    \smeasmap{P}{0}=3^{-\wt{P}}P \quad \text{and} \quad \smeasmap{P}{\infty}=\frac{1}{2^n+1}P,
\end{align}
where $\wt{P}\in \set{0,1,\ldots,n}$ denotes the weight of the Pauli, 
i.e., the number of non-identity tensor factors. More generally, the action of the measurement channel can be defined by linear extension from \cref{eq:HKP_meas_ops}. Combined with the observation that $\smeasmap{\mathbb{I}^{\otimes n}}{0}=\mathbb{I}^{\otimes n}=\smeasmap{\mathbb{I}^{\otimes n}}{\infty}$, \cref{eq:HKP_meas_ops} also establishes that these measurement channels are invertible.

\section{Overview of the method}\label{sec:method}
In this section we give a step by step overview of how our method can be implemented.
\begin{tcolorbox}[title= Depth-modulated classical shadows]
\begin{enumerate}
    \item Given $N$ copies of some state $\rho$ on $n$ qubits, pick $d$. We expect the ideal $d$ to scale like $\log(n)$, both in terms of efficiency of implementation and in terms of sample efficiency for the largest possible class of observables.
    \item Construct the MPS representation of the measurement channel $\mathcal M_d$ as per \cref{sec:representations_M}.
   \item Sample the $N$ classical shadows as per \cref{sec:computing_shadows}.
    \item Choose a set of observables. If there are any shallow observables, obtain an MPS representation of $\mathcal M_d^{-1}$ as per \cref{sec:inverse}.
    \item Compute the corresponding expectation values as per \cref{sec:computing_shadows}.
    \item Upper bound the shadow norms of the chosen observables as per \cref{sec:performance_guarantees}. 
    \item Use standard techniques to bound the accuracies of estimates from upper bounds on the variances of the unbiased estimators.
\end{enumerate}
\end{tcolorbox}

\section{representations of the measurement channel}\label{sec:representations_M} We consider the extension of the classical shadows protocol to the intermediate Clifford depth regime. 
We define the ensemble of Clifford unitaries, $\sensemble{d}=\defsensemble{d}$, where for `circuit depth' $d\in \{\infty,0,1,\dots\}$, $\seset{d}$ denotes the set of allowed unitaries and $\seprob{d}$ denotes the measure over these. Concretely, we sample from $\sensemble{d}$ by first applying a uniformly random single qubit Clifford gate to each qubit (the $0^{\mathrm{th}}$ layer)\footnote{Strictly speaking, this first layer does not change the ensemble and can be omitted, we include it here as then the $d=0$ case of our ensemble corresponds to the random Pauli measurements scheme of Ref.~\cite{Huang2020}.} then for each layer $i\in [d]$ we apply uniformly randomly chosen two qubit Clifford gates in a nearest-neighbour circular brickwork pattern applied to $n$ (even) qubits in a circle, 
i.e., 
the first and last qubit are identified as neighbours (see Fig.~\ref{fig:brickwork}). With some modifications, our work generalizes to qubits on a line (non-circular), arbitrary $n$ and to other architectures but we do not consider these here for clarity and concreteness.

We note that (for $n$ even), the $d=0$ and $d=\infty$ limits of our ensembles $\sensemble{d}$ recover the $\sensemble{0}$ and $\sensemble{\infty}$ ensembles studied in Ref.~\cite{Huang2020}. 
In fact, for $d=\mathcal O(n)$, $\sensemble{d}$ becomes an approximate $3$-design~\cite{brandao2016local} and recovers the properties of $\sensemble{\infty}$ for our purposes. Using \cref{eq:measurement_map,eq:shadow_norm}, we define the measurement channel $\smeasmap{\cdot}{d}$ and $\sshadownorm{\cdot}{d}$ with respect to the ensemble $\sensemble{d}$. We show that this channel is diagonal in the Pauli basis and give a probabilistic expression for the eigenvalues.

\begin{figure}
    \centering
    \includegraphics[scale=.3, trim= 0 100 400 100]{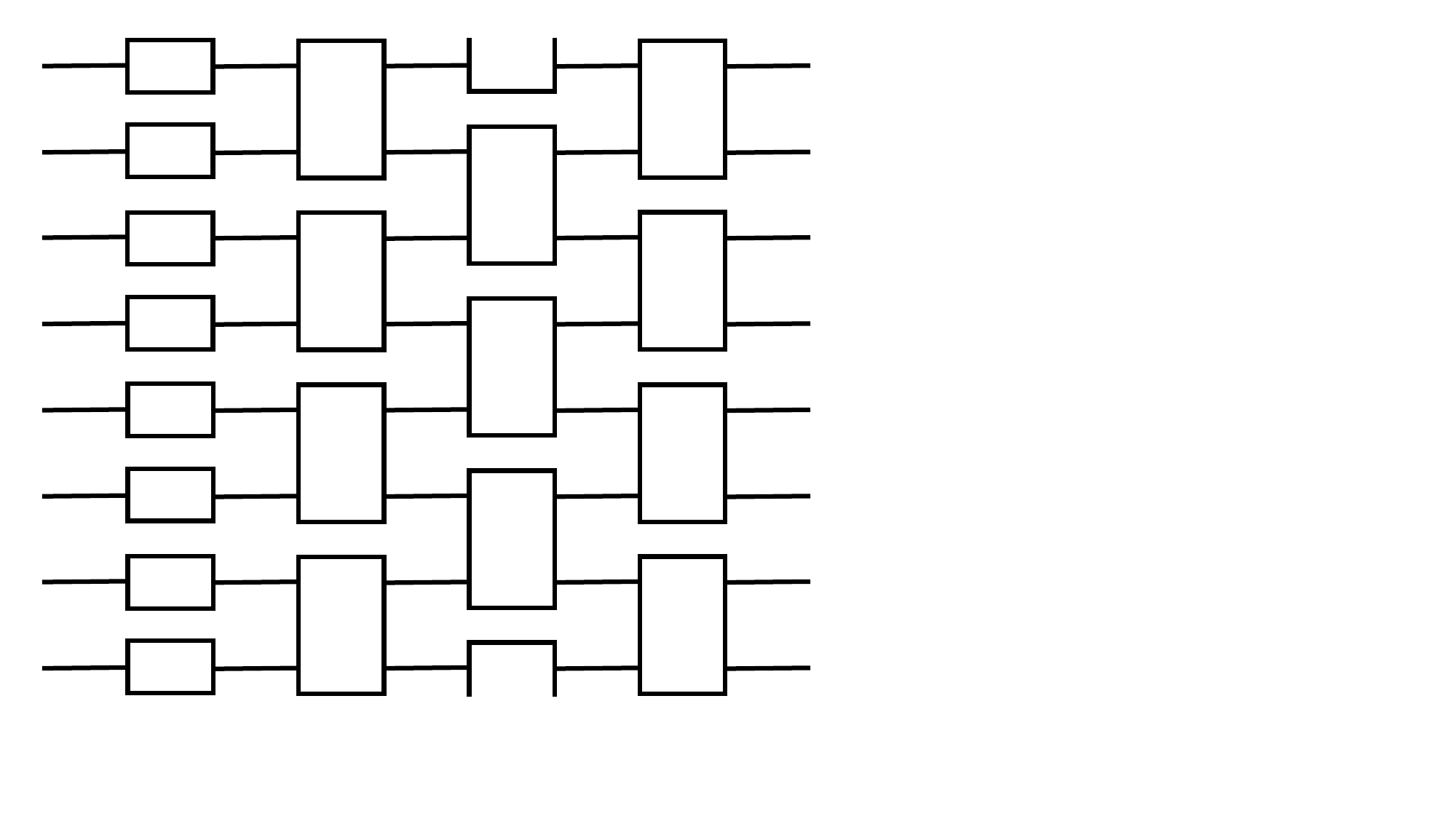}
    \caption{Nearest-neighbour circular brickwork architecture for $n=8$ qubits and depth $d=3$. Consists of $3$ layers (excluding) the $0^{\rm th}$ layer of single qubit gates. The second layer contains a two qubit gate acting on qubits $1$ and $8$ which are identified as nearest-neighbour in this architecture.}
    \label{fig:brickwork}
\end{figure}

\begin{lem}\label{lem:measurement_map_action}
For $\lambda \in \bitstring{2n}$, let $\tprobl{\lambda}\in [0,1]$ be 
defined as
    \begin{align}\label{eq:t as a prob}
        \tprobl{\lambda}:=\prover{U\sim \sensemble{d}}{UP^{\lambda}U^{\dagger}\in \pm\mathcal{Z}},
    \end{align}
then,
\begin{align}\label{eq:meas_diag_on_paulis}
    \smeasmap{P^{\lambda}}{d}=\tprobl{\lambda}P^{\lambda}.
\end{align}
\end{lem}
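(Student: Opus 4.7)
The plan is to apply $\mathcal{M}_d$ directly to $P^{\lambda}$, exploit the fact that Clifford conjugation maps Paulis to signed Paulis, and then use a simple observation about the ``computational basis diagonal'' of a Pauli operator.

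First, I would fix a unitary $U \in \mathcal{C}_n$ and define the map
\begin{align*}
    \Phi_U(\rho) := \sum_{b \in \{0,1\}^n} \langle b| U \rho U^\dagger |b\rangle\, U^\dagger |b\rangle\langle b| U,
\end{align*}
so that $\mathcal{M}_d(\rho) = \mathbb{E}_{U \sim \mathcal{U}_d} [\Phi_U(\rho)]$. The key subclaim is that $\Phi_U(P^\lambda) = \mathbbm{1}[U P^\lambda U^\dagger \in \pm \mathcal{Z}] \cdot P^\lambda$. Since $U$ is Clifford, $U P^\lambda U^\dagger = s Q$ for some sign $s \in \{\pm 1\}$ and some $Q \in \mathcal{P}_n$. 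Then $\Phi_U(P^\lambda) = s\, U^\dagger \bigl( \sum_b \langle b|Q|b\rangle\, |b\rangle\langle b| \bigr) U$, and the bracketed operator is the diagonal part of $Q$ in the computational basis.

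Next I would record the elementary fact that the computational-basis diagonal of any Pauli $Q \in \mathcal{P}_n$ equals $Q$ itself if $Q \in \mathcal{Z}$ (i.e., is built only from $\mathbb{I}$ and $Z$ factors), and is the zero operator otherwise. This is because an $X$ or $Y$ tensor factor forces every diagonal entry of $Q$ to vanish. Plugging this back, when $UP^\lambda U^\dagger = sQ \in \pm \mathcal{Z}$, we have $s U^\dagger Q U = U^\dagger (sQ) U = P^\lambda$; otherwise the diagonal part vanishes and $\Phi_U(P^\lambda) = 0$. Hence
\begin{align*}
    \Phi_U(P^\lambda) = \mathbbm{1}\!\left[ U P^\lambda U^\dagger \in \pm \mathcal{Z} \right] P^\lambda.
\end{align*}

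Finally, I would take the expectation over $U \sim \mathcal{U}_d$ on both sides; linearity yields $\mathcal{M}_d(P^\lambda) = \Pr_{U \sim \mathcal{U}_d}[ U P^\lambda U^\dagger \in \pm \mathcal{Z}]\, P^\lambda = t_{\lambda,d}\, P^\lambda$, which is exactly \eqref{eq:meas_diag_on_paulis}. There is no real obstacle: the only subtlety is keeping track of the $\pm$ sign under Clifford conjugation, which cancels cleanly because $U^\dagger (sQ) U = P^\lambda$ by definition of $s$ and $Q$. No properties of the specific brickwork architecture, of depth $d$, or of the measure $\mu_d$ are used — the lemma holds for any ensemble supported on $\mathcal{C}_n$, reflecting the fact that diagonality in the Pauli basis is a generic consequence of averaging over Clifford conjugation followed by a computational-basis measurement.
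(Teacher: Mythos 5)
Your proof is correct, but it takes a genuinely different and more elementary route than the paper's. You establish the \emph{pointwise} identity $\Phi_U(P^{\lambda}) = \mathbbm{1}\left[U P^{\lambda} U^{\dagger}\in\pm\mathcal{Z}\right]P^{\lambda}$ for each fixed Clifford $U$, using only the facts that Clifford conjugation sends Paulis to signed Paulis and that the computational-basis diagonal of a Pauli is either the Pauli itself (if it lies in $\mathcal{Z}$) or zero; averaging then gives the result for \emph{any} ensemble supported on $\clif{n}$, with no invariance assumption. The paper instead first uses Pauli invariance of $\sensemble{d}$ to collapse the sum over outcomes $b$ to the $b=0$ term, and then invokes an auxiliary result (\cref{lem:av_prod_paulis}) showing that $\expval{U}{\prod_i\langle 0|UP_iU^{\dagger}|0\rangle}$ vanishes unless $\prod_i P_i=\pm\mathbb{I}$ and otherwise equals the probability that all $UP_iU^{\dagger}$ lie in $\pm\mathcal{Z}$. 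Your argument is shorter and exposes the correct level of generality for this particular lemma, which is a real gain. What the paper's heavier machinery buys is reuse: \cref{lem:av_prod_paulis} with $k=3$ and the Pauli-invariance reduction are exactly what is needed later to evaluate the third-moment quantities $\ttprobl{\lambda,\lambda'}$ in the shadow-norm computation (\cref{lem:state_dep_snorm}), where the state $\sigma$ is not a Pauli and a pointwise cancellation of the kind you exploit is no longer available. So your proof stands on its own, but if you wanted to continue through the rest of the paper you would still need to prove something like \cref{lem:av_prod_paulis} anyway.
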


The proof of \cref{lem:measurement_map_action} can be found in \cref{app:proof_lem1}. We note that the diagonal action of the measurement channel in the Pauli basis was independently identified in Ref.~\cite{bu2022classical}. 
This probabilistic interpretation has an important conceptual value as it allows to interpret $t_{\lambda,d}$ as the probability of a certain outcome of a classical random walk on bit strings, rather than on the Clifford group. As a matter of fact since our ensemble is invariant under local Cliffords $t_{\lambda,d}$ only depends on the support of $P^{\lambda}$.
To recover 
Eq.~(\ref{eq:HKP_meas_ops}), corresponding to the two extremes analyzed in HKP, from \cref{lem:measurement_map_action}, we make the observation that $U\in \seset{0}$ must preserve Pauli weight, i.e., $\wt{P}=\wt{UPU^{\dagger}}$ sending each single qubit non-identity tensor factor of $P$ to $\pm Z$ with probability $1/3$, hence, $\stprobl{\lambda}{0}=3^{-\wt{P^{\lambda}}}$. Further, for $U\sim \sensemble{\infty}$ and $P^{\lambda}\neq \mathbb{I}^{\otimes n}$, the Pauli $UP^{\lambda}U^{\dagger}$ is equally likely to be any non-identity Pauli giving
\begin{align*}
    \stprobl{\lambda}{\infty}=\frac{2^n -1}{4^n -1}=\frac{1}{2^n +1}.
\end{align*}

We now present efficient methods for exactly computing both $\tprobl{\lambda}$ for a given $\lambda$ and $\smeasmap{\sigma}{d}$ for a state $\sigma$ given as a low\footnote{Meaning, upper bounded by $\mathrm{poly}(n)$.} bond dimension MPO provided that $d=\order{\log(n)}$. The way in which we resort
to tensor network techniques to arrive at scalable
randomized schemes for quantum systems identification
is reminiscent of that of Ref.~\cite{ohliger_efficient_2013}. 
We will work in the Pauli basis $\pauli{n}$ as it diagonalizes the measurement channel (c.f. \cref{eq:meas_diag_on_paulis}). 

Any operator $\sigma$ acting on $(\bbc^2)^n$ can be written in the Pauli basis as 
\begin{align}\label{eq:sigma_as_MPO}
    \sigma=\sum_{\lambda} \alpha_\lambda P^{\lambda},
\end{align}
with the coefficients $\alpha_\lambda$ in the form
\begin{align}\label{eq:MPS_form}
    \alpha_\lambda=\tr{C_{\lambda_1} C_{\lambda_2} \cdots C_{\lambda_n}},
\end{align}
where $C_{0,0}, \ldots, C_{1,1}$ are $b_{\alpha} \times b_{\alpha}$ matrices and $b_{\alpha}$ is known as the bond dimension of the vector $\alpha_{\bullet}$. We will be interested in operators $\sigma$ such that the bond dimension is at most polynomial in $n$. In general, a bond dimension that scales exponentially in $n$ is needed to express an arbitrary operator via \cref{eq:MPS_form}. \cref{eq:MPS_form} takes the form of a matrix product state. This form is particularly useful since contractions over exponentially many indices can be computed efficiently. For example,
\begin{align}\label{eq:MPS_contraction}
    \sum_{\lambda} \alpha_{\lambda}=\tr{\left(\sum_{\lambda_1} C_{\lambda_1}\right)\times \ldots \times \left(\sum_{\lambda_n} C_{\lambda_n}\right)}
\end{align}
with the right hand side being efficiently computable in $n$. By \cref{eq:meas_diag_on_paulis}, we have
\begin{align}
    \smeasmap{\sigma}{d}=\sum_{\lambda} \alpha_{\lambda} \tprobl{\lambda} P^{\lambda},
\end{align}
 We will write the vector $\tprobl{\bullet}$ as an MPS. We note that if the bond dimensions of $\alpha_{\bullet}$ and $\tprobl{\bullet}$ are $n^{\order{1}}$ then $\smeasmap{\sigma}{d}$ can be written as an MPO with $n^{\order{1}}$ bond dimension. In particular, the updated vector of Pauli coefficients $\alpha'_{\bullet}$ can be written as an MPS with bond dimension that is the product of the bond dimensions of $\alpha_{\bullet}$ and $\tprobl{\bullet}$ where $\alpha'_{\lambda}:=\alpha_{\lambda} \tprobl{\lambda}$. This will permit the efficient computations of expectation values with respect to $n^{\order{1}}$ bond 
dimension MPO observables.

\begin{lem}\label{lem:meas_map_comp}
$t_{\lambda,d}$ can be written as an MPS with bond dimension at most $2^{d-1}$.
\end{lem}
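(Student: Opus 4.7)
The plan is to construct an MPS representation of the vector $t_{\bullet,d}$ directly from the brickwork geometry of the ensemble $\sensemble{d}$, exploiting the very low effective rank of a single averaged two-qubit Clifford acting on Pauli labels.

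First, I would expand the probability in \cref{eq:t as a prob} by introducing intermediate Pauli indices $\mu^{(0)} = \lambda, \mu^{(1)}, \ldots, \mu^{(d)} \in \bitstring{2n}$, where $\mu^{(k)}$ labels the Pauli obtained by conjugating $P^{\lambda}$ through the first $k$ layers. Because the gate choices are independent both across layers and within a layer (where the gates act on disjoint pairs), the probability factorizes as
\begin{equation*}
t_{\lambda,d} = \sum_{\mu^{(1)},\dots,\mu^{(d)}} \mathbbm{1}\!\left[\mu^{(d)} \in \pm\mathcal{Z}\right] \prod_{k=1}^d \prod_{g \in \text{layer } k} p\!\left(\mu^{(k-1)}\big|_g \to \mu^{(k)}\big|_g\right),
\end{equation*}
where $p$ denotes the transition kernel on $(\bitstring{2})^2$ induced by a single Haar-averaged two-qubit Clifford, and $\mu|_g$ denotes the restriction of the Pauli label to the support of gate $g$. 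This is a two-dimensional tensor network on the brickwork lattice whose only open legs are the input indices $\lambda_1, \ldots, \lambda_n$.

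Second, I would observe that $p$ has an extremely constrained action: it fixes the identity input $II$ and sends every non-identity two-qubit Pauli to the uniform distribution over non-identity Paulis. Consequently the averaged gate tensor decomposes as the sum of two rank-one pieces, an identity-preserving piece and a depolarizing piece, and therefore has rank at most two when viewed as a map between its four Pauli legs. Third, I would contract the tensor network site by site to produce an MPS in $\lambda$. Across any vertical cut between sites $i$ and $i+1$, only the two-qubit gate tensors whose support straddles the cut contribute to the bond; the $0^{\text{th}}$ layer is site-local and contributes nothing, and the terminal projector onto $\pm\mathcal{Z}$ also factorizes over sites. Combining the rank-two factors of the straddling gates through the $d$ brickwork layers, and using the identity-preservation of $p$ to collapse one degree of freedom along each column of gates, produces the bond dimension bound $2^{d-1}$. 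The tensors $C_{\lambda_i}$ can then be read off explicitly from the rank-two decomposition of the averaged gate.

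The main obstacle is the final combinatorial accounting: naively composing rank-two tensors layer by layer gives $2^{d}$, so one must carefully use the identity-preservation of $p$ and the alternating parity of the brickwork layers at a fixed cut to save the missing factor of two. The probabilistic interpretation of $t_{\lambda,d}$ provided by \cref{lem:measurement_map_action} is what makes this tensor-network contraction well-defined, since it identifies $t_{\lambda,d}$ with the total weight of the Markov chain on Pauli labels induced by the brickwork, rather than with a more delicate object on the Clifford group itself.
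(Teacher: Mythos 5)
Your construction is essentially the paper's: both identify $t_{\lambda,d}$ (via \cref{lem:measurement_map_action}) with the partition function of a Markov chain on Pauli labels driven by the brickwork, write it as a two-dimensional tensor network of averaged gate kernels capped by a local projector onto $\pm\mathcal{Z}$, and contract vertically into an MPS. Your ``rank two'' observation about the averaged two-qubit Clifford kernel is exactly the mechanism the paper exploits, phrased differently: the paper first collapses each qubit's label $(x_j,z_j)$ to a single support bit via $\textsc{or}$ gates (using local-Clifford invariance, so only signatures matter), obtaining a $4\times 4$ transition matrix $B$ on two-qubit signatures whose last three columns coincide; hence each gate tensor needs only one input bit ($B$ reduces to a $4\times 2$ matrix $B'$), which is precisely your identity-vs-non-identity dichotomy. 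The one genuine gap is the step you yourself flag: the exact constant $2^{d-1}$ is never actually derived, and ``use the alternating parity of the brickwork layers to save the missing factor of two'' is a placeholder rather than an argument. The paper settles this by an explicit regrouping of the $\textsc{or}$/$B'$ tensors into per-site transfer matrices $M_{d,h}$ of size $2^{d-1}\times 2^{d-1}$ (\cref{app:proof_lem2}, Figs.~\ref{fig:tensor_definitions} and \ref{fig:tensor_component_definitions}); if you want to claim the precise bound you need to carry out that leg-counting across a cut rather than assert it. Note, however, that for everything downstream (in particular \cref{cor:transition_prob_comp}) any $2^{\order{d}}$ bound suffices, so the missing factor of two is cosmetic rather than structural.
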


We present the proof of \cref{lem:meas_map_comp} in the form of an explicit construction in \cref{app:proof_lem2} where we show that $t_{\lambda,d}$ is given as the tensor element depicted in \cref{fig:tprob_TN_MPS}.
\begin{figure}
    \centering
    \includegraphics[scale=.3, trim= 0 100 550 100]{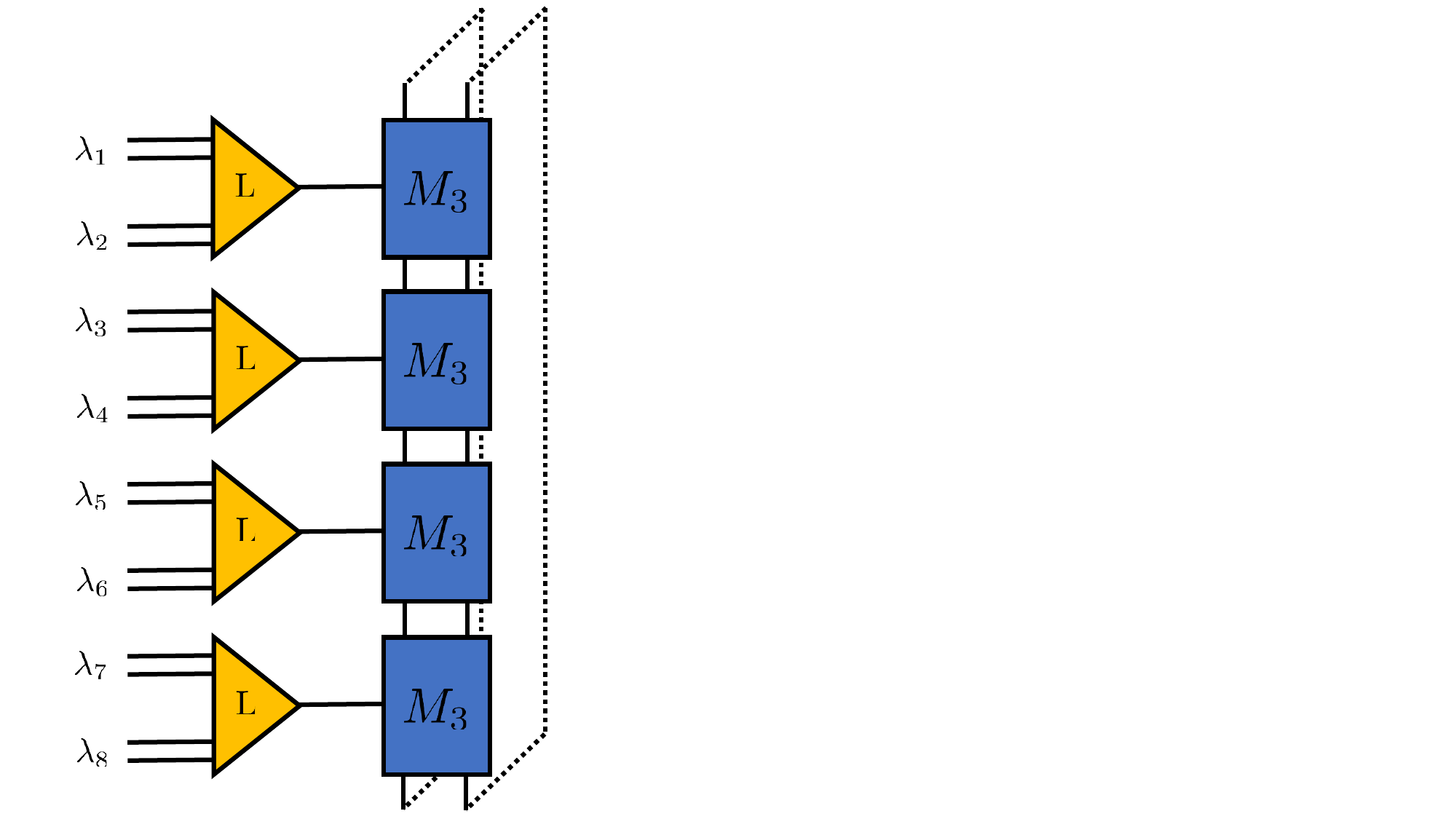}
    \vspace{0.7cm}
    \caption{Tensor network for computing $t_{\lambda,d}$ for $n=8$ qubits and depth $d=3$. Each $L$ tensor is a logical operator taking the $\textsc{or}$ of two $\textsc{or}$ gates. The $M_d$ tensors can be constructed from the random Clifford architecture 
    (c.f.~\cref{fig:brickwork}) as per \cref{app:proof_lem2}. 
    When its horizontal leg input value $h\in \set{0,1}$ is assigned, the tensor becomes a transfer matrix $M_{d,h}$ of dimension $2^{d-1} \times 2^{d-1}$.}
    \label{fig:tprob_TN_MPS}
\end{figure}

The following follows immediately from \cref{lem:meas_map_comp}.
\begin{cor}\label{cor:transition_prob_comp}
For depth $d=\order{\log n}$, the probabilities $t_{\lambda,d}$ can be exactly computed in run-time $n^{\order{1}}$.
\end{cor}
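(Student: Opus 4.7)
The plan is to deduce the corollary directly from Lemma~\ref{lem:meas_map_comp} by a straightforward cost analysis of a single MPS contraction. More precisely, fix any $\lambda\in\bitstring{2n}$. Invoking Lemma~\ref{lem:meas_map_comp} (or equivalently reading off the explicit tensor network in Figure~\ref{fig:tprob_TN_MPS}), the scalar $t_{\lambda,d}$ is obtained as
\begin{equation*}
t_{\lambda,d}=\tr\!\left(M_{d,h_1(\lambda_1)}\,M_{d,h_2(\lambda_2)}\cdots M_{d,h_n(\lambda_n)}\right),
\end{equation*}
where each $h_i(\lambda_i)\in\set{0,1}$ is determined by the local Pauli label via the $L$ (logical-\textsc{or}) tensors, and each $M_{d,h}$ is a transfer matrix of size at most $2^{d-1}\times 2^{d-1}$ that can be read off from the random Clifford architecture.

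First, I would note that the $n$ transfer matrices only come in a constant number of flavors (determined by the $0$th layer structure and the two possible horizontal inputs), so they can all be computed up-front in time polynomial in their size, i.e.~in time $n^{\order{1}}$ when $d=\order{\log n}$. Next, evaluating $t_{\lambda,d}$ is reduced to computing a product of $n$ such matrices followed by taking a trace. This takes $\order{n\cdot (2^{d-1})^{\omega}}$ arithmetic operations, where $\omega$ is the matrix-multiplication exponent; any $\omega\le 3$ suffices. Finally, substituting $d=\order{\log n}$ gives $2^{d-1}=n^{\order{1}}$, and hence the overall runtime is bounded by $n^{\order{1}}$, as claimed.

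The argument contains essentially no obstacle beyond Lemma~\ref{lem:meas_map_comp}: the only thing to verify is that the ``$h$-legs'' of the $M_d$ tensors indeed become classical inputs once $\lambda$ is fixed (so that one really is contracting a one-dimensional chain of $2^{d-1}$-dimensional matrices, not a higher-dimensional tensor network), and that the $L$ tensors introduce only constant-size local reshuffling of these inputs. Both are immediate from the explicit construction underlying Lemma~\ref{lem:meas_map_comp}, and the corollary follows.
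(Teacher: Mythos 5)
Your proposal is correct and matches the paper's (implicit) argument: the corollary is stated as following immediately from \cref{lem:meas_map_comp}, and your cost analysis — contracting a chain of transfer matrices of dimension $2^{d-1}=n^{\order{1}}$ once $\lambda$ is fixed — is exactly the intended reasoning. The only cosmetic discrepancy is that in the paper's construction each $L$/$M_d$ block absorbs the signature of two qubits, so the chain has $n/2$ sites rather than $n$, which does not affect the runtime bound.
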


For the shadow protocol to work, it is essential that the measurement channel be invertible. From \cref{eq:t as a prob}, we observe that for any $\lambda\in\{0,1\}^{2n}$, the probability $t_{\lambda,d}$ is non-zero ensuring non-singularity and that the inverse is given by
\begin{align}
    \invsmeasmap{P^{\lambda}}{d}=\frac{1}{\tprobl{\lambda}}P^{\lambda}.
\end{align}

\section{Computing shallow shadows and estimating expectation values}\label{sec:computing_shadows} We now specify how copies of the unknown state $\rho$ are measured and stored in a form that specifies the shadow $\hat{\rho}$. This step of our procedure is analogous to Ref.~\cite{Huang2020}. That is, we first sample $U\sim \sensemble{d}$ and apply $U$ to a copy of $\rho$. We then measure in the computational basis. The output computational basis state $\ketbra{b}{b}$ is then conjugated by $U^{\dagger}$ to produce a snapshot $\sigma=U^{\dagger}|b\rangle\langle b|U$ that depends on the random variables $U$ and $b$.
The rest of the procedure is based on the fact that the expected value of $\sigma$ is by definition $\mathcal M_d(\rho)$ (cf. \cref{eq:measurement_map}), hence the expectation value of $\invsmeasmap{\sigma}{d}$ is $\rho$. $\invsmeasmap{\sigma}{d}$ is then an unbiased estimator of $\rho$. As the trace is linear, for any observable $O$, we then get that $\tr{O\invsmeasmap{\sigma}{d}}$ is an unbiased estimator of the expectation value $\tr{O\rho}$. 
We note here that since $\mathcal M^{-1}_d$ is self adjoint, $\tr{\invsmeasmap{O}{d}\sigma}$ is an unbiased estimator of $\tr{O\rho}$ as well. To estimate $\tr{\rho O}$ we then have to efficiently sample and store a large number of instances of this estimator. The sample efficiency of the estimation depends on the second moment of the estimator. This is upper bounded by the shadow norm, as we discuss in \cref{sec:performance_guarantees}.
At this point the procedure is different depending on the type of observables. If $O$ is a \emph{sparse} observable, we have access to an efficient description of $O$ in terms of Paulis, meaning that we have
\begin{equation}
    O=\sum_{k=1}^r \beta_{\lambda_k} P^{\lambda_k}
\end{equation}
with $r\leq n^{\order{1}}$. We can then efficiently compute $t_{\lambda_k,d}$ for each $k$ and simply store the inverted observable
\begin{equation}
    \invsmeasmap{O}{d}=\sum_{k=1}^r \frac{\beta_{\lambda_k}}{t_{\lambda_k,d}} P^{\lambda_k}.
\end{equation}
 Computing the expectation value is now a matter of efficiently computing $\langle b|U P^{\lambda_k}U^{\dagger}|b\rangle$ for every snapshot, which can be done efficiently as $U|b\rangle$ is a stabilizer state.

If instead $O$ is a \emph{shallow} observable, it is given as
\begin{equation}
    O=\sum_{\lambda}\beta_{\lambda}P_{\lambda}
\end{equation}
where $\beta_{\lambda}$ is representable as an MPS with low bond dimension, or equivalently $O$ might be represented as an MPO in any local basis, since local basis changes do not affect the bond dimension. For example $O$ might be a rank one projector corresponding to an MPS state.
In this case, it is necessary to construct an MPS representation for $\invsmeasmap{\bullet}{d}$ as described in \cref{sec:inverse}. We note that since $\ket{b}$ is a product state vector and $U\in \seset{d}$ is depth $d$, the stabilizer state $\sigma$ can be written as an MPO (c.f. \cref{eq:sigma_as_MPO}) with bond dimension $b_{\alpha}\leq 4^{d-1}$.
Depending on the relative number of observables to estimate and snapshots stored, it might be more efficient to use tensor contractions to store either $\invsmeasmap{\sigma}{d}$ for every snapshot $\sigma$ or $\invsmeasmap{O}{d}$ for every observable $O$ we are interested in. In any case, for every snapshot $\sigma$ we can then efficiently compute $\tr{O \invsmeasmap{\sigma}{d}}$ by contracting the tensor networks corresponding to $O$, $\sigma$, and $\invsmeasmap{\bullet}{d}$.
\section{Heralded matrix product operator inversion}\label{sec:inverse} 
As explained in the previous section, in order to estimate shallow observables, it is necessary to obtain an MPS representation of the eigenvalues of the inverse measurement map.
We present a variational DMRG-like method inspired by Ref.~\cite{guo2022quantum} for computing a set of matrices $\{V\}=\{V_0^j, V_1^j\}_{j=1}^{N}$ parametrising an MPS that approximates $1/\tprobl{\lambda}$ on all 
inputs $\lambda$. More generally, given MPS tensors $\{M_0^j, M_1^j\}_{j=1}^{N}$ we want to find new tensors $\{V_0^j, V_1^j\}_{j=1}^{N}$ such that $m(x)v(x)\approx 1$ for all $x\in\{0,1\}^{N}$, where we define
\begin{equation}
    m(x)=\tr{M_{x_1}^1M_{x_2}^1\dots M_{x_{N}}^{N}}.
\end{equation}

And similarly for $v(x)$. Our procedure works by minimizing the following cost function, 
\begin{eqnarray}\label{eq: cost_function}
    C(\{V\})^2&:=&\sum_{x\in \{0,1\}^{N}} \abs{m(x)v(x)-1}^2
    \\
     \nonumber
    &=&\sum_{x\in \{0,1\}^{N}} (\abs{m(x)v(x)}^2-2 \mathrm{Re}(m(x)v(x))+1)
\end{eqnarray}
which can be efficiently computed using \cref{eq:MPS_contraction}. We minimize this cost function using a form of \emph{cyclic coordinate descent} (cf. \cref{alg:MPSinv}), that is, starting from a random ansatz, we optimize the cost function by varying one of the matrices in $\{V\}$, while keeping the others fixed, and repeat the process with all the matrices in $\{V\}$. Sweeping over all the matrices in this way the algorithm converges to a minimum.
\begin{algorithm}[H]
\caption{Algorithm for computing an approximate inverse MPS}\label{alg:MPSinv}
\begin{algorithmic}[1]
\algrenewcommand\algorithmicrequire{\textbf{Input:}}
\algrenewcommand\algorithmicensure{\textbf{Output:}}
\Require MPS $\{M_0^j,M_1^j\}_{j=1}^{N}$, ansatz $\{V^j_0,V^j_1\}_{j=1}^{N}$, threshold $\epsilon$.
\Ensure Approximate inverse $\{V^j_0,V^j_1\}_{j=1}^{N}$
\Statex
\While{$C(\{V\})>\epsilon$}
\For{$j=1,\dots N$}
\For{$k=0,1$}
\State $V^j_k \xleftarrow{} \underset{V_{k}^j}{\mathrm{argmin}} \, C(\{V\})$
\EndFor
\EndFor
\EndWhile
\State{\Return $\{V^j_0,V^j_1\}_{j=1}^{N}$}
\end{algorithmic}
\end{algorithm}

The local optimization, i.e. step 4 in \cref{alg:MPSinv}, is particularly simple, as the cost function when all matrices except one are fixed is simply a bounded quadratic form, this implies that the minimum at each optimization step is unique, and hence the procedure is guaranteed to converge to a stationary point of the cost function \cite{GRIPPO2000127}. A detailed description of the algorithm can be found in Appendix \ref{app:variational_inverse}. Our procedure is heralded in the sense that if the procedure finds a $\{V\}$ with a small cost function, then we are guaranteed that this can be used to produce expectation value estimates with negligible error arising from this approximate inverse procedure. In fact, while the cost function in \cref{eq: cost_function} is easy to compute, it is expected that the actual error given by the approximation should be smaller, since it scales like $\max_x|m(x)v(x)-1|$ instead. We prove the following in Appendix \ref{app:proof_approx_m}:
\begin{lem}\label{lem:approx_inverse}
    Let $\mathcal{V}$ be a channel that is diagonal in the Pauli basis with coefficients $v_{\lambda}$ for each $P^{\lambda}\in \pauli{n}$ and is an approximate inverse of the measurement channel in the sense that there exists $\epsilon>0$ such that for all $\lambda\in \bitstring{2n}$,
    $\abs{1-\tprobl{\lambda}v_{\lambda}}\leq \epsilon$. Then
        \begin{equation}
        |\tilde o-\tr{O\rho}|\leq \epsilon||O||_{F}
    \end{equation}
    where $\tilde o$ is the expectation value obtained by using the approximate inverse channel in the shadow protocol.
\end{lem}
Hence the bias induced by the approximate inverse is small for low-Frobenius norm observables. We will argue in the next section that, in the case of shallow observables where this inversion is necessary, these are also the observables for which the protocol is sample-efficient.
Finally, we note that for a given $n$ and $d$, the computation of a high accuracy variational inverse is only needed once. Hence, its run-time is a one-off cost for each new $n$, $d$ pair.

\section{Performance guarantees}\label{sec:performance_guarantees}
Shadow estimation is useful only insofar as the number of samples one needs to get to obtain a good estimate is well behaved, and in practice one needs to be able to estimate the number of samples needed to achieve a certain accuracy. For these purposes, the main quantity of interest is the shadow norm. For a given observable $O$, its shadow norm upper bounds the second moment (and hence also the variance) of the estimator $\tr{\hat{\rho}O}$ where $\hat{\rho}$ is a random 1-copy shadow. 
In this section, unless otherwise stated, we will always assume $d\leq \order{\log n}$. We first present methods for computing upper bounds on the shadow norm of various classes of observables. In particular we show how to exactly and efficiently compute the shadow norm of a Pauli operator, how to efficiently compute upper bounds on the shadow norm of a sparse operator and, how to efficiently compute upper bounds on the shadow norm of a shallow observable given an approximate inverse to the measurement channel in the form of a low bond dimension MPS.

We also consider the important sub-set of sparse bounded operators that can be written as a linear combination of $\order{\log n}$ weight Pauli operators such as, e.g., local Hamiltonians. We show a key result that all such observables have an $n^{\order{1}}$ shadow norm and hence also an $n^{\order{1}}$ sample complexity when $d=\Theta(\log n)$. We go on to give further evidence that the $d=\Theta(\log n)$ regime achieves the ``sweet spot'' between the $d=0$ and $d=\infty$ schemes of HKP in the sense that it permits the broadest class of observables that admit an $n^{\order{1}}$ shadow norm.

Let us start by defining the \emph{state dependent shadow norm} $\ssshadownorm{O}{d}{\rho}^2$
\begin{align*}\label{eq:shadow_norm}
  \expval{U\sim \sensemble{d}}{\sum_{b \in \bitstring{n}}
    \matrixel{b}{U \rho U^{\dagger}}{b} \matrixel{b}{U \invmeasmap{O} U^{\dagger}}{b}^2 }.
\end{align*}
This is simply the second moment of the estimator of $\tr{\rho O}$. Since $\rho$ is unknown, the  shadow norm $\sshadownorm{O}{d}$ is obtained by maximizing over all states $\rho$, i.e., by considering the worst possible variance. We can also define the \emph{locally scrambled shadow norm} \cite{bu2022classical,Choi} as the average, rather than the maximization, over all states in a state one-design, i.e., any ensemble of states $\mathcal E$ such that $\expval{\sigma\sim \mathcal E}{\sigma}=\mathbb{I}/2^{n}$,
as
\begin{equation}
    \ssshadownorm{O}{d}{\mathrm{LS}}^2:= \expval{\sigma\sim \mathcal E}{\ssshadownorm{O}{d}{\sigma}^2}.
\end{equation}
This quantity can be interpreted as quantifying the average performance of the scheme, i.e., the performance for a ``typical state". We show that all of these objects are actually norms in Appendix \ref{app:performance_guarantees}.
The goal is now to present methods to efficiently compute upper bounds for the shadow norm. The following quantity, a higher order analogue to $\stprobl{\lambda}{d}$, will be central for this task. For $\lambda, \lambda'\in \bitstring{2n}$, we define
    \begin{align}\label{eq:tau_as_a_prob}
        \ttprobl{\lambda,\lambda'}:=\prover{U\sim \sensemble{d}}{UP^{\lambda}U^{\dagger}, UP^{\lambda'}U^{\dagger}\in \pm\mathcal Z}.
    \end{align}
That is, $\ttprobl{\lambda,\lambda'}$ is the probability that $U$ sampled from the ensemble $\sensemble{d}$ will map both Paulis $P^{\lambda}$ and $P^{\lambda'}$ to a tensor product of $\mathbb{I}$ and $Z$ operators up to a factor $\pm 1$. We present some useful properties of $\tau_{(\lambda,\lambda')}$ in  \cref{app:performance_guarantees}, \cref{lem:ttprob_properties}. 
\begin{lem}
$\ttprobl{\lambda,\lambda'}$  can be written as an MPS with bond dimension $2^{\order{d}}$
\end{lem}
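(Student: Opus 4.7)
The plan is to generalise the explicit tensor-network construction used in the proof of \cref{lem:meas_map_comp} from a single Pauli string to a pair of Pauli strings, and verify that the relevant local state space and transfer matrices only grow by a squaring factor. First, I would rewrite
\begin{equation}
    \ttprobl{\lambda,\lambda'} = \expval{U\sim \sensemble{d}}{\mathbbm{1}[UP^{\lambda}U^{\dagger}\in \pm\mathcal Z]\cdot\mathbbm{1}[UP^{\lambda'}U^{\dagger}\in \pm\mathcal Z]}
\end{equation}
and note that, just as for $\tprobl{\lambda}$, evaluation of this expectation only requires tracking how each two-qubit Clifford gate in the brickwork acts on the joint pair $(P^{\lambda}_{\mathrm{support}},P^{\lambda'}_{\mathrm{support}})$ on the two sites it touches. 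Because the average over a random two-qubit Clifford is a local superoperator, the entire expectation factorises layer-by-layer into a contraction of local tensors, exactly as in \cref{fig:tprob_TN_MPS}, but with vertical ``qubit'' legs now carrying an index set of size $16$ (one pair of single-qubit Paulis per site) instead of size $4$.

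Next I would construct the analogue of the tensors $M_d$ and $L$ used in \cref{fig:tprob_TN_MPS}. The local random-Clifford tensor is replaced by the twofold tensor product of the single-Pauli version acting on the pair $(\sigma_j,\sigma'_j)\in\pauli{1}^{\times 2}$, with matrix elements that encode the joint transition probabilities $\Pr[(\sigma_j,\sigma'_j)\mapsto(\tau_j,\tau'_j)]$ under a uniformly random two-qubit Clifford restricted to that site pair; these are constants computable once and for all. The two indicator functions ``land in $\pm \mathcal Z$'' are enforced by two independent $\textsc{or}$-trees running across the $n$ vertical output legs, again as in the single-Pauli case. Contracting horizontally site-by-site at each layer produces a transfer matrix; because the original single-Pauli transfer matrix had dimension $2^{d-1}$ (the light-cone width of a depth-$d$ brickwork) and our joint object lives on the tensor product of two such copies, the joint transfer matrix has dimension at most $2^{d-1}\times 2^{d-1}=2^{2(d-1)}$, and hence the resulting MPS over the bit-string $(\lambda,\lambda')$ has bond dimension $2^{\order{d}}$, as claimed.

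Finally I would justify that the contractions pairing each site's vertical leg to the corresponding coordinates of $\lambda$ and $\lambda'$ yield an object which, viewed as a function of the combined index $(\lambda,\lambda')\in\{0,1\}^{4n}$, is an MPS in the sense of \cref{eq:MPS_form}. This is immediate because each site's bond connects only to the adjacent sites through the transfer matrices and the $\textsc{or}$-tree, and the $\textsc{or}$-tree can be absorbed into an $\order{1}$-bond-dimension MPS factor without changing the asymptotic bond dimension.

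The main obstacle is not bookkeeping but ensuring that the \emph{joint} random-Clifford transition table used in the local tensor correctly captures correlations between the images of $P^{\lambda}$ and $P^{\lambda'}$: unlike the single-Pauli case, the probabilities $\Pr[(\sigma,\sigma')\mapsto(\tau,\tau')]$ are not products of marginals, because a single Clifford gate acts coherently on both Paulis on its support. I would handle this by precomputing the $16\times 16$ stochastic matrix induced on $\pauli{1}^{\times 2}$ by a uniformly random two-qubit Clifford (including the correct treatment of the $\pm$ phase, which only affects whether we end in $\pm\mathcal Z$ and can be absorbed into the indicator $\textsc{or}$-tree since we only care about membership in $\pm\mathcal Z$), and using this as the replacement for the single-Pauli transition tensor. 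Once this local tensor is in hand, the rest of the argument is a direct copy of the proof of \cref{lem:meas_map_comp}.
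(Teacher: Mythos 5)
Your proposal follows essentially the same route as the paper: generalise the single-Pauli brickwork tensor network to a pair of Pauli strings by replacing each local gate tensor with the precomputed joint transition matrix for pairs of Paulis under a uniformly random Clifford gate, cap with local indicator tensors enforcing membership in $\pm\mathcal{Z}$, and group the result into MPS form. The one quibble is your claimed bond dimension $2^{2(d-1)}$: the joint transfer matrix is not a tensor product of two single-Pauli copies (as you yourself note, the transitions are correlated, and the single-Pauli reduction to a two-state signature per bond does not carry over to pairs), and the paper's construction gives $2^{4(d-1)}$ — but both are $2^{\order{d}}$, so the lemma as stated is unaffected.
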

The proof is presented in \cref{app:proof_ttprob_MPS}, the next result follows immediately:

\begin{cor}\label{lem:ttprob_comp}
For depth $d=\order{\log n}$, the probabilities $\ttprobl{\lambda,\lambda'}$ can be exactly computed in run-time $n^{\order{1}}$.
\end{cor}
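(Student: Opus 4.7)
The plan is to lift the MPS construction of Lemma 2 (the tensor network of Fig.~\ref{fig:tprob_TN_MPS}) from a single Pauli to a pair. Three ingredients carry over unchanged: (i) Clifford conjugation maps each Pauli to another Pauli up to a sign, so the joint Heisenberg evolution of $(P^{\lambda}, P^{\lambda'})$ under a brickwork $U\in\seset{d}$ can be tracked site-by-site through the 2-qubit gates; (ii) averaging over uniform 2-qubit Cliffords in each brick is a local operation; (iii) the defining event for $\ttprobl{\lambda,\lambda'}$ factorizes over qubits, since both $UP^{\lambda}U^{\dagger}$ and $UP^{\lambda'}U^{\dagger}$ lie in $\pm\mathcal{Z}$ iff for every site $j$, the image on qubit $j$ of each one lies in $\{\mathbb{I},\pm Z\}$.

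Concretely, I would first write
\begin{equation*}
\ttprobl{\lambda,\lambda'}=\expval{U\sim \sensemble{d}}{\prod_{j=1}^{n}\mathbf{1}\!\left[(UP^{\lambda}U^{\dagger})_j\in \pm\{\mathbb{I},Z\}\right]\,\mathbf{1}\!\left[(UP^{\lambda'}U^{\dagger})_j\in \pm\{\mathbb{I},Z\}\right]},
\end{equation*}
then propagate this expectation through the brickwork exactly as in \cref{app:proof_lem2}, but on the doubled single-site space $\{0,1\}^{2}\times\{0,1\}^{2}$ rather than on $\{0,1\}^{2}$. Each horizontal cut in the brickwork encodes, in the single-Pauli case, a $2^{d-1}$-dimensional bookkeeping of how one Pauli can propagate through a height-$d$ region of the circuit and still end up in $\pm\{\mathbb{I},Z\}$. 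In the doubled construction, each bond carries the product of two such bookkeepings, one per Pauli, so the transfer matrices $M_{d,h}^{(2)}$ act on a space of dimension at most $(2^{d-1})^{2}=2^{2(d-1)}=2^{\order{d}}$. Reading off the resulting tensor network along the qubit axis yields the advertised MPS in the physical indices $(\lambda_j,\lambda'_j)$ with bond dimension $2^{\order{d}}$.

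The main obstacle is verifying that the joint averaging over each uniform 2-qubit Clifford on the pair $(P,P')$ retains a clean local tensor factorization. In the single-Pauli case of Lemma~\ref{lem:meas_map_comp}, averaging a Clifford permutes non-identity Paulis uniformly, which is what produces the simple $L$-tensor structure in Fig.~\ref{fig:tprob_TN_MPS}. For a pair, the joint distribution of the transformed pair depends on the commutation relation between $P$ and $P'$ and on their orbit under the symplectic action on $\mathbb{F}_2^{4}$, so one must check that the resulting local update still splits as a tensor with a single horizontal leg of dimension at most the squared single-Pauli bound. Once this symplectic-orbit bookkeeping is handled (by tracking the four relevant orbit labels through each brick and merging them into the enlarged transfer matrix), the remainder of the construction is a direct rerun of the sweep in \cref{app:proof_lem2}, from which the bond dimension bound $2^{\order{d}}$ is read off and \cref{lem:ttprob_comp} follows immediately.
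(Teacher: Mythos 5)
Your overall route is the same as the paper's: double the tensor network of \cref{app:proof_lem2} so that it tracks the joint Heisenberg evolution of the pair $(P^{\lambda},P^{\lambda'})$ through the brickwork, average each brick locally, cap the output legs with projectors onto Paulis with no $X$ component, and read off an MPS of bond dimension $2^{\order{d}}$, from which the corollary is immediate at $d=\order{\log n}$. The paper's appendix does exactly this, with a per-brick transition tensor $\Gamma_{(h,h'),(g,g')}$ obtained by averaging over $\clif{2}$ the joint conjugation action on a pair of two-qubit Paulis, followed by local $\Pi$ tensors enforcing $g_x=g'_x=0$ on both outputs.

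The genuine problem is your bond-dimension accounting, and you have correctly located it yourself but then left it unresolved. The single-Pauli bound $2^{d-1}$ in \cref{lem:meas_map_comp} is not ``one Pauli's worth of bookkeeping per wire'': it already relies on the compression from the full $4\times 4$ transition matrix $B$ to $B'$, i.e., on the fact that a uniformly random two-qubit Clifford sends every non-identity input Pauli to the same output distribution, so that only the signature (the $\textsc{or}$ of the two label bits) needs to be carried on each bond. For a pair this compression fails for exactly the reason you state --- the joint output distribution depends on the commutation relation and on the $\clif{2}$-orbit of the restricted pair, not merely on the two signatures --- so the claimed $(2^{d-1})^{2}=2^{2(d-1)}$ does not follow, and ``tracking the four relevant orbit labels'' is not obviously compatible with a local tensor structure, because downstream overlapping gates need the actual Pauli labels on each qubit rather than an orbit label attached to a two-qubit block. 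The fix, which is what the paper does, is simply not to compress: carry the full four bits per qubit (two label bits for each member of the pair) across every bond. Then the local factorization is automatic --- conjugation and gate-by-gate averaging act locally on the full pair of labels --- and the final projection onto $\pm\mathcal{Z}$ becomes a deterministic $0/1$ check rather than requiring the conditional-$1/3$ argument. This yields bond dimension $2^{4(d-1)}$, which is still $2^{\order{d}}$, and the corollary follows unchanged.
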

We now give a useful expression for the state dependent shadow norm:
\begin{thm}\label{lem:state_dep_snorm}
    For any observable $O$ with 
    Pauli decomposition $O=\sum_{\lambda}\beta_{\lambda}P^{\lambda}$ and any state $\sigma$, the square of the state dependent shadow norm is given by
    \begin{equation}\label{eq:s_norm_expression}
    \begin{aligned}
        \ssshadownorm{O}{d}{\sigma}^2&=\tr{\sigma\sum_{\lambda,\lambda'}\beta_{\lambda}\beta_{\lambda'}\frac{\ttprobl{\lambda,\lambda'}}{\tprobl{\lambda} \tprobl{\lambda'}} P^{\lambda}P^{\lambda'}}\\&=||O||_{s(d),\mathrm{LS}}^2+\tr{\sigma \tilde O}
        \end{aligned}
    \end{equation}
    where
     \begin{equation}
        \tilde O=\sum_{\substack{\lambda', \lambda \in \bitstring{2n}\\\lambda'\neq\lambda}}P^{\lambda}P^{\lambda'} \beta_{\lambda} \beta_{\lambda'} \frac{\ttprobl{\lambda,\lambda'}}{\tprobl{\lambda} \tprobl{\lambda'}}.
    \end{equation}
   Furthermore,
    \begin{equation}
        ||O||_{s(d),\mathrm{LS}}^2=\sum_{\lambda\in \bitstring{2n}} \frac{\beta_{\lambda}^2}{\tprobl{\lambda}}.
    \end{equation}
\end{thm}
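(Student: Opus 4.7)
The plan is to start from the definition
\[
\ssshadownorm{O}{d}{\sigma}^2 = \expval{U\sim \sensemble{d}}{\sum_{b} \matrixel{b}{U \sigma U^{\dagger}}{b}\,\matrixel{b}{U \invsmeasmap{O}{d} U^{\dagger}}{b}^{2}},
\]
substitute the Pauli expansion of $O$, and invoke \cref{lem:measurement_map_action} to write
$\invsmeasmap{O}{d}=\sum_{\lambda} \beta_{\lambda}\, t_{\lambda,d}^{-1}\, P^{\lambda}$. Expanding the squared matrix element yields a double sum over $\lambda,\lambda'$ with the weight
$\matrixel{b}{U P^{\lambda} U^{\dagger}}{b}\matrixel{b}{U P^{\lambda'} U^{\dagger}}{b}$.

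The key observation is that a Pauli operator has a non-zero $\matrixel{b}{\cdot}{b}$ only when it lies in $\pm \mathcal{Z}$: any Pauli with an $X$ or $Y$ factor is strictly off-diagonal in the computational basis. Hence each such product vanishes unless \emph{both} conjugated Paulis $U P^{\lambda} U^{\dagger}$ and $U P^{\lambda'} U^{\dagger}$ belong to $\pm \mathcal{Z}$. On the event that both lie in $\pm \mathcal{Z}$, their product is itself a diagonal Pauli and, since they commute and $(P^\lambda)^2 = \iden$ forces the signs to compose multiplicatively, one verifies the pointwise identity
\[
\matrixel{b}{U P^{\lambda} U^{\dagger}}{b}\matrixel{b}{U P^{\lambda'} U^{\dagger}}{b} = \matrixel{b}{U P^{\lambda} P^{\lambda'} U^{\dagger}}{b}.
\]
Summing against $\matrixel{b}{U\sigma U^\dagger}{b}$ and using that $U P^\lambda P^{\lambda'} U^\dagger$ is diagonal in the computational basis to turn the sum into a trace, followed by cyclicity of the trace to strip the $U$s, collapses the inner sum to $\tr{\sigma P^{\lambda} P^{\lambda'}}$. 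Taking the expectation over $U\sim \sensemble{d}$ then turns the indicator of ``both conjugated Paulis diagonal'' into $\tau_{(\lambda,\lambda'),d}$ by the definition in \cref{eq:tau_as_a_prob}. This yields the first displayed equality in the theorem.

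For the second part, I would note that for any state one-design $\mathcal E$, $\expval{\sigma\sim \mathcal E}{\sigma}=\iden/2^{n}$, so by linearity
\[
\ssshadownorm{O}{d}{\mathrm{LS}}^{2} = 2^{-n}\sum_{\lambda,\lambda'}\beta_{\lambda}\beta_{\lambda'}\,\frac{\ttprobl{\lambda,\lambda'}}{\tprobl{\lambda}\tprobl{\lambda'}}\,\tr{P^{\lambda}P^{\lambda'}}.
\]
Orthogonality of Paulis under the trace inner product kills every $\lambda \neq \lambda'$ term, and on the diagonal $\tau_{(\lambda,\lambda),d}=t_{\lambda,d}$ (since the condition in \cref{eq:tau_as_a_prob} collapses to that of \cref{eq:t as a prob}) while $\tr{(P^\lambda)^2} = 2^n$, giving $\ssshadownorm{O}{d}{\mathrm{LS}}^{2}=\sum_{\lambda}\beta_{\lambda}^{2}/t_{\lambda,d}$. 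Subtracting these diagonal terms from the first equality leaves exactly $\tr{\sigma \tilde O}$ with $\tilde O$ the restriction of the double sum to $\lambda \neq \lambda'$, completing the decomposition.

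The main (minor) obstacle is the pointwise matrix-element identity for the case when both $U P^\lambda U^\dagger, U P^{\lambda'} U^\dagger \in \pm\mathcal Z$, which requires carefully tracking the $\pm$ signs in the definition~\eqref{eq:pauli_index} to ensure $\matrixel{b}{U P^\lambda U^\dagger}{b}\matrixel{b}{U P^{\lambda'} U^\dagger}{b} = \matrixel{b}{U P^\lambda P^{\lambda'} U^\dagger}{b}$; this reduces to the elementary fact that two Pauli strings drawn from $\pm \mathcal Z$ commute and multiply to an element of $\pm \mathcal Z$ with signs that distribute across $\matrixel{b}{\cdot}{b}$. Once this bookkeeping is handled the remainder of the argument is a direct computation.
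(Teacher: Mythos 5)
Your proof is correct and reaches all three claims, but it routes the key computation differently from the paper. The paper first uses Pauli invariance of $\sensemble{d}$ to collapse the sum over $b$ to the single term $b=0$, then expands $\sigma=2^{-n}\mathbb{I}+\sum_{\lambda''\neq 0}\alpha_{\lambda''}P^{\lambda''}$ in the Pauli basis and invokes its moment lemma for products of Pauli matrix elements (\cref{lem:av_prod_paulis}), including the anticommuting-$Q$ vanishing argument, to show that only $P^{\lambda''}=\pm P^{\lambda}P^{\lambda'}$ survives and that the resulting three-Pauli moment equals $\ttprobl{\lambda,\lambda'}$. You instead keep $\sigma$ intact and argue pointwise in $(U,b)$: off the event that both conjugated Paulis lie in $\pm\mathcal Z$ the summand vanishes identically, and on the event the deterministic identity $\matrixel{b}{UP^{\lambda}U^{\dagger}}{b}\matrixel{b}{UP^{\lambda'}U^{\dagger}}{b}=\matrixel{b}{UP^{\lambda}P^{\lambda'}U^{\dagger}}{b}$ together with diagonality collapses the $b$-sum to $\tr{\sigma P^{\lambda}P^{\lambda'}}$, so the $U$-average produces $\ttprobl{\lambda,\lambda'}$ directly as the probability of the event. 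This is a genuinely more elementary path that bypasses the general $k$-Pauli moment lemma entirely (the paper's lemma is, however, reused elsewhere, e.g.\ in the proof of \cref{lem:measurement_map_action}, so their route is more modular); your sign bookkeeping for the pointwise identity is the only delicate point and it checks out, since two elements of $\pm\mathcal Z$ multiply with signs and $(-1)$-exponents composing additively. Your derivation of the locally scrambled part (one-design average, Pauli orthogonality, $\ttprobl{\lambda,\lambda}=\tprobl{\lambda}$, and $\tr{\sigma}=1$ on the diagonal terms) is equivalent in substance to the paper's observation that $\tr{\tilde O}=0$.
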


We present the proof in \cref{app:performance_guarantees}. We isolate the identity component of $\sigma$ from the rest, as this gives rise to the locally scrambled shadow norm and in the $d=\infty$ case this is the dominant term, but as a simple corollary we may note that $||O||_{s(d)}^2$ is given by the largest eigenvalue of the matrix
\begin{equation}
    \sum_{\lambda,\lambda'}\beta_{\lambda}\beta_{\lambda'}\frac{\ttprobl{\lambda,\lambda'}}{\tprobl{\lambda} \tprobl{\lambda'}} P^{\lambda}P^{\lambda'}
\end{equation}
and the non locally-scrambled part by the largest eigenvalue of $\tilde O$.
 Given access to low bond dimension MPS representations of $\beta_{\bullet}$ and $1/t_{\bullet,d}$, the locally scrambled shadow norm can be efficiently computed exactly. 
We can immediately see that if the observable is a single Pauli, the second term $\tr{\sigma \tilde O}$ vanishes and the shadow norm corresponds to the inverse of the Pauli's eigenvalue, as noted in  \cite{bu2022classical}.
\begin{cor}\label{cor:pauli_snorm}

For $\lambda \in \bitstring{2n}$, the squared shadow norm of $P^{\lambda}$ is
\begin{align*}
    \sshadownorm{P^\lambda}{d}^2=||P^{\lambda}||_{s(d),\mathrm{LS}}^2=\frac{1}{\tprobl{\lambda}}.
\end{align*}
\end{cor}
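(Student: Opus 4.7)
The plan is to derive the corollary directly from \cref{lem:state_dep_snorm} by plugging in the Pauli decomposition of a single Pauli observable and observing that all the complicated cross terms vanish.

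First, I would write $O=P^\lambda$ in the Pauli basis as $\sum_{\mu}\beta_\mu P^\mu$ with $\beta_\mu = \delta_{\mu,\lambda}$. Substituting this into the locally scrambled shadow norm formula from \cref{lem:state_dep_snorm} immediately gives
\begin{equation*}
\|P^\lambda\|_{s(d),\mathrm{LS}}^2 = \sum_{\mu \in \bitstring{2n}} \frac{\delta_{\mu,\lambda}}{\tprobl{\mu}} = \frac{1}{\tprobl{\lambda}},
\end{equation*}
which handles one of the two equalities in the statement.

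Next, I would examine the operator $\tilde O$ from \cref{lem:state_dep_snorm}. Its defining sum runs over pairs $(\mu,\mu')$ with $\mu \neq \mu'$, and its summand carries the factor $\beta_\mu \beta_{\mu'} = \delta_{\mu,\lambda}\delta_{\mu',\lambda}$, which forces $\mu=\mu'=\lambda$ and is therefore incompatible with the constraint $\mu\neq\mu'$. Hence $\tilde O = 0$, and the second equation in \cref{lem:state_dep_snorm} collapses to $\ssshadownorm{P^\lambda}{d}{\sigma}^2 = \|P^\lambda\|_{s(d),\mathrm{LS}}^2 = 1/\tprobl{\lambda}$ for every state $\sigma$.

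Finally, because the state-dependent shadow norm no longer depends on $\sigma$, maximizing over density matrices is trivial and the (worst-case) shadow norm $\sshadownorm{P^\lambda}{d}^2$ equals the common value $1/\tprobl{\lambda}$, establishing the full chain of equalities. There is no substantive obstacle here: the only ingredient is the theorem already proved in \cref{app:performance_guarantees}, together with the observation that the off-diagonal contribution is manifestly zero for a single-term Pauli decomposition.
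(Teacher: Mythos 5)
Your proposal is correct and follows essentially the same route as the paper, which likewise derives the corollary by noting that for a single Pauli the off-diagonal operator $\tilde O$ in \cref{lem:state_dep_snorm} has an empty sum, so the state-dependent shadow norm is independent of $\sigma$ and reduces to the locally scrambled value $1/\tprobl{\lambda}$. No gaps.
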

Hence, for $d=\mathcal O(\log n)$ and $d=\infty$, all Pauli shadow norms can be efficiently calculated using \cref{cor:transition_prob_comp} and \cref{cor:pauli_snorm}. In this regime we can rigorously upper bound the shadow norm of certain linear combinations of Paulis. The following is a simple consequence of the triangle inequality, Corollary \ref{cor:pauli_snorm}, and \cref{thm:av_shadow_norm_UB}, which we will present at the end of this section:
\begin{thm}\label{lem:local_Hamiltonian_snorm}
Let $O=\sum_{k=1}^r\beta_{\lambda_k}P^{\lambda_k}$ be a linear combination of Pauli operators such that each Pauli is supported on a region of length upper bounded by $\mathcal O(\log(n))$, then if $d\leq \mathcal O(\log(n))$
\begin{equation}
    \sshadownorm{O}{d}\leq n^{\order{1}} \sum_{k=1}^r |\beta_{\lambda_k}|.
\end{equation}
\end{thm}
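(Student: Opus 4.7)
The plan is to reduce the shadow norm bound for $O$ to a single-Pauli bound via the triangle inequality, then invoke the probabilistic interpretation of $\tprobl{\lambda}$ together with the light cone structure of the brickwork architecture. First I would use that $\sshadownorm{\cdot}{d}$ is actually a norm (established in Appendix \ref{app:performance_guarantees}), so applying the triangle inequality to $O = \sum_{k=1}^r \beta_{\lambda_k} P^{\lambda_k}$ gives
\begin{equation*}
\sshadownorm{O}{d} \leq \sum_{k=1}^r |\beta_{\lambda_k}| \sshadownorm{P^{\lambda_k}}{d}.
\end{equation*}
This reduces the problem to a uniform upper bound $\sshadownorm{P^{\lambda_k}}{d} \leq n^{\order{1}}$ on each single-Pauli term, independent of $k$.

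Next I would apply \cref{cor:pauli_snorm} to obtain $\sshadownorm{P^{\lambda_k}}{d}^2 = 1/\tprobl{\lambda_k}$, so the task becomes a lower bound $\tprobl{\lambda_k} \geq n^{-\order{1}}$ for every $\lambda_k$ whose Pauli support has length $\order{\log n}$. Here I would use the geometrically local brickwork structure: the conjugation $UP^{\lambda_k}U^{\dagger}$ only depends on gates inside the light cone of the support of $P^{\lambda_k}$. Under the hypotheses that the support has length $\order{\log n}$ and $d = \order{\log n}$, the light cone covers at most $\order{\log n}$ contiguous qubits, so $\tprobl{\lambda_k}$ equals the analogous probability for a random brickwork Clifford acting on a polylog-sized subsystem, with no wrap-around for sufficiently large $n$.

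The main obstacle is then the quantitative lower bound $\tprobl{\lambda} \geq n^{-\order{1}}$ when both the support size and the depth are $\order{\log n}$. This is precisely what I would invoke through \cref{thm:av_shadow_norm_UB}: even at modest depth, a random brickwork Clifford restricted to a polylog-sized light cone is scrambling enough that the image of any nonidentity Pauli lands in $\pm \mathcal{Z}$ with probability at least $n^{-\order{1}}$, matching the behavior one would expect from the global Clifford case on a region of this size. Once this bound is in hand, substituting into the triangle-inequality estimate immediately yields $\sshadownorm{O}{d} \leq n^{\order{1}} \sum_k |\beta_{\lambda_k}|$, which is the claim.
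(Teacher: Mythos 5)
Your proposal matches the paper's proof: the paper likewise obtains the bound as an immediate consequence of the triangle inequality for $\sshadownorm{\cdot}{d}$, the identity $\sshadownorm{P^{\lambda}}{d}^2 = 1/\tprobl{\lambda}$ from \cref{cor:pauli_snorm}, and the lower bound $\tprobl{\lambda}\geq n^{-\order{1}}$ for Paulis of $\order{\log n}$ diameter at logarithmic depth supplied by \cref{thm:av_shadow_norm_UB} (whose proof in \cref{app:av_shadow_proof} uses exactly the light-cone reduction to a subsystem of size $|\lambda|+2d$ that you describe). No gaps.
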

Beyond single Pauli operators, an important case in which the above is useful is local Hamiltonians, for which $\sum_{k=1}^r |\beta_{\lambda_k}|\leq n^{\order{1}}$, we then find that at logarithmic depth polynomially many samples are still sufficient to estimate local Hamiltonians to arbitrary precision, a property that is lost at $d=\infty$. For this class of observables, we note that both the estimation of expectation given a shadow and the computation of a polynomial upper bound on the shadow norm can be computed efficiently and without the use of an approximate variational inverse.

More generally, upper bounds on the second term in \cref{eq:s_norm_expression} may be useful.
\begin{lem}\label{lem:shadow_norm_bounds}
    For any observable $O$ and depth $d$,
    \begin{equation}\label{eq:sn_frob_bound}
        ||O||_{s(d)}^2\leq ||O||_{s(d),\mathrm{LS}}^2+||\tilde O||_F
    \end{equation}
    Furthermore, for $d=\mathcal O(\log(n))$, the right hand side of \cref{eq:sn_frob_bound} can be computed in $n^{\mathcal O(1)}$ time via tensor contraction given a low bond dimension MPS representation of $1/t_{\lambda,d}$.
\end{lem}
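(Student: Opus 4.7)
\textbf{Plan for the proof of \cref{lem:shadow_norm_bounds}.}

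The plan for the inequality is to start from the state-dependent decomposition established in \cref{lem:state_dep_snorm}, namely $\ssshadownorm{O}{d}{\sigma}^2 = \ssshadownorm{O}{d}{\mathrm{LS}}^2 + \tr{\sigma \tilde O}$, and take the maximum over density matrices $\sigma$. Since $\tilde O$ is Hermitian (it equals the symmetrisation $\tfrac{1}{2}(\tilde O+\tilde O^\dagger)$ because $\tau_{(\lambda,\lambda'),d}=\tau_{(\lambda',\lambda),d}$ and anticommuting pairs vanish in the sum), the maximum equals $\lambda_{\max}(\tilde O) \leq \norm{\tilde O}_\infty \leq \fronorm{\tilde O}$, the last step using that the operator norm is bounded by the Frobenius norm of any operator.

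For the computational claim, the term $\sshadownorm{O}{d,\mathrm{LS}}^2 = \sum_\lambda \beta_\lambda^2/t_{\lambda,d}$ is already a contraction of low bond dimension MPSs (namely of $\beta_{\bullet}^{\,2}$ and $1/t_{\bullet,d}$), so efficient in the $d=\order{\log n}$ regime. For $\fronorm{\tilde O}$ the plan is to compute $\fronorm{\tilde O}^2=\tr{\tilde O^2}$ via tensor contraction. First, set $c:=\sshadownorm{O}{d,\mathrm{LS}}^2$ and define the augmented operator
\begin{equation}
\tilde O':=\sum_{\lambda,\lambda'}\beta_{\lambda}\beta_{\lambda'}\frac{\ttprobl{\lambda,\lambda'}}{\tprobl{\lambda}\tprobl{\lambda'}}P^{\lambda}P^{\lambda'},
\end{equation}
so that $\tilde O = \tilde O' - c\,\iden$ (the diagonal $\lambda=\lambda'$ terms contribute $\sum_\lambda \beta_\lambda^2/t_{\lambda,d}$ times $\iden$, because $\ttprobl{\lambda,\lambda}=\tprobl{\lambda}$ and $(P^{\lambda})^2=\iden$). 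A short calculation using $\tr{P^{\mu}}=2^n \delta_{\mu,0}$ then gives $\tr{\tilde O'}=2^n c$ and hence $\tr{\tilde O^2}=\tr{\tilde O'^{\,2}}-c^2\,2^n$, reducing the problem to computing $\tr{\tilde O'^{\,2}}$.

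Expanding $\tr{\tilde O'^{\,2}}$ and using that $\tr{P^{\lambda_1}P^{\lambda_2}P^{\lambda_3}P^{\lambda_4}}=2^n\,e^{i\theta(\lambda_1,\lambda_2,\lambda_3,\lambda_4)}\,\delta_{\lambda_1\oplus\lambda_2\oplus\lambda_3\oplus\lambda_4,0}$, with the phase $\theta$ being a product of per-site phases determined by the indexing in \cref{eq:pauli_index}, we obtain a triple sum
\begin{equation}
\tr{\tilde O'^{\,2}}=2^n \!\!\sum_{\lambda_1,\lambda_2,\lambda_3}\! \gamma_{\lambda_1}\gamma_{\lambda_2}\gamma_{\lambda_3}\gamma_{\lambda_1\oplus\lambda_2\oplus\lambda_3}\,\ttprobl{\lambda_1,\lambda_2}\ttprobl{\lambda_3,\lambda_1\oplus\lambda_2\oplus\lambda_3}\, e^{i\theta},
\end{equation}
where $\gamma_\lambda:=\beta_\lambda/t_{\lambda,d}$. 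This is contracted as a tensor network whose physical index at each site ranges over the local bits of $\lambda_1,\lambda_2,\lambda_3$ (six bits per site, with $\lambda_4$ determined by XOR). The bonds come from four copies of the MPS for $\gamma_\bullet$ (bond dimension $n^{\order{1}}$ by assumption), two copies of the MPS for $\ttprobl{\cdot,\cdot}$ (bond dimension $2^{\order{d}}=n^{\order{1}}$ for $d=\order{\log n}$), and the local phase factor which is a per-site scalar that does not add bond dimension. Thus the total bond dimension is $n^{\order{1}}$ and the contraction runs in $n^{\order{1}}$ time, giving $\fronorm{\tilde O}=\sqrt{\tr{\tilde O'^{\,2}}-c^2 2^n}$ efficiently.

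The main obstacle in executing the plan is the careful bookkeeping of the local phase $e^{i\theta}$ arising from the product of four Paulis, in particular verifying that it indeed factorises across sites so as to remain a local tensor (this follows from the explicit single-qubit form of \cref{eq:pauli_index}) and that the contraction is numerically well-posed despite the subtraction of two positive quantities in $\tr{\tilde O'^{\,2}}-c^2 2^n$. Everything else is a routine MPS manipulation once the tensor network is laid out.
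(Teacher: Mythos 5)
Your proposal follows essentially the same route as the paper's proof: the same decomposition from \cref{lem:state_dep_snorm}, the bound $\tr{\sigma\tilde O}\leq \|\tilde O\|_\infty\leq \|\tilde O\|_F$ (the paper phrases this via H\"older's inequality), and the same trick of passing to the augmented operator $\tilde O'$ that includes the $\lambda=\lambda'$ terms so that everything becomes a low-bond-dimension tensor contraction; your direct contraction of $\tr{\tilde O'^{\,2}}$ is just the unrolled version of the paper's step of first building an MPS for $\tilde O'$ and then taking its Frobenius norm. One point in your favour: your correction term $\tr{\tilde O^2}=\tr{\tilde O'^{\,2}}-c^2\,2^n$ with $c=\|O\|_{s(d),\mathrm{LS}}^2$ is the correct one, since $\tilde O'=\tilde O+c\,\iden$ and $\tr{\tilde O}=0$ give $\tr{\tilde O'^{\,2}}=\tr{\tilde O^2}+2^n c^2$; the paper's displayed identity $\|\tilde O'\|_F^2=\|\tilde O\|_F^2+2^n\sum_\lambda\beta_\lambda^2/t_{\lambda,d}$ is missing the square on that sum. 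Your explicit attention to the per-site sign of $P^{\lambda}P^{\lambda'}$ is also warranted, as the paper's $\Delta$ tensor records only the XOR of the labels and silently absorbs these (locally determined) signs.
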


The bound in \cref{eq:sn_frob_bound} is obtained by upper bounding $\tr{\sigma \tilde O}$ in \cref{eq:s_norm_expression} with $||\tilde O||_F$, while this might look loose, it is a good bound for $d= \infty$ in the sense that $||\tilde O||_F\sim ||O||_{\mathrm{LS}}^2$, hence this only worsens the locally-scrambled shadow norm by a constant fraction, for moderately high depths we might then expect this relaxation not to worsen the variance estimation dramatically.
We give a proof of the second part of the lemma in \cref{app:performance_guarantees} by explicitely constructing the corresponding tensor contraction.
It is worth pointing out that in the MPS used to compute the right hand side of \cref{eq:sn_frob_bound} the bond dimensions of the various components ($\ttprobl{\bullet}$, $\beta_{\bullet}$ and $1/t_{\bullet,d}$, respectively) conspire to render the contraction infeasible beyond vary small depths. We expect a significant reduction to this bond dimension is possible but leave this to future work. Finally, we consider the shadow norm \emph{stabilizer projector}, that is it is the projector onto the space stabilized by an abelian subgroup of the Pauli group generated by $k\leq n$ elements. 
\begin{thm}\label{thm:stab_state_snorm}
    Let $S$ be the subspace stabilized by $\mathcal S=\langle s_1,\dots, s_k\rangle$, where $s_i\neq -I$ are commuting Pauli operators. Let $\Pi$ be the projector onto $S$. Then
    \begin{equation}\label{eq:stab_state_snorm}
        ||\Pi||_{s(d)}^2=||\Pi||_{s(d),|\psi\rangle \langle\psi|}^2=4^{-k}\sum_{\lambda,\lambda'\in \Lambda_\mathcal S}\frac{\ttprobl{\lambda,\lambda'}}{\tprobl{\lambda}\tprobl{\lambda'}}.
    \end{equation}
    for any $|\psi\rangle\in S$, where $\Lambda_{\mathcal S}=\{\lambda\in \{0,1\}^{2n}| P^{\lambda}\in \mathcal S\}$.
\end{thm}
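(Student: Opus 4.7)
My plan is to substitute the Pauli expansion of $\Pi$ into \cref{lem:state_dep_snorm}, evaluate the resulting quadratic form on an arbitrary $|\psi\rangle\in S$ to obtain the claimed expression, and then argue via simultaneous diagonalization of the stabilizer generators that this value is in fact the maximum over all states.

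Concretely, the first step is to write $\Pi = 2^{-k}\sum_{g\in\mathcal{S}} g$. Since $-I\notin\mathcal{S}$, every $g\in\mathcal{S}$ equals $\epsilon_\lambda P^\lambda$ for a unique $\lambda\in\Lambda_{\mathcal{S}}$ and sign $\epsilon_\lambda\in\{\pm 1\}$, so the Pauli coefficients entering \cref{lem:state_dep_snorm} are $\beta_\lambda = 2^{-k}\epsilon_\lambda$ on $\Lambda_{\mathcal{S}}$ and zero elsewhere. Plugging these in gives $\ssshadownorm{\Pi}{d}{\sigma}^2 = \tr{\sigma T}$ with
\[
T \;=\; 4^{-k}\sum_{\lambda,\lambda'\in\Lambda_{\mathcal{S}}} \epsilon_\lambda \epsilon_{\lambda'} \frac{\ttprobl{\lambda,\lambda'}}{\tprobl{\lambda}\tprobl{\lambda'}}\, P^\lambda P^{\lambda'}.
\]
For any $|\psi\rangle\in S$, the stabilizing identity $\epsilon_\lambda P^\lambda|\psi\rangle = |\psi\rangle$ yields $\langle\psi|P^\lambda P^{\lambda'}|\psi\rangle = \epsilon_\lambda \epsilon_{\lambda'}$, so the signs cancel and $\langle\psi|T|\psi\rangle$ equals the right-hand side of \cref{eq:stab_state_snorm}, establishing the second equality.

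For the first equality, I would show that $T \preceq V\,I$ for $V := 4^{-k}\sum_{\lambda,\lambda'\in\Lambda_{\mathcal{S}}}\ttprobl{\lambda,\lambda'}/(\tprobl{\lambda}\tprobl{\lambda'})$. Because $T\in\mathrm{span}(\mathcal{S})$ and $\mathcal{S}$ is abelian, $T$ commutes with each $s_i$, so $\mathbb{C}^{2^n}$ splits into the $2^k$ simultaneous eigenspaces $S_\chi$ of $s_1,\dots,s_k$ indexed by characters $\chi:\mathbb{Z}_2^k\to\{\pm 1\}$, with $S$ itself corresponding to the trivial character. On $S_\chi$, the element $\epsilon_\lambda P^\lambda\cdot \epsilon_{\lambda'}P^{\lambda'}\in\mathcal{S}$ acts as the scalar $\chi_\lambda\chi_{\lambda'}\,I$ (using that $\chi$ is a homomorphism), so $P^\lambda P^{\lambda'}$ acts as $\epsilon_\lambda \epsilon_{\lambda'}\chi_\lambda\chi_{\lambda'}\,I$. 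Consequently $T|_{S_\chi}=V_\chi\,I$ with
\[
V_\chi \;=\; 4^{-k}\sum_{\lambda,\lambda'\in\Lambda_{\mathcal{S}}} \chi_\lambda\chi_{\lambda'}\,\frac{\ttprobl{\lambda,\lambda'}}{\tprobl{\lambda}\tprobl{\lambda'}}.
\]
Since $\ttprobl{\lambda,\lambda'}\geq 0$ and $\tprobl{\lambda}>0$, the termwise inequality $\chi_\lambda\chi_{\lambda'}\leq 1$ gives $V_\chi\leq V$ for every $\chi$, with equality on the trivial character. Hence $\max_\sigma\tr{\sigma T} = V$ is attained on $S$, yielding $\sshadownorm{\Pi}{d}^2 = V = \ssshadownorm{\Pi}{d}{|\psi\rangle\langle\psi|}^2$ for any $|\psi\rangle\in S$.

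The delicate part is the sign book-keeping in the simultaneous-eigenspace step: one must check that the $\epsilon_\lambda \epsilon_{\lambda'}$ appearing explicitly in $T$ is exactly cancelled by the sign with which $P^\lambda P^{\lambda'}$ acts on $S_\chi$. This uses that $\mathcal{S}$ is closed under multiplication and that characters of $\mathbb{Z}_2^k$ are group homomorphisms, so that $\epsilon_\mu P^\mu := \epsilon_\lambda P^\lambda\cdot\epsilon_{\lambda'}P^{\lambda'}\in\mathcal{S}$ has $\chi_\mu = \chi_\lambda\chi_{\lambda'}$ and thus $P^\lambda P^{\lambda'}|_{S_\chi}$ carries the correct compensating phase.
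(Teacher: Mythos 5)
Your proof is correct, and it reaches the result by a genuinely different route than the paper. The paper first proves a general conjugation lemma (\cref{lem:snorm_conjugation}): Pauli invariance of the ensemble gives $\|V O V^{\dagger}\|_{s,\rho}=\|O\|_{s,V^{\dagger}\rho V}$, and averaging over $s\in\mathcal S$ shows $\|\Pi\|_{s(d),\rho}^2=\|\Pi\|_{s(d),\mathcal P(\rho)}^2$ where $\mathcal P$ projects onto the commutant of $\mathcal S$; convexity then forces the optimizer to be a simultaneous eigenstate of the stabilizers, and only at that point does the paper invoke the explicit formula of \cref{lem:state_dep_snorm} together with positivity of $\ttprobl{\lambda,\lambda'}/(\tprobl{\lambda}\tprobl{\lambda'})$ to conclude that the trivial sign pattern wins. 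You instead take \cref{lem:state_dep_snorm} as the starting point, observe that $\ssshadownorm{\Pi}{d}{\sigma}^2=\tr{\sigma T}$ with $T$ a nonnegative real combination of elements of the abelian group $\mathcal S$, and diagonalize $T$ directly via the character decomposition of $\mathbbm C^{2^n}$ into the $2^k$ joint eigenspaces $S_\chi$; the maximum of $\tr{\sigma T}$ is then the largest eigenvalue $\max_\chi V_\chi$, and the termwise bound $\chi_\lambda\chi_{\lambda'}\le 1$ (again using positivity of the coefficients) identifies the trivial character. Your sign bookkeeping is right: the explicit $\epsilon_\lambda\epsilon_{\lambda'}$ in $T$ is squared away against the phase with which $P^{\lambda}P^{\lambda'}$ acts on $S_\chi$, precisely because $\chi$ is a homomorphism on $\mathcal S$. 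What each approach buys: your argument is shorter and purely linear-algebraic, needing nothing about the ensemble beyond what is already packaged in \cref{lem:state_dep_snorm} (in particular it makes transparent that the answer is the top eigenvalue of the quadratic-form operator, echoing the remark after that theorem); the paper's route isolates a reusable symmetry principle (\cref{lem:snorm_conjugation}) that explains structurally \emph{why} stabilizer projectors are special for Pauli-invariant ensembles, which is the point of the discussion following the proof. Both arguments hinge on the same final ingredient, the nonnegativity of $\ttprobl{\lambda,\lambda'}/(\tprobl{\lambda}\tprobl{\lambda'})$.
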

The proof can be found in \cref{app:stab_state_snorm}. Notice that, given a low bond dimension MPS for $1/\tprobl{\bullet}$, the last expression in \cref{eq:stab_state_snorm} is efficiently computable if in addition $\Pi$ is given by a low bond dimension MPO, that is, there is a low bond dimension MPS deciding whether a Pauli is a stabilizer or not. An example is when $n=k$ and the corresponding stabilizer state is a low bond dimension MPS, e.g. the GHZ state.

In addition to being able to compute the shadow norm for specific observables, it would be desirable to be able to determine whether the shallow shadows protocol is efficient for a determinate observable based on coarser properties, and in general, whether it is reasonable to expect that our protocol is useful in the regime in which it is efficient, i.e., logarithmic depth. In Ref.~\cite{Huang2020}, the shadow norm of general observables was computed for the cases $d=0$ and $d=\infty$. In particular, it was found that for $d=\infty$, the shadow norm of $O$ is proportional to its Frobenius norm, which makes it ideal for estimating fidelities of quantum states or more generally expectation values of low rank observables with bounded operator norm, but not for local observables or high rank operators such as Pauli strings, which have exponentially large Frobenius norm. By contrast, the shadow norm of any local operator is well behaved for $d=0$. We provide evidence for and conjecture that a logarithmic depth circuit is the ideal middle ground between these two extremes, capturing enough of the long range entanglement to estimate global observables, while not losing so much local information. To fully prove this claim, one would need to analytically compute the shadow norm of a general observable, which, being a third moment of the circuit, is challenging. As evidence for this claim, we instead present rigorous \emph{analytic} bounds on the locally scrambled shadow norm. This quantity is upper bounded by the shadow norm, $||O||_{s(d),\mathrm{LS}}\leq ||O||_{s(d)}$, nevertheless in the extreme case $d=\infty$ it exhibits the same scaling \cite{Huang2020}. While not providing a worst case performance guarantee, we can interpret the value of $ ||O||_{s(d),\mathrm{LS}}^2$ being low as a reassurance that for a significant portion of states the protocol is sample efficient, and hence as supporting evidence for our initial claim. More specifically, if the locally scrambled shadow norm of an observable is smaller than some bound $B$, by Markov's inequality we can conclude that for a fraction $1-\delta$ of all states in any $1$ design, its shadow norm is smaller than $B/\delta$. In Appendix \ref{app:av_shadow_proof} we employ recently developed methods for the anti-concentration of random quantum circuits to bound the locally scrambled shadow norm~\cite{hunter2019unitary,dalzell2022random,barak2020spoofing}.
In particular, the locally scrambled shadow norm reduces to so-called second moments over the ensemble of random $2$-local circuits.
Using Weingarten calculus, we map these second moments to a partition function of an effective statistical mechanics model.
We find the following theorem:

\begin{thm}\label{thm:av_shadow_norm_UB}
  If $d= \Theta(\log(n))$, we have for any traceless observable $O$
    \begin{equation}
        ||O||_{s(d),\mathrm{LS}}^2\leq 2||O||_{F}^2\left(1+\frac{1}{n^{\mathcal O(1)}}\right) .
    \end{equation}
    Furthermore, let $l$ be the maximum distance between two sites on which $O$ is supported, if $l\leq \mathcal O(\log(n))$,
    \begin{equation}
      ||O||_{s(d),\mathrm{LS}}^2\leq n^{\mathcal O(1)}2^{-n}||O||_F^2.
    \end{equation}
\end{thm}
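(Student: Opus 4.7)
The plan is to start from the explicit formula for the locally scrambled shadow norm proved in Theorem~\ref{lem:state_dep_snorm}, namely $\|O\|_{s(d),\mathrm{LS}}^2 = \sum_\lambda \beta_\lambda^2/\tprobl{\lambda}$, and to obtain uniform lower bounds on $\tprobl{\lambda}$ by rewriting it as a second moment of the brickwork ensemble $\sensemble{d}$. Concretely, since $UP^\lambda U^\dagger$ is always a signed Pauli, the indicator ``$UP^\lambda U^\dagger \in \pm\mathcal Z$'' equals $2^{-2n}\sum_z |\mathrm{Tr}(P^{(0^n,z)} UP^\lambda U^\dagger)|^2$, which gives
\begin{equation*}
\tprobl{\lambda} = \frac{1}{2^{2n}} \sum_{z\in\{0,1\}^n} \expval{U\sim\sensemble{d}}{\left|\mathrm{Tr}\bigl(P^{(0^n,z)} UP^\lambda U^\dagger\bigr)\right|^2}.
\end{equation*}
The integrand is quadratic in $U$, so standard Weingarten calculus applied gate-by-gate across the brickwork expresses $\tprobl{\lambda}$ as a partition function of an effective two-dimensional statistical mechanics model on a cylinder of circumference $n$ and height $d$, in exactly the style of Refs.~\cite{hunter2019unitary,dalzell2022random,barak2020spoofing}.

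For the first bound, I would import known results on this stat-mech model: at a sufficiently large logarithmic depth $d=\Theta(\log n)$, the model enters the ``ordered'' phase in which aligned-permutation configurations dominate and the partition function approximates the Haar value $t_{\lambda,\infty}=1/(2^n+1)$ up to a multiplicative $(1 + 1/n^{\order{1}})$ correction, uniformly in $\lambda\neq 0$. This gives $\tprobl{\lambda} \geq 2^{-n}/2 \cdot (1 - 1/n^{\order{1}})$. Combining with $\|O\|_F^2 = 2^n\sum_\lambda \beta_\lambda^2$ (using that the $\lambda=0$ term vanishes for traceless $O$) yields the first claim.

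For the second bound, I would invoke the strict light cone of the brickwork: if $P^\lambda$ is supported within a region of diameter $l$ on the circle, then $UP^\lambda U^\dagger$ is almost surely supported within a region of size at most $l + 2d = \order{\log n}$, and $\tprobl{\lambda}$ can be analyzed via the restricted circuit on this $\order{\log n}$-qubit window. Applying the same stat-mech estimates to this restricted partition function gives $\tprobl{\lambda} \geq 1/n^{\order{1}}$. Summing over the non-identity Paulis in the decomposition of $O$ then gives $\|O\|_{s(d),\mathrm{LS}}^2 \leq n^{\order{1}}\sum_\lambda \beta_\lambda^2 = n^{\order{1}}\,2^{-n}\|O\|_F^2$.

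The principal technical obstacle is the quantitative control of the stat-mech partition function precisely at the crossover depth $d=\Theta(\log n)$: one needs the depth to be large enough that the aligned-permutation sector dominates and produces the near-Haar estimate behind the factor-$2$ in the first part, yet small enough that the light-cone argument underlying the second part still restricts $\tprobl{\lambda}$ to a polynomial-size window. This is exactly the regime where the anti-concentration machinery of Refs.~\cite{hunter2019unitary,dalzell2022random,barak2020spoofing} applies, and the actual work consists in adapting those estimates — originally developed for output-probability anti-concentration — to Pauli-to-Pauli transition probabilities, which requires choosing the boundary conditions of the stat-mech model to encode the operators $P^\lambda$ and $P^{(0^n,z)}$ rather than computational-basis states.
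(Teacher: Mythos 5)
Your overall architecture matches the paper's: both reduce the theorem to a uniform lower bound on the eigenvalues $\tprobl{\lambda}$ (the paper's \cref{lem:tp_bound} states $\tprobl{\lambda}\geq \frac{1}{2^{\min\{|\lambda|+2d,n\}}+1}\left(1+n^{-\mathcal O(1)}\right)^{-1}$ for $d\geq\Omega(\log n)$), both express $\tprobl{\lambda}$ as a second moment of the circuit and map it via Weingarten calculus to a two-dimensional domain-wall model, and both handle the local case by the light-cone reduction $t_{\lambda,d}(n)=t_{\lambda,d}(l+2d)$. Your rewriting of the indicator as $2^{-2n}\sum_z|\mathrm{Tr}(P^{(0^n,z)}UP^\lambda U^\dagger)|^2$ is a correct variant of the paper's $\tprobl{\lambda}=\mathbb E\left[\langle 0|UP^\lambda U^\dagger|0\rangle^2\right]$.

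The gap is in the middle: the claim that at $d=\Theta(\log n)$ the model enters the ordered phase and the partition function matches the Haar value up to $(1+1/n^{\mathcal O(1)})$, uniformly in $\lambda\neq 0$, is exactly the content of \cref{lem:tp_bound}, and it cannot simply be imported from the anti-concentration literature — as you yourself note, the boundary conditions differ, and that difference is where all the work lives. The paper's argument has two ingredients you do not supply. First, a reduction to \emph{fully supported} Paulis via the monotonicity $\mathbb E\left[\langle 0|U(P\otimes\mathbb I) U^\dagger|0\rangle^2\right]\geq\mathbb E\left[\langle 0|U(P\otimes Q)U^\dagger|0\rangle^2\right]$; this matters because only when the Pauli is non-identity on every site is the left boundary of the stat-mech model pinned entirely to the flip permutation. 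Second, given that pinning, every domain wall must begin and end at the measurement (right) boundary, so $t_{\lambda,\infty}/t_{\lambda,d}-1$ is controlled by the total weight of excursions reaching deeper than $d$; with weight $2/5$ per unit of domain-wall length and at most $\sim \frac{n}{2}\,2^{2t}$ excursions of depth $t$, this is a geometric series that is $n^{-\Omega(1)}$ once $d\gtrsim\log(n)/\log(25/16)$. Without these two steps the $(1+1/n^{\mathcal O(1)})$ correction — and hence the factor $2$ in the first inequality and the $n^{\mathcal O(1)}$ in the second — is asserted rather than established.
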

The constants and precise expressions are given in the proofs. The first bound in this theorem shows that for a large fraction of input states, a logarithmic depth shallow shadows scheme is sufficient to achieve the same sample efficiency as the global Clifford scheme of HKP (as in this case for traceless $O$, $||O||_F^2\leq ||O||_{s(\infty)}^2\leq 3||O||_F^2$, see Ref.~\cite{Huang2020}), up to a small correction which vanishes as the system size grows. The second bound shows that additionally, some of the desirable properties of the $d=0$ case are preserved up to and including logarithmic depth, namely, the sample complexity for local observables does not, for typical input states, grow exponentially in the system size. 
\section{Numerics}\label{sec:numerics}
 \emph{Numerics.} We now present numerical output generated by 
classically simulating the data acquisition part of our scheme (the quantum processing component that is to be implemented on a quantum computer) followed by the implementation of our protocol to estimate expectation values. The code used to produce this data is available at \cite{code}.
In the following, we choose $\rho$ to be the GHZ state on $n=8$ qubits, unless otherwise specified. We start by estimating the fidelity of the state with itself, i.e., pick $O=|\mathrm{GHZ}\rangle\langle \mathrm{GHZ}|$. Since $O$ has full support and $\pnorm{O}{\mathrm{F}}=1$, the $d=0$ scheme is expected to perform poorly and the $d=\infty$ scheme is expected to perform well. 
For each depth, we obtain 100 independent estimates with $N=1000$ samples each.

\cref{fig:fidelity_estimation} shows the performance of the protocol for various depths.
In particular, we see that the spread in estimated values depends on the depth and is consistent with both the empirically calculated variance and the theoretically computed shadow norms. It is apparent that the performance becomes dramatically better from $d=0$ to $d=2$ (decreasing shadow norm and variability in estimates). Thereafter, the improvement in performance for every additional layer is much smaller, indicating that the global Cliffords scheme's performance is already matched after a short depth. 
\begin{figure}[h!]
         \includegraphics[width=.4\textwidth]{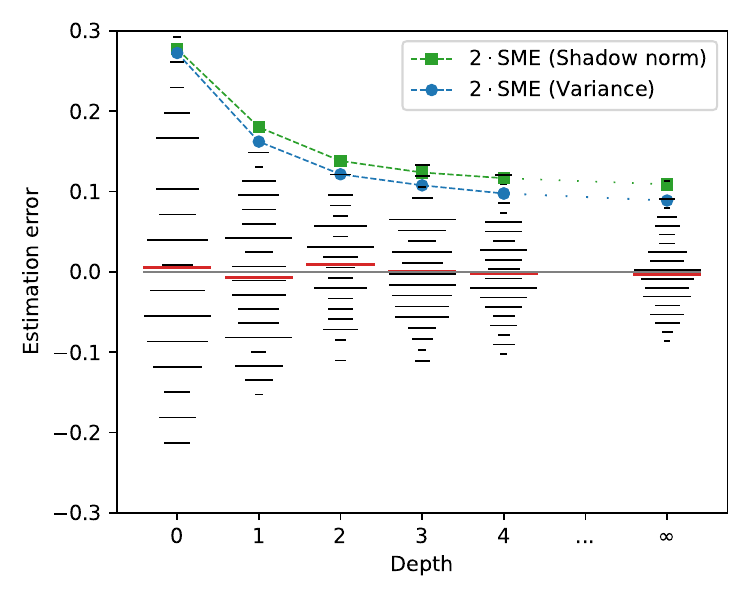}
    \caption{Performance of the protocol for $O=\rho=|\mathrm{GHZ}\rangle\langle\mathrm{GHZ}|$ for various depths. For each depth we show the spread of errors in estimates based on 1000 samples as well as a bound for twice the \emph{standard mean error} (SME) corresponding to $\sim$ 95\% confidence interval assuming Gaussian distributed errors.
    Each black line has a length proportional to the fraction of estimates with similar estimation error.
    The SMEs are computed using the empirical variance and the shadow norm.
    The red lines corresponds to the estimates obtained using the whole set of 100,000 samples.}
    \label{fig:fidelity_estimation}
\end{figure}
We demonstrate the performance of the protocol at various depths in estimating single Pauli operators. 
\cref{fig:pauli_shadow_norm} shows the squared shadow norm of the Pauli $Z^{\otimes k }$ on $n=20$ qubits for various depths. As proven in Section V of the Supplemental Material, the squared Pauli shadow norm equals the inverse eigenvalue $1/t_{\lambda,d}$ and is therefore efficiently computable. 
Since all the Pauli operators considered have support on more than one qubit, intermediate depths perform better than $d=0$, and except for the fully supported case, better than $d=\infty$, but even in this case, the shadow norm quickly approaches the limiting value.
 \begin{figure}
    \centering
    \includegraphics[width=.4\textwidth]{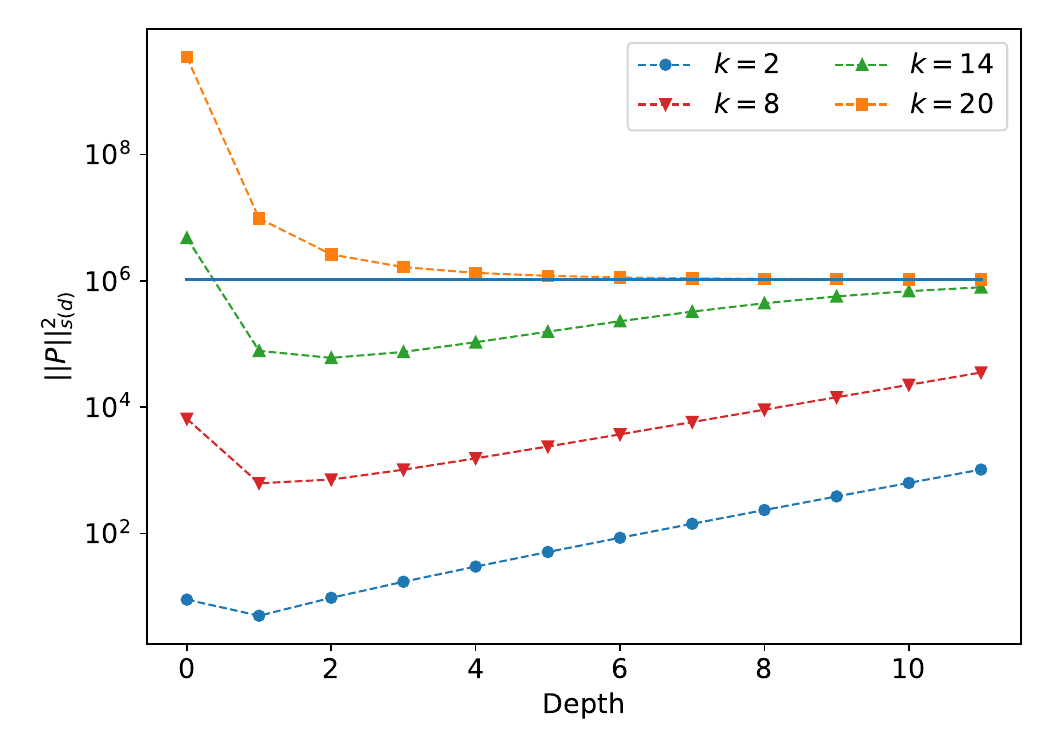}
    \caption{Squared shadow norm of $Z^{\otimes k}$ for various depths and $n=20$ qubits. The straight line corresponds to the 
    limit
    $2^n+1$ for $d\to\infty$.}
    \label{fig:pauli_shadow_norm}
\end{figure}

\begin{figure}[ht!]
    \centering
         \includegraphics[width=.4\textwidth]{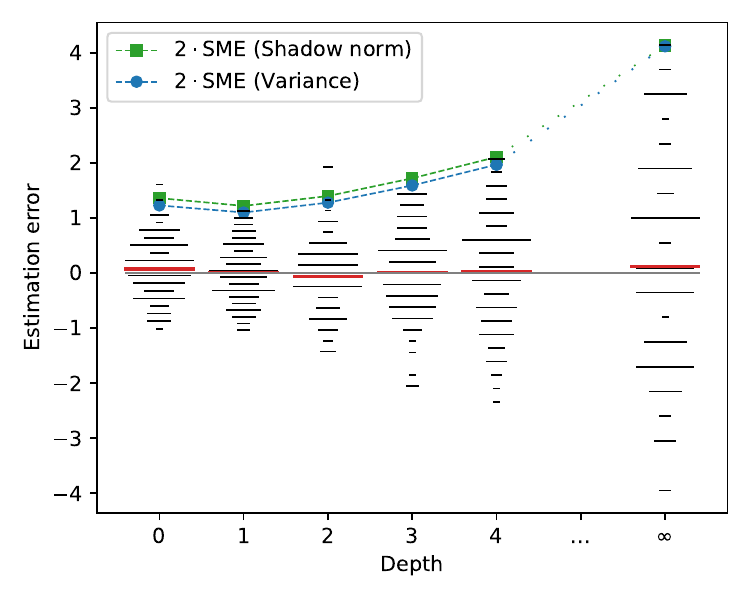}
    \caption{Performance of the protocol for $O=H$ and $\rho=|\mathrm{GHZ}\rangle\langle\mathrm{GHZ}|$ for various depths. For details refer to \cref{fig:fidelity_estimation}.}
    \label{fig:hamiltonian_estimation}
\end{figure}
Next, we look at the estimation of a sparse observable, namely the Hamiltonian 
$H=\sum_i (Z_{i-1}Z_iZ_{i+1}+X_i)$, with periodic boundary conditions. We have picked a 3-local Hamiltonian to exemplify the effect of depth at intermediate Pauli supports, analogous to \cref{fig:pauli_shadow_norm}. 
In \cref{fig:hamiltonian_estimation}, we demonstrate the performance of the protocol in estimating $\langle H\rangle$. 
As expected, low depth is better suited for this observable, $d=1$ performs slightly better than $d=0$ because of the 3-local term, and it is evident that the global $d=\infty$ protocol is ill-suited for the estimation of this observable. Overall, we observe a behavior analogous to the one of single qubit Paulis. 

\section{Discussion}
In this work we provide a protocol for estimating properties of 
complex quantum systems from randomized measurements.
In particular, we consider measurements in a basis generated by random quantum circuits of $2$-local Clifford gates of arbitrary depth $d$. This ensemble sits between the two extremes proposed in the seminal work Ref.~\cite{Huang2020}: uniformly random single qubit Clifford unitaries and uniform global Clifford unitaries. We show that the eigenvalues of the corresponding measurement channel have an intuitive interpretation as probabilities and that it is possible to express them as MPS, and provide a variational procedure to obtain a MPS approximating the eigenvalues of the inverse measurement channel. This procedure is heralded in the sense that its accuracy can be verified and the error of the approximation can be efficiently measured. Overall, this allows us to compute expectation values of a large class of experimentally relevant observables, namely, those that can be written as a polynomial bond dimension MPO, as well as to estimate the number of samples necessary 
to achieve a certain accuracy in the estimation. Up to the inversion of the measurement channel, which has to be performed only once for any given depth and number of qubits, our protocol is computationally efficient in the regime $d=\mathcal O(\log(n))$. 
We gather evidence that in this regime our protocol yields essentially the same guarantees on the sampling complexity as the global Cliffords scheme.
In particular, we find that for fully supported observables the locally scrambled shadow norm is essentially the same at a depth $d\sim\log(n)$ as it is at infinite depth, and at the same time it remains controlled for locally supported observables, which is not true for the global Clifford scheme.
This is shown with the same techniques that are used to prove that log-depth random quantum circuits anti-concentrate~\cite{dalzell2022random,barak2020spoofing,SupremacyReview}, a 
key property for quantum advantages, and we expect that a similar behaviour is true for the worst-case shadow norm, as this is simply a third moment of the circuit. A rigorous proof of this is much more involved and we leave this to future work.
Moreover, it has been shown in Ref.~\cite{deshpande2021tight} that the ensemble ``severely'' fails to anti-concentrate at 
sub-logarithmic depth, we then expect the ensemble at lower depth to have properties closer to the single qubits Cliffords and admit large sample complexity for non-local observables. This behaviour suggests a phase transition from local-to-global happening at log-depth, we then expect this regime to be a ``sweet spot" that inherits the benefits of the global Cliffords scheme while requiring much fewer resources, being efficiently implementable, and retaining the desirable locality properties of the local Cliffords schemes.

Many questions and problems remain open: Low bond dimension MPS representations of the inverse measurement channel appear to exist based on numerical evidence, nevertheless, there is to date
no theoretical guarantee of their existence, nor of their expected bond dimension. It is expected that a proof of this fact would shed light on a more guided manner of finding the inverse. Furthermore, while our bounds on the locally scrambled shadow norm provide compelling evidence that this protocol is useful at log-depth, a rigorous proof of sample efficiency for large classes of observables is missing. Finally, we provide ways of estimating the sample complexity for individual observables, we are confident that the computations involved in these procedures can be made more efficient by MPS bond dimension reduction techniques.


\section{Acknowledgements}
We thank Ingo Roth and David Wierichs for fruitful discussions. We also acknowledge a discussion with two groups of researchers: On the one hand,
 Ahmed A. Akhtar, Hong-Ye Hu, Yi-Zhuang You, and on the other hand  Mirko Arienzo, Markus Heinrich, and Martin Kliesch that both independently and concurrently investigated intermediate classical shadows schemes. We thank Stefano Mangini for making us aware of a minor mistake.
HP acknowledges the Centre for Quantum Software and Information at the University of Technology Sydney and Michael Bremner for hosting him as a visiting scholar.
We thank the BMBF (Hybrid, MuniQC-Atoms, DAQC)
the MATH+ Cluster 
of Excellence, the Einstein 
Foundation (Einstein Unit on Quantum Devices), 
the QuantERA (HQCC), the Munich Quantum Valley (K8),
and the 
DFG (CRC 183, EI 519 20-1, EI 519/21-1) 
for support.

\appendix
\onecolumngrid

\section{Proof of \cref{lem:measurement_map_action}}\label{app:proof_lem1}
Let us begin by noting that for all $d$ (including $d=0$ and $d=\infty$) the ensemble $\sensemble{d}$ is Pauli invariant in the sense that for any Pauli $S\in \pauli{n}$, and any Clifford $U\in \seset{d}$, we have $\seprob{d}(PU)=\seprob{d}(UP)=\seprob{d}(U)$. Let us first prove the following useful result.

\begin{lem}\label{lem:av_prod_paulis}
Let $\mathcal U$ be a Pauli invariant ensemble and let $P_1,\dots ,P_k\in \mathcal P_n$ be a collection of $k$ Pauli operators such that $\prod_{i=1}^k P_i\neq \pm \mathbb{I}$. Then
\begin{equation}
\expval{U\sim\mathcal U}{\prod_{i=1}^k \langle 0| UP_iU^{\dagger}|0\rangle}= 0.
\end{equation}
In addition, if $\prod_{i=1}^k P_i=\mathbb{I}$, then
\begin{equation}
    \expval{U\sim\mathcal U}{\prod_{i=1}^k \langle 0| UP_iU^{\dagger}|0\rangle} =\prover{U\sim \ensemble}{\forall\,i,\,UP_iU^{\dagger}\in \pm\mathcal Z}.
\end{equation}
\end{lem}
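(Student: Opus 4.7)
The plan is to exploit the Pauli invariance of $\mathcal{U}$ via a sign-averaging argument, then resolve the residual case by directly evaluating the matrix elements.

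First, for any fixed Pauli $Q \in \pauli{n}$ I would substitute $U \mapsto UQ$ inside the expectation. By Pauli invariance this substitution preserves the average, while the identity $QP_iQ^\dagger = (-1)^{c_i(Q)} P_i$, with $c_i(Q) \in \{0,1\}$ encoding whether $Q$ and $P_i$ anticommute, produces a factor of $(-1)^{c_i(Q)}$ in each matrix element. This yields the self-consistency relation
\begin{equation}
\expval{U\sim \mathcal U}{\prod_{i=1}^k \langle 0|UP_iU^\dagger|0\rangle} = (-1)^{\sum_i c_i(Q)} \expval{U\sim \mathcal U}{\prod_{i=1}^k \langle 0|UP_iU^\dagger|0\rangle}.
\end{equation}
Telescoping the conjugation gives $Q\bigl(\prod_i P_i\bigr)Q^\dagger = (-1)^{\sum_i c_i(Q)} \prod_i P_i$, so the sign exponent is precisely the indicator of $Q$ anticommuting with $\prod_i P_i$. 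If $\prod_i P_i \neq \pm\mathbb{I}$, then some $Q \in \pauli{n}$ anticommutes with it, forcing the expectation to equal its own negative and hence $0$. This settles the first assertion.

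For the second assertion, I would use that for any Pauli $P$, the matrix element $\langle 0|P|0\rangle$ vanishes unless $P \in \pm\mathcal{Z}$, in which case it equals the corresponding sign. Thus every summand in the average over $U$ is either zero, or the product of signs $\epsilon_i$ in the decomposition $UP_iU^\dagger = \epsilon_i P^{\lambda_i}$ with $P^{\lambda_i} \in \mathcal{Z}$. The remaining step is to show these signs always multiply to $+1$ when $\prod_i P_i = \mathbb{I}$. Since $\mathcal{Z}$, as defined via \cref{eq:pauli_index}, consists of $Z$-strings with trivial prefactor and is closed under multiplication without additional phases (using $Z^2 = \mathbb{I}$), one has
\begin{equation}
\mathbb{I} \;=\; U\Bigl(\prod_i P_i\Bigr)U^\dagger \;=\; \prod_i UP_iU^\dagger \;=\; \Bigl(\prod_i \epsilon_i\Bigr)\prod_i P^{\lambda_i},
\end{equation}
which forces $\prod_i P^{\lambda_i} = \mathbb{I}$ and $\prod_i \epsilon_i = 1$. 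Hence each nonzero summand equals exactly $1$, and the expectation reduces to the probability of the event $\{\forall\, i,\ UP_iU^\dagger \in \pm\mathcal{Z}\}$, as claimed.

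The main obstacle is the sign bookkeeping in the second part: one must check that no phase factors from the $\pm\mathcal{Z}$ convention contaminate the product of the $\epsilon_i$. This is cleanly handled by the fact that the chosen indexing makes $\mathcal{Z}$ a subgroup under multiplication without phases, so the equation $\prod_i \epsilon_i \prod_i P^{\lambda_i} = \mathbb{I}$ separates unambiguously into a scalar piece and a group piece, each of which must independently equal the identity.
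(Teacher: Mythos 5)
Your argument follows the same strategy as the paper's: exploit Pauli invariance by substituting $U \mapsto UQ$ to produce a sign flip in the first part, and track the signs of the $\pm\mathcal{Z}$ decompositions in the second. The second part is correct and matches the paper's reasoning. However, the first part has a gap: the hypothesis $\prod_i P_i \neq \pm\mathbb{I}$ does not exclude $\prod_i P_i = \pm i\,\mathbb{I}$, since a product of Pauli strings in general carries a phase in $\{\pm 1, \pm i\}$ (e.g.\ $P_1=X$, $P_2=Y$, $P_3=Z$ on one qubit gives $XYZ = i\mathbb{I}$). In that case $\prod_i P_i$ is proportional to the identity, so no Pauli $Q$ anticommutes with it; your self-consistency relation then always carries the sign $+1$ and yields no information, and the step ``some $Q\in\pauli{n}$ anticommutes with it'' fails.

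The fix is the observation you already invoke in the second part: $\langle 0|P|0\rangle = 0$ unless $P \in \pm\mathcal{Z}$. If the product of matrix elements is nonzero for some Clifford $U$, then every $UP_iU^\dagger$ lies in $\pm\mathcal{Z}$, so these operators pairwise commute and hence so do the $P_i$; a product of commuting Hermitian operators is Hermitian, which rules out $\prod_i P_i = \pm i\,\mathbb{I}$. Otherwise the integrand vanishes identically for every $U$ and the claim is trivial. With $\pm i\,\mathbb{I}$ excluded, $\prod_i P_i$ is a non-identity Pauli up to a sign and your anticommuting $Q$ exists. This is precisely how the paper's proof opens before running the same sign-flip argument; the paper then picks a $Q$ commuting with $P_1$ and anticommuting with $P_2\cdots P_k$, whereas your choice of a $Q$ anticommuting directly with the full product is an equally valid and arguably cleaner variant.
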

\begin{proof}
We begin by noticing that for any Pauli $P$, $\langle0|P|0\rangle=0$ unless $P\in \pm \mathcal Z$, and in this case its value is $\pm 1$. This means that unless $P_1,\dots,P_k$ all commute,  $\prod_{i=1}^k \langle 0| UP_iU^{\dagger}|0\rangle=0$ for any Clifford $U$, since for this product not to vanish, we must have $UP_iU^{\dagger}\in \pm Z$ for all $i$. Since a product of Hermitian commuting matrices is Hermitian, $\prod_{i=1}^k P_i\neq \pm i\, \mathbb{I}$.
 Suppose $\prod_{i=0}^k P_i \neq \pm \mathbb{I}$, and thus $P_1\neq \pm P_2\dots P_k$, hence, there exists $Q$ such that $[P_1,Q]=0$ and $\{Q,P_2\dots P_k\}=0$. For $Q$ to anti-commute with the products of $P_2$ to $P_n$, it must anti-commute with an odd number of them. We then have 
\begin{equation}
    \expval{U\sim\mathcal U}{\prod_{i=1}^k \langle 0| UP_iU^{\dagger}|0\rangle}=\expval{U\sim\mathcal U}{\prod_{i=1}^k \langle 0| UQP_iQU^{\dagger}|0\rangle}=-\expval{U\sim\mathcal U}{\prod_{i=1}^k \langle 0| UP_iU^{\dagger}|0\rangle},
\end{equation} 
where we have used Pauli invariance in the first equality.
Suppose now $\prod_{i=0}^k P_i=\mathbb{I}$. Since $\prod_{i=1}^k UP_iU^{\dagger}=\mathbb{I}\in +\mathcal Z$, if $UP_i U^{\dagger}\in\pm \mathcal Z$ for all $i$, only an even number of $UP_iU^{\dagger}$ can be in $-\mathcal Z$, then
\begin{equation}
    \prod_{i=1}^k \langle 0|UP_iU^{\dagger}|0\rangle=\begin{cases}
    1 &\textrm{ if } UP_iU^{\dagger}\in \pm \mathcal Z \,\forall \, i=1,\dots, k\\0&\textrm{ otherwise. }
    \end{cases}
\end{equation}
By using the definition of expectation value, we get
\begin{equation}
    \expval{U\sim\mathcal U}{\prod_{i=1}^k \langle 0| UP_iU^{\dagger}|0\rangle} =\prover{U\sim \ensemble}{\forall\,i,\,UP_iU^{\dagger}\in \pm\mathcal Z}.
\end{equation}
\end{proof}
We now move on to proving \cref{lem:measurement_map_action}.
\begin{proof} From \cref{eq:measurement_map}, 
we have 
\begin{equation}
    \smeasmap{\rho}{d}=\expval{U\sim \sensemble{d}}{\sum_{b \in \bitstring{n}} \matrixel{b}{U \rho U^{\dagger}}{b} U^{\dagger}\ketbra{b}{b}U}=   2^n\expval{U\sim \sensemble{d}}{ \matrixel{0}{U \rho U^{\dagger}}{0} U^{\dagger}\ketbra{0}{0}U},
\end{equation}
where we have used that the ensemble $\sensemble{d}$ is Pauli invariant and hence every term in the sum must have the same value, since $|0\rangle$ differs from $|b\rangle$ only by a Pauli string. For $P^{\lambda},P^{\lambda'} \in \pauli{n}$, expanding $\smeasmap{P}{d}$ in the Pauli basis, we can write the coefficient corresponding to $P^{\lambda'}$ as 
\begin{equation}
    \frac{1}{2^{n}}\tr{P^{\lambda'}\smeasmap{P^{\lambda}}{d}}=  \expval{U\sim \sensemble{d}}{ \matrixel{0}{U P^{\lambda} U^{\dagger}}{0} \matrixel{0}{U P^{\lambda'} U^{\dagger}}{0} }.
\end{equation}
By \cref{lem:av_prod_paulis}, this vanishes unless $P^{\lambda'}= \pm P^{\lambda}$, but since $-P^{\lambda}\notin P_n$, it vanishes unless $P^{\lambda'}=P^{\lambda}$, then
\begin{equation}
    \measmap{P^{\lambda}}=t_{\lambda,d} P^{\lambda}
\end{equation}
with, again by \cref{lem:av_prod_paulis},
\begin{equation}\label{eq:t_exp_val}
    t_{\lambda,d}=\frac{1}{2^n}\tr{P^{\lambda}\measmap{P^{\lambda}}}=\expval{U\sim \ensemble}{ \matrixel{0}{U P^{\lambda} U^{\dagger}}{0}^2}= \prover{U\sim \sensemble{d}}{UP^{\lambda}U^{\dagger}\in \pm\mathcal Z}.
\end{equation}
\end{proof}



\section{Proof of \cref{lem:meas_map_comp}}\label{app:proof_lem2}

\vspace{0.5cm}
\begin{figure}[h]
    \centering
    \includegraphics[scale=.3, trim= 0 100 400 100]{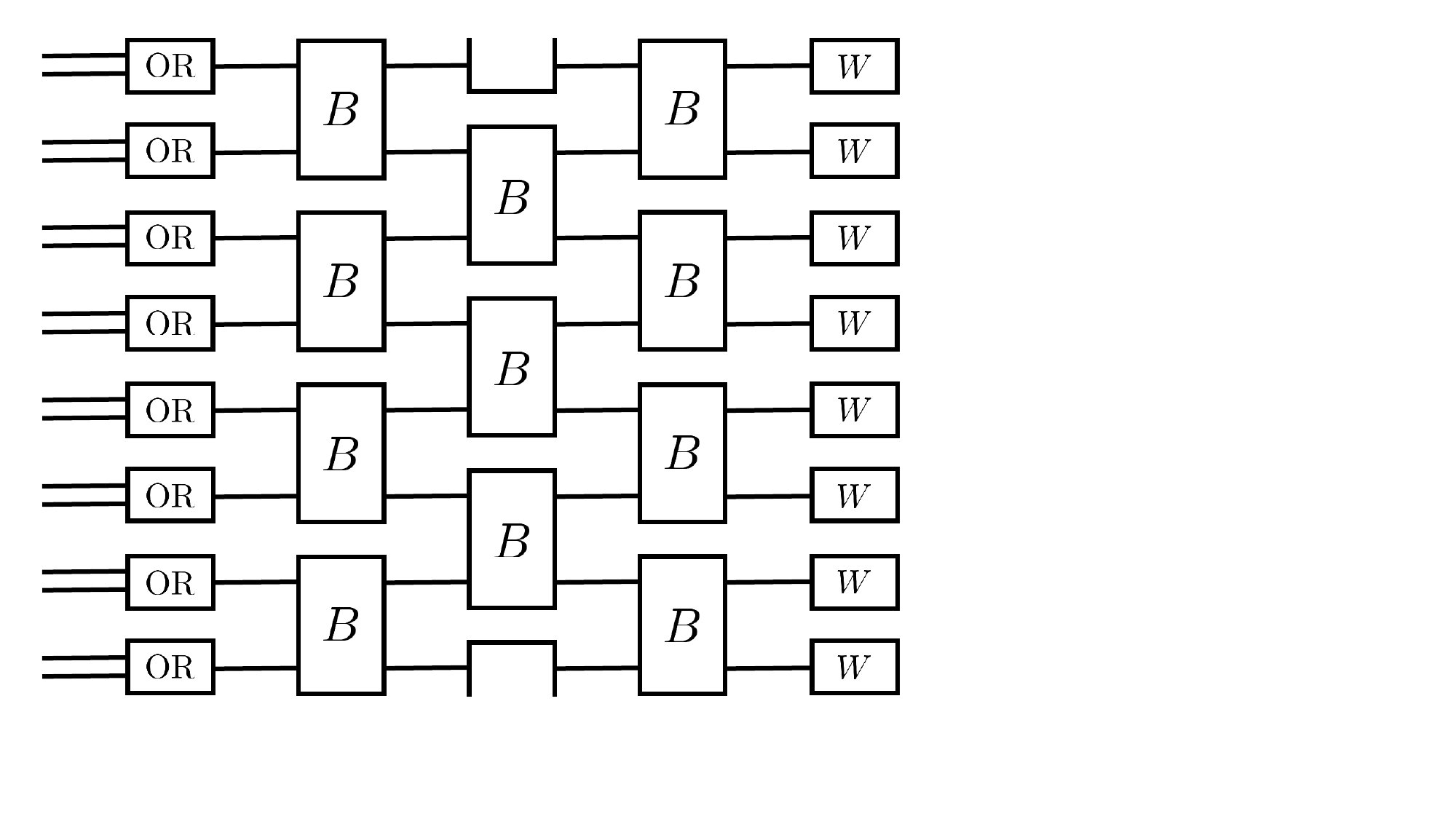}
    \caption{Tensor network for computing $\tprob{P^{\lambda}}$ for $n=8$ qubits and depth $d=3$. This can be constructed from the random Clifford architecture (c.f. \cref{fig:brickwork}) by replacing single qubit random Clifford gates by the $\textsc{or}$ gate, replacing $2-$qubit random Clifford gates by the $B$ matrix and ending with a $W$ vector which evaluates to $1/3^a$ for index value $a\in \set{0,1}$.}
    \label{fig:tprob_TN}
\end{figure}

\begin{proof}
We define the map $g: \pauli{n} \rightarrow \bitstring{n}$ such that the $k^{\rm th}$ entry of $g(P)$ is $0$ if and only if the $k^{\rm th}$ tensor factor of $P$ is the identity. We call $g(P)$ the signature of $P$. We note that $g(P^{(x,z)})=(\logicnor{x_1,z_1},\ldots,\logicnor{x_n,z_n})$. From \cref{eq:t as a prob} and by noting that the $0^{\rm th}$ layer of gates locally scrambles/symmetrizes the Paulis $X,Y$ and $Z$, it is clear that $\tprob{P}$ can only depend on the signature of $P$. To construct a tensor network that evaluates $\stprob{P^{\lambda}}{d}$ (for an input leg assignment of $\lambda$), we first act with $\textsc{or}$ gates to map $\lambda$ to the signature of $P^{\lambda}$ (c.f. \cref{fig:tprob_TN}). Then we act with a $4 \times 4$ real matrix $B$ in a brickwork pattern corresponding to the locations of the $2$-qubit random Clifford gates in the brickwork architecture (c.f. \cref{fig:brickwork}). The columns of the matrix $B$,
\begin{align*}
   B:= \begin{bmatrix}
1 & 0 & 0 & 0 \\
0 & 0.2 & 0.2 & 0.2 \\
0  & 0.2 & 0.2 & 0.2  \\
0 & 0.6 & 0.6 & 0.6 
\end{bmatrix},
\end{align*}
can be interpreted as conditional probability distributions. Indexing the rows and columns of $B$ by the bit pairs $g,g' \in \bitstring{2}$, respectively, we interpret $B_{g,g'}$ as the probability that a uniformly sampled $2-$qubit Clifford gate $U$ conjugates a Pauli with signature $g'$ to a Pauli with signature $g$. Hence, after applying $d$ layers of $B$ matrices, if we label the input legs of the tensor network by $\lambda\in \bitstring{2n}$ and the output legs by $\gamma\in \bitstring{n}$, the tensor element will correspond to the probability $$\prover{U\sim \sensemble{d}}{g(UP^{\lambda}U^{\dagger})=\gamma}.$$ 
We need to multiply this by the conditional probability $$\prover{U\sim \sensemble{d}}{UP^{\lambda}U^{\dagger}\in \pm \set{\mathbb{I},Z}^{\otimes n}~|~g(UP^{\lambda}U^{\dagger})=\gamma}$$ and sum over all $\gamma$ to arrive at the probability in \cref{eq:t as a prob}. Checking the condition $UP^{\lambda}U^{\dagger}\in \pm \set{\mathbb{I},Z}^{\otimes n}$ can be done by checking if each local Pauli factor is either $\mathbb{I}$ (always the case when its associated signature is $0$) or $Z$. Given that an output Pauli factors has signature $1$, it is a Pauli $Z$ with probability $1/3$ over the randomness of $U$. This can easily be checked by noting that this property is inherited from the last local (one or two qubit) component of $U$ to act on the Pauli factor. Thus, $t_{\lambda,d}$ (c.f. \cref{eq:t as a prob}) can be computed by contracting each output leg by the vector $W=[1 ~ 1/3]$ which applies a factor of $1$ if the local signature is $0$ and a factor of $1/3$ if the signature is $1$. The resulting tensor network evaluates $t_{\lambda,d}$, however, we make some further simplifications. First, we note that since the last three columns of $B$ are identical, $B_{g,g'}$ only depends on $\logicor{g'}$ in the sense that $B_{g,g'}=B'_{g,\logicor{g'}}$ where
\begin{align*}
   B':= \begin{bmatrix}
1 & 0\\
0 & 0.2\\
0  & 0.2\\
0 & 0.6
\end{bmatrix}.
\end{align*}

This simplifies \cref{fig:tprob_TN} to the tensor network given in \cref{fig:tensor_definitions}. 
\begin{figure}
     \centering
     \subfloat[]{
         \centering
         \includegraphics[scale=0.3, trim= 20 50 200 0,clip]{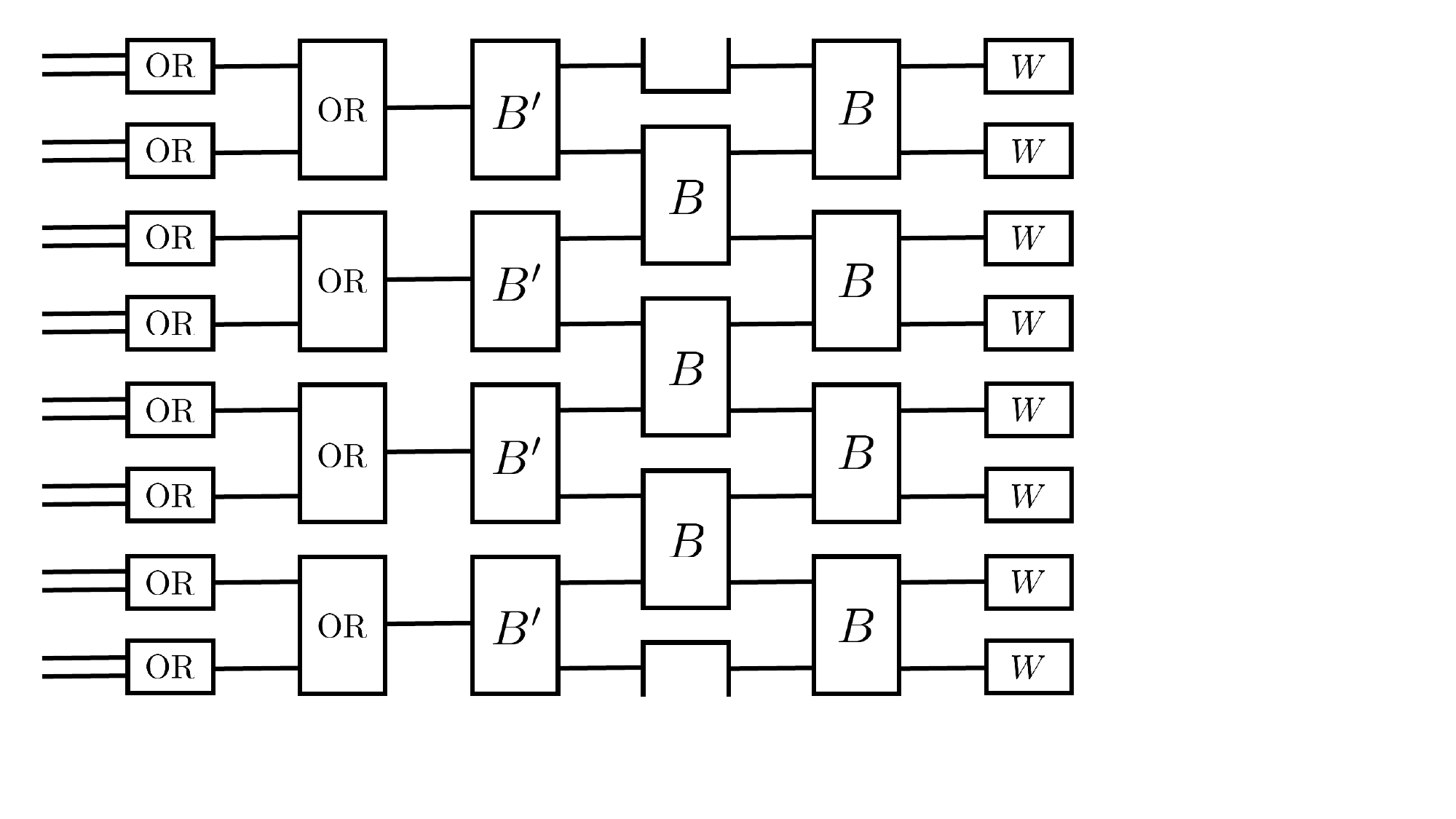}
         }
     \subfloat[]{
         \centering
         \includegraphics[scale=.3, trim= 20 50 200 0,clip]{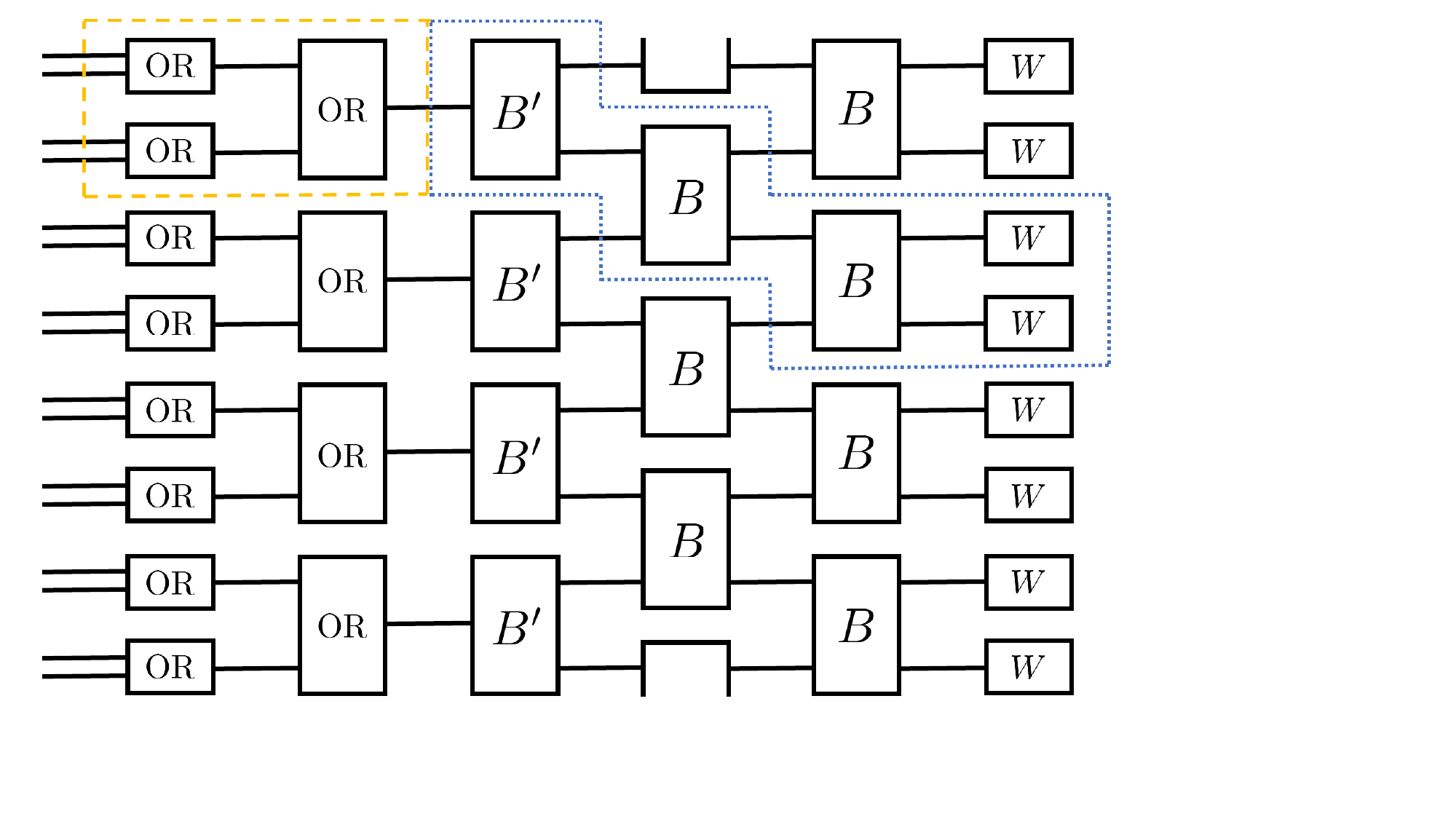}
        }
        \caption{(a) A simpler tensor network for computing $t_{\lambda,d}$ for $n=8$ qubits and depth $d=3$. (b) Shows groupings of components that will be used for the form presented in \cref{fig:tprob_TN_MPS}.}
        \label{fig:tensor_definitions}
\end{figure}
Using the definitions presented in \cref{fig:tensor_component_definitions}, we can draw this tensor network as per \cref{fig:tprob_TN_MPS}.

\begin{figure}
     \centering
     \subfloat[]{
         \centering
         \includegraphics[scale=0.3, trim= 20 300 200 0,clip]{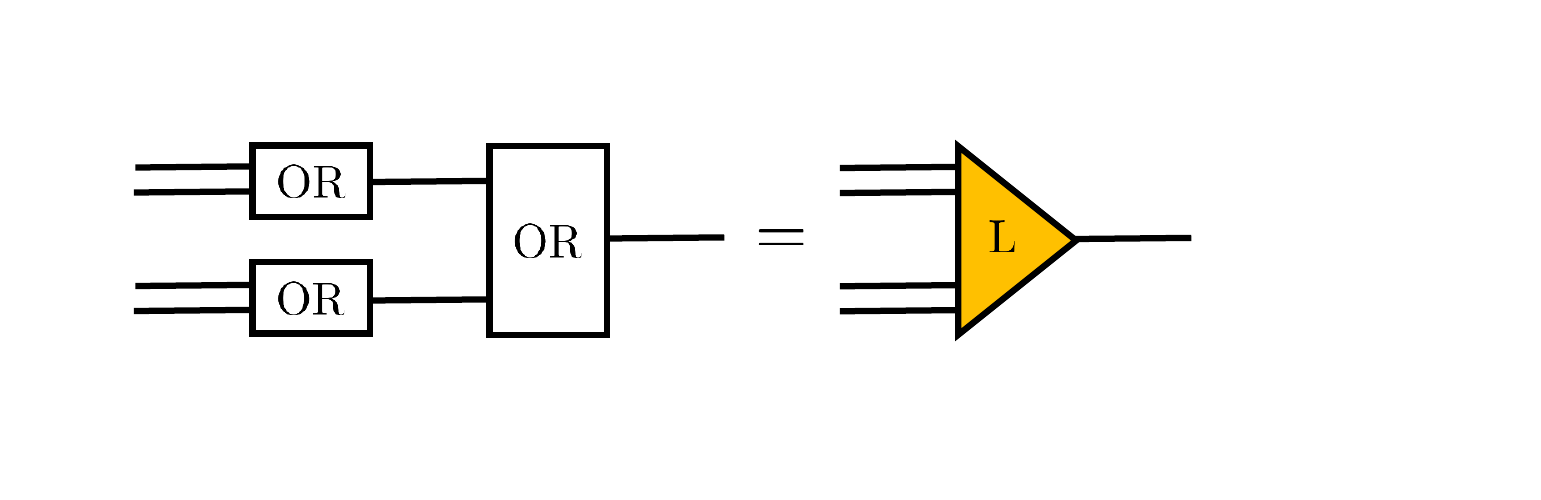}
   }
     \subfloat[]{
         \centering
         \includegraphics[scale=.3, trim= 20 150 200 0,clip]{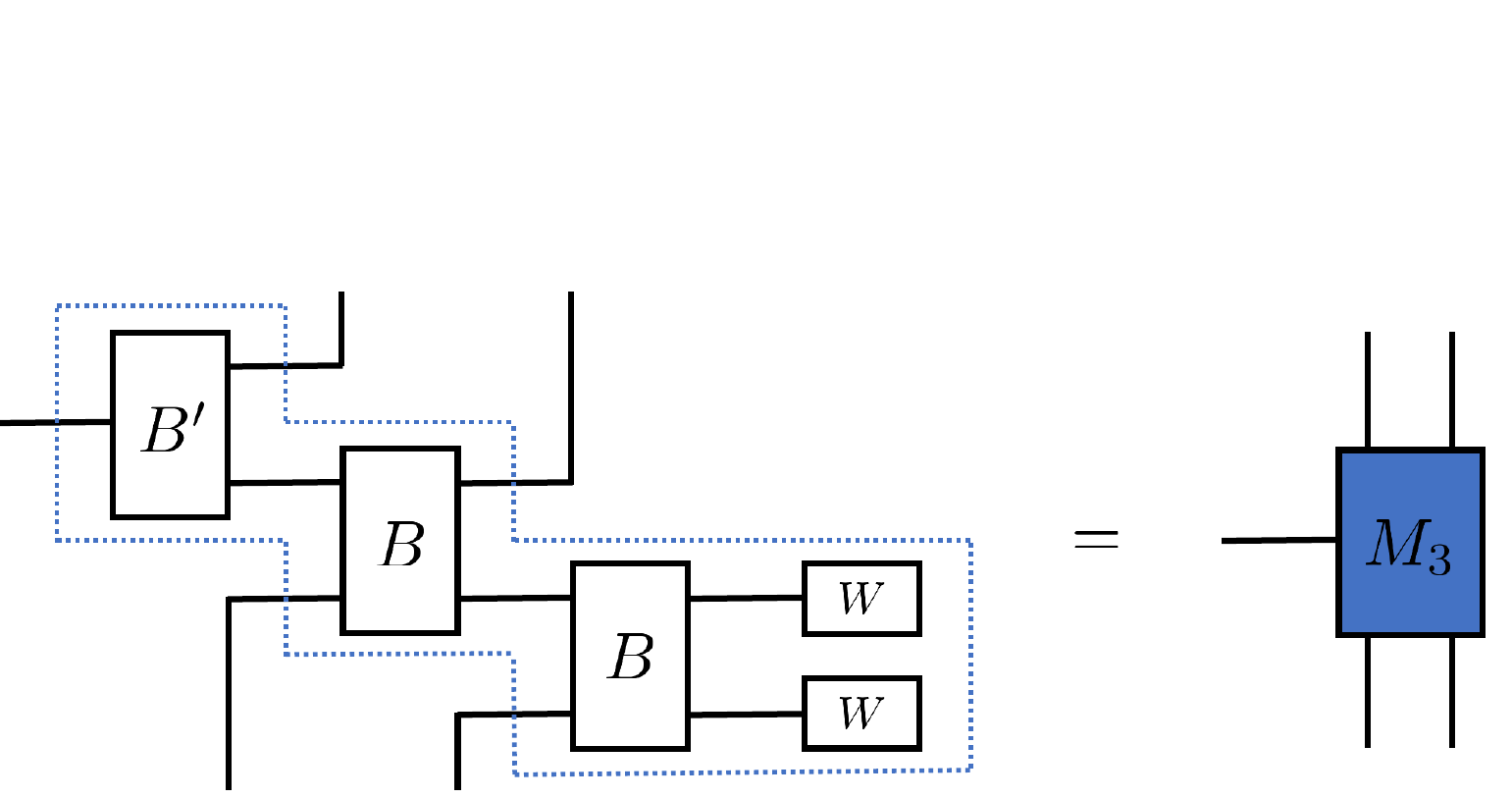}
     }
        \caption{Representation of grouped components used in \cref{fig:tprob_TN_MPS}.}
        \label{fig:tensor_component_definitions}
\end{figure}

\end{proof}

\section{MPS Inversion algorithm}\label{app:variational_inverse}
In this section, we describe our algorithm to produce an approximate MPS representation for the inverse channel $\mathcal M_d^{-1}$ from the MPS representation of $\mathcal M_d$. Before starting, we remark that all the code used in conjunction with this project is publicly available at \cite{code}. It includes the implemention of the MPS inversion algorithm discussed in this section.

As we discussed, all unique eigenvalues $t_{\lambda,d}$ of the measurement channel on $n$ qubits can be encoded in an MPS with $n/2$ physical legs, bond dimension $2^{d-1}$ and physical dimension $2$, where the $n/2$ tensors are given by the right hand side of \cref{fig:tensor_component_definitions}. The goal here is to find an algorithm outputting an analogous representation for the eigenvalues $1/t_{\lambda,d}$ of the inverse measurement map $\mathcal M_d^{-1}$.
The algorithm should solve the following abstract task efficiently: Given an MPS representation for a vector $m\in {\mathbb R^{D}}^{\otimes N}$, find an MPS representation for the element-wise inverse vector which we denote by $v$. That is, $v(x)=1/m(x)$, where $x\in \{0,\dots, D-1\}^{N}$ and $m(x)$ is the $x$-th component of the vector $m$. In the following we are going to set $D=2$ and $N=n/2$, corresponding to our problem of representing $1/t_{\lambda,d}$. 
We say that $v$ is an $\epsilon$-approximate (element-wise) inverse of $m$ if 
\begin{equation}
    \sqrt{\sum_{x\in\{0,1\}^N}|m(x)v(x)-1|^2}\leq \epsilon.
\end{equation}
In practice, we aim for a good approximate inverse rather than an exact inverse of $m$.

What is particular to our problem is that $m(x)$ is actually translationally invariant: Let $m(x)$ be the vector corresponding to the unique eigenvalues of $\mathcal M$, then consider $m(T(x))$, where $T$ translates the bit string by one site. On the level of $t_{\lambda,d}$, $T$ corresponds to translating the corresponding Pauli operator $P^{\lambda}$ by two sites, which does not change the eigenvalues $t_{\lambda,d}$ and so $m(T(x))=m(x)$. Note that this also holds for the eigenvalues $1/t_{\lambda,d}$ of $\mathcal M^{-1}$ and hence for the vector $v(x)=1/m(x)$. Since a translationally invariant vector admits a translationally invariant MPS (TI-MPS) representation (c.f. e.g. \cite{phdthesis}, Theorem 3), we know a TI-MPS for $1/t_{\lambda,d}$ must exist.
Nevertheless, we remark that forcing the solution to be a TI-MPS might lead to a higher bond dimension than outputting a non TI representation. Moreover, an optimization not imposing translational invariance appears to work better in practice as the search space for an accurate solution is larger and the optimization landscape more favorable. For these reasons we will for the moment abandon translational invariance, nevertheless at the end of this section we will argue that guiding the algorithm to a translationally invariant inverse can be beneficial.

We now present our MPS inversion algorithm.
Let $v$ have an MPS representation $V=\{V_0^j,V_1^j\}_{j=1}^N$, where the $V_k^j$ are $\chi_V\times \chi_V$ matrices, and similarly let $m$ have an MPS representation $M=\{M_0^j,M_1^j\}$ where the $M_k^j$ are $\chi_M\times \chi_M$ matrices. From now on we will assume $m$ and $v$ to be real vectors, and the tensors in the MPSs to be defined over the reals. All the results here can be easily generalized to the complex case. We can find an approximate element-wise inverse vector $v$ by minimizing the cost function
\begin{equation}\label{eq:cost_function}
    C_M(V)=\sum_{x\in\{0,1\}^N}(m(x)v(x)-1)^2= \tr{\prod_{j=1}^{N}\sum_{k=0}^1{M_k^j}^{\otimes 2}\otimes {V_k^j}^{\otimes 2} } -2 \tr{\prod_{j=1}^N\sum_{k=0}^1{M_k^j}\otimes {V_k^j}}+2^N
\end{equation}
Notice that computing the cost function only involves multiplying $n$ matrices of size $\chi_V^2\chi_M^2$ or less. Since $M$ is fixed, we will drop the subscript $M$ from $C_M(V)$ from now on.

 We minimize this cost function using a form of \emph{cyclic coordinate descent}, that is, starting from a random ansatz, we optimize the cost function by varying one of the matrices in $\{V\}$, while keeping the others fixed, and repeat the process with all the matrices in $\{V\}$. Sweeping over all the matrices in this way the algorithm converges to a minimum. For convenience, we report \cref{alg:MPSinv} here.
\begin{algorithm}[H]
\caption*{Algorithm for computing an approximate inverse MPS}
\begin{algorithmic}[1]
\algrenewcommand\algorithmicrequire{\textbf{Input:}}
\algrenewcommand\algorithmicensure{\textbf{Output:}}
\Require MPS $\{M_0^j,M_1^j\}_{j=1}^{N}$, ansatz $\{V^j_0,V^j_1\}_{j=1}^{N}$, threshold $\epsilon$.
\Ensure Approximate inverse $\{V^j_0,V^j_1\}_{j=1}^{N}$
\Statex
\While{$C(\{V\})>\epsilon$}
\For{$j=1,\dots N$}
\For{$k=0,1$}
\State $V^j_k \xleftarrow{} \underset{V_{k}^j}{\mathrm{argmin}} \, C(\{V\})$
\EndFor
\EndFor
\EndWhile
\State{\Return $\{V^j_0,V^j_1\}_{j=1}^{N}$}
\end{algorithmic}
\end{algorithm}

Our choice of optimization algorithm is motivated by the fact that in our case the local optimization, step 4 in \cref{alg:MPSinv}, is particularly simple as demonstrated below. This is because the cost function is a bounded quadratic form given that all matrices except one are fixed. For the optimization, this implies that the minimum at each optimization step is unique, and hence the procedure is guaranteed to converge to a stationary point of the cost function \cite{GRIPPO2000127}. 

Define the function
\begin{equation}
    C^{j}(X_0, X_1)=C(V_0^1,V_1^1,V_0^2,\dots, V_1^{j-1},X_0, X_1, V_0^{j+1}, \dots, V_1^N).
\end{equation}
as the function that we get when we fix all matrices except $V_{0}^j, V_{1}^j$. This function can be written as a quadratic form.
\begin{lem}
\begin{equation}
    C^j(X_0, X_1)=C_0^j(X_0)+C_1^j(X_1)
\end{equation}
\begin{equation}
    C_{k}^j(X)=\langle X|A_k^{j}|X\rangle+\langle B_k^{j}|X\rangle +2^{N-1}
\end{equation}
where $|X\rangle=\sum_{r,s=1}^{\chi_V}X_{r,s}\ket r \ket s$ is the vectorization of $X$, $A_k^{j}$ is a positive semi-definite $\chi_V^2\times \chi_V^2$ matrix, and $|B_k^{j}\rangle$ is a vector of size $\chi_V^2$. Furthermore, $A_k^{j}$ and $|B_k^{j}\rangle$ are efficiently computable. 
\end{lem}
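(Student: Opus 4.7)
The strategy is to expand the cost function in \cref{eq:cost_function} by cutting it open at position $j$ and gathering the remaining tensors into left and right environments. Let me write
\begin{equation}
    L_k^j := \prod_{j'=1}^{j-1}\sum_{k'=0}^1 {M_{k'}^{j'}}^{\otimes 2}\otimes {V_{k'}^{j'}}^{\otimes 2}, \qquad R_k^j := \prod_{j'=j+1}^{N}\sum_{k'=0}^1 {M_{k'}^{j'}}^{\otimes 2}\otimes {V_{k'}^{j'}}^{\otimes 2},
\end{equation}
and similarly define $\tilde L^j, \tilde R^j$ for the single-copy product appearing in the linear term. Since all these environments depend only on the fixed matrices, substituting into \cref{eq:cost_function} and explicitly performing the sum over the index at site $j$ yields the decoupled form
\begin{equation}
    C^j(X_0,X_1) = \sum_{k=0}^{1}\Bigl[\,\tr{L^j\cdot ({M_k^j}^{\otimes 2}\otimes X_k^{\otimes 2})\cdot R^j}
    - 2\,\tr{\tilde L^j\cdot (M_k^j\otimes X_k)\cdot \tilde R^j}\Bigr] + 2^{N},
\end{equation}
so the cross-term in $X_0,X_1$ is absent and we obtain $C^j(X_0,X_1)=C^j_0(X_0)+C^j_1(X_1)$ by assigning the constant $2^N$ evenly between the two parts.

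Next, I would rewrite each contribution as a quadratic form in the vectorization $|X\rangle$. Using the identity $\tr{AXBX^T}=\langle X|\,A\otimes B^T\,|X\rangle$ after suitable index reshufflings of the environment and of ${M_k^j}^{\otimes 2}$, the quadratic piece becomes $\langle X|A_k^j|X\rangle$ with $A_k^j$ being an explicit reshape of $L^j\cdot {M_k^j}^{\otimes 2}\cdot R^j$ of size $\chi_V^2\times\chi_V^2$; likewise $|B_k^j\rangle$ is the reshape of $-2\,\tilde L^j\cdot M_k^j\cdot\tilde R^j$. Positive semi-definiteness of $A_k^j$ then follows by noting that, in the original index convention, $\langle X|A_k^j|X\rangle$ equals $\sum_{x:x_j=k} m(x)^2\, v^{(X)}(x)^2$ where $v^{(X)}$ is the MPS obtained from $V$ by replacing $V_k^j$ with $X$; this is a sum of squares, hence nonnegative for every $X$, which is equivalent to $A_k^j\succeq 0$.

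Finally, for efficient computability I would argue that $L^j$ and $R^j$ can be built incrementally by multiplying $N$ matrices of size $\chi_V^2\chi_M^2\times\chi_V^2\chi_M^2$, and the single-copy environments $\tilde L^j,\tilde R^j$ with matrices of size $\chi_V\chi_M\times\chi_V\chi_M$. Caching these environments and updating them only at the ends of the chain during a sweep yields an overall cost polynomial in $\chi_V,\chi_M,N$ per local update. The main conceptual step here is the bookkeeping of tensor indices that turns the trace expression into a genuine positive semidefinite quadratic form; the key simplification that makes the calculation painless is the disappearance of the $X_0X_1$ cross term, which is ultimately just the statement that the two possible values of the physical index $k$ at site $j$ give orthogonal contributions to the sum over $x$.
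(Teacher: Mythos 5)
Your proposal is correct and follows essentially the same route as the paper's proof: expand the cost function, isolate the site-$j$ tensor so that the sum over $x$ splits according to $x_j=k$ (killing the $X_0X_1$ cross term), rewrite each piece as a quadratic form in the vectorized tensor, establish positive semi-definiteness by observing that the form evaluates to a sum of squares, and obtain efficiency from polynomial-size MPS/environment contractions. Your left/right-environment bookkeeping is just a cosmetic variant of the paper's explicit products $V_x^{(a,b)}$, and your caching remark makes the efficiency claim slightly more concrete, but there is no substantive difference.
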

\begin{proof}
Without loss of generality due to periodic boundary conditions, consider $j=1$. For $x\in\{0,1\}^n$, define $V_x^{(a,b)}=\prod_{j=a}^{b} V_{x_j}^j$ and similarly for $M_x^{(a,b)}$, then we have by expanding the product of sums
\begin{equation}
\begin{aligned}
    C_1(V_0^1,V_1^1)&=\sum_{x\in\{0,1\}^{N}}\tr{M_x^{(1,N)}}^2\tr{V_x^{(1,N)}}^2+ 2\sum_{x\in\{0,1\}^{N}}\tr{M_x^{(1,N)}}\tr{V_x^{(1,N)}}+2^N\\&= \sum_{k=0}^1\left(\sum_{x\in\{0,1\}^{N-1}}\tr{M_{k}^0M_x^{(2,N)}}^2\tr{V_k^0V_x^{(2,N)}}^2- 2\sum_{x\in\{0,1\}^{N-1}}\tr{M_k^0M_x^{(2,N)}}\tr{V_k^0V_x^{(2,N)}}+\frac{1}{2}2^N\right)\\&=:\sum_{k=0}^1 C_k^1(V_k^1)
    \end{aligned}
\end{equation}
where in the second step we isolated the sum over the first index. Defining $\langle A|B\rangle =\tr{AB^{\dagger}}$ for two matrices $A,B$ as the Hilbert Schmidt inner product, we have $\tr{V_k^1V_x^{(2,N)}}^2=\langle V_k^1|(V_x^{(2,N)})^T\rangle\langle (V_x^{(2,N)})^T| V_k^1\rangle$, $\tr{V_k^1V_x}=\langle V_k^1|(V_x^{(2,N)})^T\rangle $, then  $C_k^1(X)$ has the promised form with
\begin{equation}
    A_k^1= \sum_{x\in\{0,1\}^{N-1}}\tr{M_k^1M_x^{(2,N)}}^2 |(V_x^{(2,N)})^T\rangle \langle(V_x^{(2,N)})^T|,\quad\quad |B_k^j\rangle =-2\sum_{x\in\{0,1\}^{N-1}}\tr{M_k^1M_x^{(2,N)}}|(V_x^{(2,N)})^T\rangle
\end{equation}
Both these expressions are efficiently computable via MPS by evaluating each matrix element in a basis using MPS contractions. $A_k^1$ is positive semi definite as for any $X$, $\langle X|A_k^1|X\rangle$ is a sum of squares.
\end{proof}

The local optimization can the be performed by minimizing a quadratic form. For completeness, we need to show that this optimization is well defined.
\begin{cor}
$C^j(X_0,X_1)$ has a unique stationary point $C^j(X_0^*,X_1^*)$, which is a minimum. $X_k^*$ is a solution of the linear system
\begin{equation}
    2A_{k}^j|X\rangle+|B_k^j\rangle=0.
\end{equation}
\end{cor}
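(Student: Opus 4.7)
My plan is to exploit the decomposition $C^j(X_0, X_1) = C_0^j(X_0) + C_1^j(X_1)$ from the preceding lemma, which reduces the joint minimization to two independent quadratic optimizations, one in each $X_k$. For each $k \in \{0,1\}$, I would view $C_k^j(X) = \langle X|A_k^j|X\rangle + \langle B_k^j|X\rangle + 2^{N-1}$ as a quadratic form in the vectorization $|X\rangle \in \bbr^{\chi_V^2}$ and take its gradient using the symmetry of $A_k^j$, obtaining
\begin{equation}
    \nabla_{|X\rangle} C_k^j(X) = 2 A_k^j |X\rangle + |B_k^j\rangle.
\end{equation}
Setting this equal to zero is precisely the announced linear system, so any stationary point of $C^j$ is recovered by solving one such system for each $k$.

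The second step is to check that every stationary point is a global minimum and that a solution actually exists. The global-minimum claim is immediate from convexity: $A_k^j \succeq 0$ (by the lemma) makes $C_k^j$ convex, hence every critical point is a global minimum. For existence, I would rewrite the cost as a sum of squares,
\begin{equation}
    C_k^j(X) = \sum_{x \in \{0,1\}^{N-1}} \left(\tr{M_k^j M_x^{(2,N)}}\,\tr{X V_x^{(2,N)}} - 1\right)^2,
\end{equation}
which matches the expressions of $A_k^j$ and $|B_k^j\rangle$ given in the lemma upon expanding the square. Nonnegativity then bounds $C_k^j$ from below, and the standard range-condition for PSD quadratic forms forces $|B_k^j\rangle \in \mathrm{range}(A_k^j)$, guaranteeing solvability of the linear system.

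The main obstacle is uniqueness, which demands that $A_k^j$ be strictly positive definite rather than merely PSD. From the explicit form
\begin{equation}
    A_k^j = \sum_{x} \tr{M_k^j M_x^{(2,N)}}^2 \, \bigl|(V_x^{(2,N)})^T\bigr\rangle\bigl\langle (V_x^{(2,N)})^T\bigr|,
\end{equation}
strict positivity is equivalent to the weighted outer products spanning $\bbr^{\chi_V^2}$. This spanning condition holds generically for the randomly initialized ansatz tensors $\{V_k^l\}_{l \neq j}$ used in \cref{alg:MPSinv}, and any degenerate case can be handled by adding a vanishing Tikhonov regularizer $\eta\,\iden$ to $A_k^j$ without altering the limiting solution. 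Once strict positivity is secured, the linear system admits a unique solution, which by the preceding two steps is the unique stationary point and a global minimum of $C^j$, completing the proof.
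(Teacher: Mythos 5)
Your reduction to two independent quadratic forms, the gradient computation, and the existence argument (boundedness from below forces $|B_k^j\rangle\in\mathrm{Im}(A_k^j)$) all match the paper's proof. The gap is in the uniqueness step. You correctly identify that uniqueness of the \emph{minimizer} would require $A_k^j\succ 0$, but your way out --- asserting that strict positivity holds ``generically'' for random ansatz tensors and otherwise adding a vanishing Tikhonov term $\eta\,\iden$ --- is not a proof: the genericity is never established, the tensors $V^l_k$ for $l\neq j$ are no longer generic after the first optimization sweep, and regularizing and letting $\eta\to 0$ selects the minimum-norm solution of a \emph{different} problem rather than establishing anything about the original one.

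The paper sidesteps this entirely because the corollary does not claim the minimizer is unique; it claims the stationary \emph{value} $C^j(X_0^*,X_1^*)$ (i.e.\ the minimum) is unique. The argument is short: if $|X\rangle$ and $|Y\rangle$ both solve $2A_k^j|X\rangle+|B_k^j\rangle=0$, then $|T\rangle=|Y\rangle-|X\rangle\in\mathrm{Ker}(A_k^j)$; since $|B_k^j\rangle\in\mathrm{Im}(A_k^j)=\mathrm{Ker}(A_k^j)^{\perp}$ one gets $\langle T|B_k^j\rangle=0$ and hence $C_k^j(Y)=C_k^j(X+T)=C_k^j(X)$. This needs only the positive semi-definiteness you already have, and uniqueness of the block-wise minimum value is exactly what the cited convergence guarantee for cyclic coordinate descent requires. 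You should replace the genericity argument with this kernel computation, or equivalently weaken your claim from uniqueness of the minimizer to uniqueness of the minimum value.
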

\begin{proof}
All stationary points of $C_k^j$ must be minima, since its Hessian is proportional to $A_k^j$, which is positive semi-definite (that is, $C_k^j$ is convex).
Considering each matrix element in $X$ as a variable, we have
\begin{equation}
    \nabla C_k^j (X)= 2A_{k}^j|X\rangle+|B_k^j\rangle
\end{equation}
 Then, the solutions of the given linear system give the stationary points of $C_k^j$. We have to make sure that solutions exist and the quadratic form actually has a stationary point, i.e. that $|B_k^j\rangle$ is in the image of $A_k^j$. This follows from the fact that the cost function is bounded from below. If $|B_k^j\rangle=0$, $X=0$ is a solution, otherwise, suppose $|B_k^j\rangle\neq \mathrm{Im}(A_k^j)$, since $A_k^j\geq 0$, it is hermitian, then this implies $|B_k^j\rangle\neq \mathrm{Ker}(A_k^j)^{\perp}$, i.e. there exists $|X\rangle$ in the kernel of $A_k^j$ such that $\langle X|B_k^j\rangle:=c_X >0$. But then $C_j^k(-sX)= 2^{n-1}-sc_X<0$ for some real $s$ large enough, which is a contradiction as $C_j^k(X)\geq 0$. Finally, we need to show the solution is unique. If $X$ and $Y$ are both solutions, then $|T\rangle=|Y\rangle-|X\rangle\in \mathrm{Ker}(A_k^j)$. Since we just showed that $|B_k^j\rangle\in \mathrm{Im}(A_{k}^j) $, we have $\langle T|B_{k}^j\rangle=0$, then $C_k^j(Y)=C_k^j(X+T)=C_k^j(X)$, which shows the minimum is unique.
\end{proof}
The minimization algorithm is guaranteed to converge. However, unfortunately, we have no guarantees on the speed of convergence, or even of the existence of a low bond dimension minimum close to $0$ cost.

Before moving on to discussing the performance of this algorithm, we need to address one last point: the efficient computation of the cost function suffers from floating point precision problems, more specifically, it is subject to so-called \emph{catastrophic cancellation} when it is very close to $0$ and will become inaccurate. Catastrophic cancellation occurs when the required accuracy of the result of a computation is higher than the accuracy of the data used to obtain the result. Floating point numbers on a computer are only stored up to machine precision, that is, up to 15 significant digits. Therefore, if $a$ and $b$ are two floating point numbers of order, for example, $10^5$, they are only specified up to an accuracy of $\sim 10^{-10}$, and so is the number $a+b$. But if the true value of $a+b$ is smaller than $10^{-10}$, this means the computer will not be able to correctly compute it.
In particular, recall that the cost function is computed as follows
\begin{equation}
    \label{eq:cost_func}C_M(V)=\tr{\prod_{j=1}^N\sum_{k=0}^1{M_k^j}^{\otimes 2}\otimes {V_k^j}^{\otimes 2} } -2 \tr{\prod_{j=1}^N\sum_{k=0}^1{M_k^j}\otimes {V_k^j}}+2^N
\end{equation}
Let us examine in what form catastrophic cancellation can affect this computation: let $\chi$ be the bond dimension of the MPS representation $V$ of the inverse vector $v$, then the first term in \cref{eq:cost_func} is a sum of $\chi^2 4^{d-1}$ terms, the accuracy of the result is then bounded by 
\begin{equation}
\delta \|\prod_{j=1}^N\sum_{k=0}^1{M_k^j}^{\otimes 2}\otimes {V_k^j}^{\otimes 2}\|_{\max} \chi^2 4^{d-1},
\end{equation}
where $\|\cdot\|_{\max}$ is the absolute largest value across all elements in the matrix and $\delta=10^{-15}$ is machine precision. In practice we observe this algorithm to produce large numbers in the inverse MPS matrices resulting in numbers the order of $\sim 10^4$ in the final matrix used for the computation of the cost function, which means that, e.g. for $d=3, \chi=3$, the accuracy of the first term is $\sim 10^{-9}$ as opposed to machine precision, $\sim 10^{-15}$. This accuracy can get significantly worse as $\chi$ increases.

A possible way of alleviating this problem of catastrophic cancellation is to modify the cost function by introducing a regularization. In particular, we choose a regularization that guides the optimization towards a solution closer to a translationally invariant MPS. As argued earlier, we expect a translationally invariant solution to exist. As we observe, this choice of regularization also avoids dramatic changes in the tensors for the first few optimization steps. In practice, this also lets us avoid the appearance of very large numbers in the tensors.

More specifically, we add the following term to the cost function
\begin{equation}
    R(V)= \sum_{k=1}^N\sum_{i=0}^1\left\|V_k^i-\frac 1 N \sum_{j=1}^N V_j^i\right\|_2^2 
\end{equation}
In other words, this is the norm of the difference of each tensor with the average of all others with the same physical leg index, if $V$ is translationally invariant, $R(V)=0$. We control the regularization with a parameter $\alpha\geq 0$, so that the new cost function is
\begin{equation}
    C_{\alpha}(V)=C(V)+\alpha\, R(V).
\end{equation}
This corresponds to modifying the local quadratic forms as follows
\begin{equation}
    A_{k}^j\to A_{k}^j+\alpha\left(1-\frac 1 N \right)^2I, \qquad |B_k^j\rangle\to |B_k^j\rangle+2\alpha\frac 1 N\left(1-\frac 1 N \right) \sum_{i\neq k}|V_i^j\rangle \;. 
\end{equation}
All the previous convergence results apply for fixed $\alpha$. 

In practice, we observed that choosing $\alpha$ to decrease as the ansatz approaches a good inverse improves convergence, as then the updates are smaller and tend not to stray the MPS away from translational invariance as much as in the beginning. Choosing a smaller $\alpha$ then allows to focus on achieving a high-quality approximate inverse rather than minimizing the distance from translational invariance.
\begin{figure}[h]
    \centering
    \includegraphics[width=.5\textwidth]{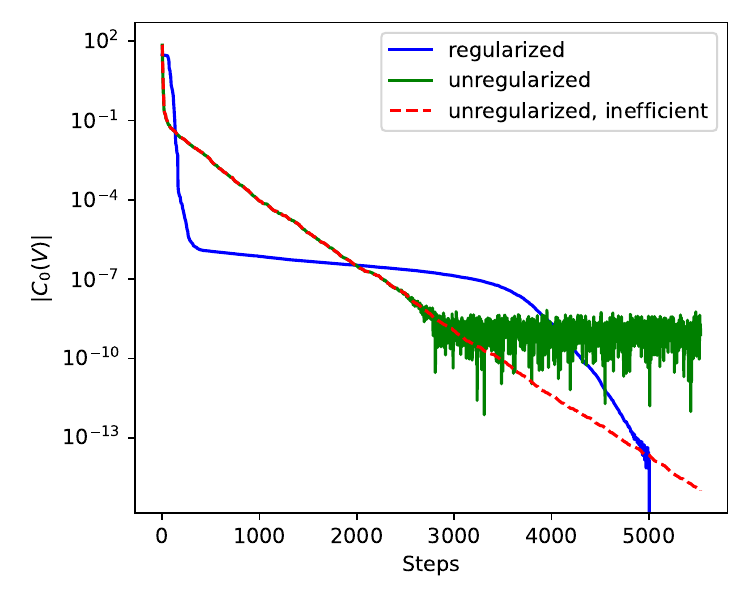}
    \caption{Cost function as a function of the number of steps for $n=10$, $d=3$, $\chi=3$.}
    \label{fig:inverse_error}
\end{figure}

\cref{fig:inverse_error} shows the cost function in \cref{eq:cost_function} as a function of the number of steps for $N=5$ ($n=10$ qubits), $d=3$, $\chi_V=3$ in three cases: first, we use no regularization and use efficient MPS contraction to compute the cost function. Here the floating point precision errors become significant and the MPS computation of the cost function becomes inaccurate below $\sim 10^{-9}$ as discussed earlier. To emphasize that this is an error in the MPS contraction, we show the cost function computed during the same unregularized optimization, computed inefficiently summing up each term in $\sum_x (m(x)v(x)-1)^2$. One can see that this way there is no precision problem and the cost function descends to machine precision. Lastly, we use the regularized cost function to perform the optimization, but we show $C_0(V)$ (since this is the measure which is actually important) computed using efficient MPS contractions, with the twist that at each step we pick $\alpha=C_0(V)$, since as discussed earlier this improves convergence. We can see that while the convergence is less regular, the floating point precision errors are not significant anymore.

Notice that other regularizations are possible, notably one can simply add terms $\alpha ||V_j^k||_2^2$ to discourage the algorithm to pick tensors which are too large in norm, which corresponds to the modification of the quadratic form $A_k^j\to A_{k}^j+I$. In practice we observed this to yield similar results.
In \cref{fig:algo_performance} we show the performance of the algorithm for various numbers of qubits and depths. 

\begin{figure}[h]
    \centering
    \includegraphics[width=\textwidth]{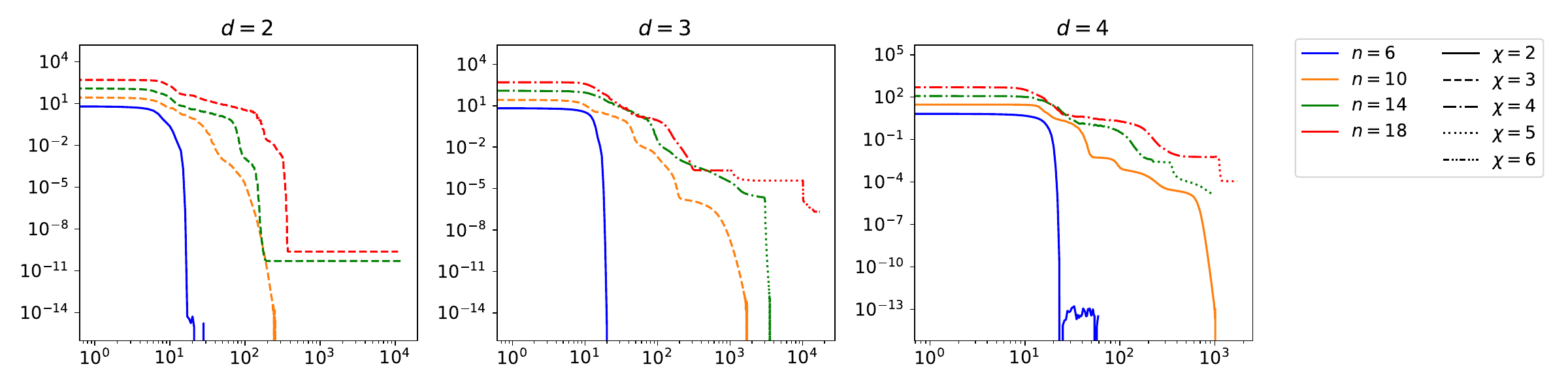}
    \caption{Cost function ($\alpha=0$) in the regularized optimization as a function of the number of steps. Here, a step is defined as the optimization of both indices corresponding to a single site. As indicated by the different line styles, sometimes we increase the bond dimension during the optimization to allow the algorithm to access better solutions. In these cases, we use the smaller bond dimension MPS as an ansatz for the higher bond dimension optimization.}
    \label{fig:algo_performance}
\end{figure}
\section{Robustness to approximate inverse}\label{app:proof_approx_m}
The map obtained via the procedure described in \cref{app:variational_inverse} is only an approximation of the real inverse map $\mathcal M^{-1}_d$. Here we show that if this approximation is close to the actual inverse measurement channel, the estimation of an expectation value obtained using it will be close to the real expectation value.

\begin{lem}
    Let $\mathcal{V}$ be a channel that is diagonal in the Pauli basis with coefficients $v_{\lambda}$ for each $P^{\lambda}\in \pauli{n}$ and is an approximate inverse of the measurement channel in the sense that there exists $\epsilon>0$ such that for all $\lambda\in \bitstring{2n}$,
    $\abs{1-\tprobl{\lambda}v_{\lambda}}\leq \epsilon$. Then
        \begin{equation}
        |\tilde o-\tr{O\rho}|\leq \epsilon||O||_{F}
    \end{equation}
    where $\tilde o$ is the expectation value obtained by using the approximate inverse channel in the shadow protocol.
\end{lem}
\begin{proof}
	By the Cauchy-Schwarz inequality,
	\begin{equation}
		|\tilde o-\tr{O\rho}|=|\tr{\rho(\mathcal V\circ \mathcal M(O)-O)}|\leq \|\mathcal V\circ \mathcal M(O)-O\|_F
	\end{equation}
For any $O=\sum_{\lambda}\alpha_{\lambda} P^{\lambda}$, we have
\begin{equation}
    \| \mathcal V \circ \mathcal M (O)-O\|_F^2=\left|\left|\sum_{\lambda} \alpha_\lambda P^{\lambda}\left(t_{\lambda,d}v_d-1\right) \right|\right|_F^2= \sum_\lambda \tr{I}\alpha_\lambda^2\left(t_{\lambda,d}v_d-1\right)^2\leq \epsilon^2 
    \|O\|_F^2,
\end{equation}
which proves the statement.
\end{proof}

\section{Proof of \cref{lem:ttprob_comp}}\label{app:proof_ttprob_MPS}

\begin{proof} Analogous to the tensor construction of $\tprobl{\lambda}$ given in \cref{fig:tprob_TN_MPS} and~\cref{app:proof_lem2}, here we will construct a tensor network with $4n$ uncontracted binary legs such that when these are assigned the input $\lambda, \lambda' \in \bitstring{2n}$, the tensor evaluates to $\ttprobl{\lambda, \lambda'}$.
We first define three tensors that will be the building blocks of $\ttprobl{\bullet}$.

First, we define $\Lambda$ a tensor with $4$ binary input and $4$ binary output legs (that specify a pair of input and a pair of output single qubit Paulis). For $g, g', h, h' \in \bitstring{2}$, we define $\Lambda$ by its entries 
\begin{align}
    \Lambda_{(h,h'),(g,g')} := \frac{1}{\abs{\clif{1}}}\sum_{U\in \clif{1}}\abs{\frac{1}{4}\tr{(P^{h}\otimes P^{h'})( UP^{g}U^{\dagger}\otimes UP^{g'}U^{\dagger})}}.
\end{align}
The entry $\Lambda_{(h,h'),(g,g')}$ can be interpreted as the probability that a pair of Paulis $(P^{g},P^{g'})$ are, under conjugation by $U$, mapped to the pair of Paulis $(P^{h},P^{h'})$ up to a factor of $\pm 1$ when $U$ is drawn uniformly at random from the set of single qubit Clifford gates $\clif{1}$. Note that if $d>0$, this layer does not change the final result (the measure $\mathcal U_d$ is invariant under single qubit Cliffords) and can be omitted in an actual construction.

We now define $\Gamma$, the two qubit analoge of $\Lambda$. Specifically, $\Gamma$ is a tensor with $8$ binary input and $8$ binary output legs (that specify a pair of input and a pair of output two qubit Paulis). For $g, g', h, h' \in \bitstring{4}$, we define $\Gamma$ by its entries 
\begin{align}
    \Gamma_{(h,h'),(g,g')} := \frac{1}{\abs{\clif{2}}}\sum_{U\in \clif{2}}\abs{\frac{1}{16}\tr{(P^{h}\otimes P^{h'})( UP^{g}U^{\dagger}\otimes UP^{g'}U^{\dagger})}}.
\end{align}
The entry $\Gamma_{(h,h'),(g,g')}$ can be interpreted as the probability that a pair of $2$-qubit Paulis $(P^{g},P^{g'})$ are, under conjugation by $U$, mapped to the pair of Paulis $(P^{h},P^{h'})$ up to a factor of $\pm 1$ when $U$ is drawn uniformly at random from the set of two qubit Clifford gates $\clif{2}$. 

By building a depth $d$ brickwork (c.f. \cref{fig:brickwork} for $d=3$) using $\Lambda$ tensors in the $0^{\mathrm{th}}$ layer and $\Gamma$ tensors in layers $1,2,\ldots,d$, we will construct a tensor with $4n$ input and $4n$ output legs that for $(\lambda,\lambda')\in \bitstring{4n}$ inputs and $(\nu,\nu')\in \bitstring{4n}$ outputs evaluates the probability that $U$ drawn from the ensemble  $\sensemble{d}$ will, up to a phase factor of $\pm 1$, map $P^{\lambda}$ to $P^{\nu}$ and simultaneously map $P^{\lambda'}$ to $P^{\nu'}$.
To compute $\ttprobl{\lambda,\lambda'}$ we simply need to sum over all $\nu,\nu'$ that are consistent with Pauli strings of $\mathbb{I}$ and $Z$. This can be implemented by a final layer of local $\Pi$ tensors attached to the brickwork.
We now define $\Pi$, a tensor with $4$ binary input (that specify a pair of single qubit Paulis) and no output legs. For $g, g' \in \bitstring{2}$, we define $\Pi$ by its entries 
\begin{align}
    \Pi_{(g,g')} := \begin{cases}
    1,& \text{if } g_x=g'_x=0\\
    0,              & \text{otherwise}
\end{cases},
\end{align}
where we use $g=:(g_x,g_z)$ and $g'=:(g'_x,g'_z)$ in line with the notation introduced above \cref{eq:pauli_index}. Intuitively, $\Pi_{(g,g')}$ evaluates to unity if both input Paulis have no Pauli $X$ component and zero otherwise. Hence, the tensor product of the local $\Pi$ tensors only evaluates to unity if the entire Pauli sting consists of a tensor product of $\mathbb{I}$ and $Z$ factors. We note that the $0^{\rm{th}}$ layer of $\Lambda$ tensors can be omitted when $d>0$ since it can be absorbed in to the $\Gamma$ tensor layer where it acts as a symmetry.

We note that by grouping tensor components in an analogous way to \cref{fig:tensor_definitions}, we can write $\ttprobl{\bullet}$ as an MPS with bond dimension $2^{4(d-1)}$. Some further reductions may be achivable but we leave this to future work. We note that in the $d=\order{\log n}$ setting, this tensor can be computed efficiently in $n$ as claimed.
\end{proof}

\section{Proofs of performance guarantees}\label{app:performance_guarantees}
First, we show that the state dependent and locally scrambled shadow norms are actually norms. 
\begin{lem}
For any state $\rho$
\begin{align*}
  \ssshadownorm{O}{d}{\rho}=\left(\expval{U\sim \sensemble{d}}{\sum_{b \in \bitstring{n}}
    \matrixel{b}{U \rho U^{\dagger}}{b} \matrixel{b}{U \invmeasmap{O} U^{\dagger}}{b}^2 }\right)^{1/2}
\end{align*}
is a norm.
\end{lem}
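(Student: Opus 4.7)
The plan is to realise $\ssshadownorm{\cdot}{d}{\rho}$ as the seminorm induced by a positive semi-definite symmetric bilinear form on the space of Hermitian operators, and thereby reduce the problem to standard Hilbert-space reasoning. Concretely, I would introduce
\begin{equation*}
\langle A, B\rangle_{\rho,d} := \expval{U\sim \sensemble{d}}{\sum_{b \in \bitstring{n}} \matrixel{b}{U\rho U^{\dagger}}{b} \matrixel{b}{U\invsmeasmap{A}{d}U^{\dagger}}{b}\matrixel{b}{U\invsmeasmap{B}{d}U^{\dagger}}{b}}
\end{equation*}
and observe that $\ssshadownorm{O}{d}{\rho}^2 = \langle O, O\rangle_{\rho,d}$.

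Next I would check the formal properties in turn. Bilinearity in $(A,B)$ follows from linearity of the inverse measurement channel $\invsmeasmap{\cdot}{d}$ (guaranteed by \cref{lem:measurement_map_action}, since all $\tprobl{\lambda}$ are strictly positive), together with linearity of matrix elements, finite sums, and expectations. Symmetry is manifest. Positive semi-definiteness follows because each summand takes the form $\matrixel{b}{U\rho U^{\dagger}}{b}\cdot\matrixel{b}{U\invsmeasmap{A}{d}U^{\dagger}}{b}^2$, which is a product of a non-negative diagonal matrix element of a state with a square. Absolute homogeneity is then immediate from the quadratic dependence on $O$. The triangle inequality follows by a one-line Cauchy–Schwarz argument applied to this form: expanding $\langle A+B,A+B\rangle_{\rho,d}$ and bounding $|\langle A,B\rangle_{\rho,d}|\le \ssshadownorm{A}{d}{\rho}\ssshadownorm{B}{d}{\rho}$ yields $\ssshadownorm{A+B}{d}{\rho}\le \ssshadownorm{A}{d}{\rho}+\ssshadownorm{B}{d}{\rho}$.

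The main obstacle is definiteness, i.e., showing $\ssshadownorm{O}{d}{\rho}=0\Rightarrow O=0$, since a positive semi-definite bilinear form only gives a seminorm in general. The argument I would deploy is: first, $\invsmeasmap{\cdot}{d}$ is a bijection on Hermitian operators (every eigenvalue $\tprobl{\lambda}>0$), so it suffices to prove $\invsmeasmap{O}{d}=0$. If $\ssshadownorm{O}{d}{\rho}=0$, then for every $U$ in the support of $\sensemble{d}$ and every $b\in\bitstring{n}$, the summand vanishes, and hence whenever $\matrixel{b}{U\rho U^{\dagger}}{b}\neq 0$ one must have $\matrixel{b}{U\invsmeasmap{O}{d}U^{\dagger}}{b}=0$. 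Provided $\rho$ has full rank this forces $\matrixel{b}{U\invsmeasmap{O}{d}U^{\dagger}}{b}=0$ for \emph{all} $(U,b)$ in the support, and then tomographic completeness of $\sensemble{d}$ (which even $\sensemble{0}\subseteq\sensemble{d}$ already provides, since random Pauli measurements are informationally complete) implies $\invsmeasmap{O}{d}=0$, hence $O=0$. I would flag explicitly that, for rank-deficient $\rho$, the object is naturally a seminorm on all Hermitian operators but a genuine norm on the subspace of observables with support on the range of $\rho$, which is the relevant space for the shadow norm's usage in bounding variances.
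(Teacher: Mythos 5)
Your proof takes essentially the same route as the paper's: both realise $\ssshadownorm{O}{d}{\rho}^2$ as the quadratic form associated with the symmetric bilinear form $\langle A,B\rangle_{\rho}$ built from the inverse measurement channel, and derive the triangle inequality from Cauchy--Schwarz. You are in fact somewhat more careful than the paper on the definiteness point: the paper simply asserts the form ``is an inner product,'' whereas you isolate the full-rank assumption on $\rho$ needed to conclude $\langle O,O\rangle_{\rho}=0\Rightarrow O=0$ via tomographic completeness, and correctly flag that for rank-deficient $\rho$ one only gets a seminorm without further argument.
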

\begin{proof}
Clearly $\ssshadownorm{O}{d}{\rho}$ is non negative and vanishes for $O=0$. We note that
\begin{equation}
    \ssshadownorm{O}{d}{\rho}^2=\langle O,O\rangle_{\rho}
\end{equation}
where 
\begin{equation}
   \langle A,B\rangle_{\rho}= \expval{U\sim \sensemble{d}}{\sum_{b \in \bitstring{n}}
    \matrixel{b}{U \rho U^{\dagger}}{b} \matrixel{b}{U \invmeasmap{A} U^{\dagger}}{b}\matrixel{b}{U \invmeasmap{B^{\dagger}} U^{\dagger}}{b} }
\end{equation}
this expression is symmetric and sesquilinear, hence it is an inner product which induces the state dependent shadow norm, which in turns implies the triangle inequality.
\end{proof}
The following is then immediate by choosing $\rho=\mathbb{I}/2^n$:
\begin{cor}
$\ssshadownorm{O}{d}{\mathrm{LS}}$ is a norm.
\end{cor}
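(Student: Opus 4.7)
The plan is to reduce the locally scrambled shadow norm to a state-dependent shadow norm evaluated at a specific fixed state, so that the previous lemma applies directly. The key observation is that the inner product $\langle A,B\rangle_\rho$ introduced in the proof of the preceding lemma is \emph{linear} in $\rho$ for each fixed pair $A,B$: every occurrence of $\rho$ appears inside the single term $\matrixel{b}{U\rho U^\dagger}{b}$, which is linear in $\rho$. Hence for any probability distribution over states $\mathcal{E}$,
\begin{equation}
\expval{\sigma\sim \mathcal{E}}{\langle O,O\rangle_\sigma} \;=\; \langle O,O\rangle_{\expval{\sigma\sim\mathcal{E}}{\sigma}}.
\end{equation}

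Next, I would invoke the defining property of a state $1$-design, namely $\expval{\sigma\sim\mathcal{E}}{\sigma} = \mathbb{I}/2^n$. Combining this with the definition of the locally scrambled shadow norm yields
\begin{equation}
\ssshadownorm{O}{d}{\mathrm{LS}}^2 \;=\; \expval{\sigma\sim \mathcal{E}}{\ssshadownorm{O}{d}{\sigma}^2} \;=\; \langle O,O\rangle_{\mathbb{I}/2^n} \;=\; \ssshadownorm{O}{d}{\mathbb{I}/2^n}^2.
\end{equation}
In particular this also shows that $\ssshadownorm{O}{d}{\mathrm{LS}}$ is independent of the particular $1$-design $\mathcal{E}$ chosen in the definition, which is a nice side benefit.

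Finally, since $\mathbb{I}/2^n$ is a legitimate quantum state, the preceding lemma (which establishes that $\ssshadownorm{\cdot}{d}{\rho}$ is a norm for \emph{every} state $\rho$) applied with $\rho=\mathbb{I}/2^n$ immediately gives non-negativity, absolute homogeneity, and the triangle inequality for $\ssshadownorm{\cdot}{d}{\mathrm{LS}}$. There is essentially no obstacle here: the only minor point to verify is the linearity-in-$\rho$ step above, but this is manifest from the explicit formula for $\langle\cdot,\cdot\rangle_\rho$. The corollary is genuinely immediate once this identification is made.
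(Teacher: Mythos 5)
Your proposal is correct and matches the paper's argument, which likewise obtains the corollary "by choosing $\rho=\mathbb{I}/2^n$" in the preceding lemma; you have simply spelled out the implicit linearity-in-$\rho$ and $1$-design steps that justify the identification $\ssshadownorm{O}{d}{\mathrm{LS}}=\ssshadownorm{O}{d}{\mathbb{I}/2^n}$. No gaps.
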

Below are some useful properties of $\tau_{(\lambda,\lambda')}$ which are self evident from the definition.
\begin{lem}\label{lem:ttprob_properties}
The following hold
\begin{enumerate}
    \item $0\leq \ttprobl{\lambda,\lambda'}\leq \min\{\tprobl{\lambda},\tprobl{\lambda'}, \tprobl{\lambda \oplus \lambda'}\}$,
    \item $\{P^{\lambda},P^{\lambda'}\}=0\implies
    \ttprobl{\lambda,\lambda'}=0$,
    \item $\ttprobl{\lambda,\lambda}=\tprobl{\lambda}$,
    \item $\ttprobl{\lambda,\lambda'}=\ttprobl{\lambda',\lambda}=\ttprobl{\lambda,\lambda \oplus \lambda'}$, and
    \item If for all $U\in \sensemble{d}$, $UP^{\lambda}U^{\dagger}$ and $UP^{\lambda'}U^{\dagger}$ have disjoint support, then $\ttprobl{\lambda,\lambda'}=\tprobl{\lambda}\tprobl{\lambda'}$.
\end{enumerate}
\end{lem}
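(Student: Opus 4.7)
The plan is to establish the five items in turn by working directly with the probabilistic definition \cref{eq:tau_as_a_prob} and exploiting two structural facts throughout: (a) $\pm\mathcal{Z}$ is closed under multiplication of its (mutually commuting) elements, and (b) $P^{\lambda} P^{\lambda'} = c_{\lambda,\lambda'}\, P^{\lambda \oplus \lambda'}$ for some phase $c_{\lambda,\lambda'} \in \{\pm 1, \pm i\}$, with $c_{\lambda,\lambda'} \in \{\pm 1\}$ precisely when $P^{\lambda}$ and $P^{\lambda'}$ commute.

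I would first dispatch items 2 and 3. For item 3, the two constraints coincide and \cref{eq:t as a prob} is recovered. For item 2, Clifford conjugation preserves commutation relations, so if $P^{\lambda}$ and $P^{\lambda'}$ anticommute then so do $UP^{\lambda}U^{\dagger}$ and $UP^{\lambda'}U^{\dagger}$; but any two elements of $\pm\mathcal{Z}$ commute, so the joint event in \cref{eq:tau_as_a_prob} is empty. Items 1 and 4 then follow from the same package. The marginal bounds $\ttprobl{\lambda,\lambda'}\le \tprobl{\lambda}$ and $\ttprobl{\lambda,\lambda'}\le\tprobl{\lambda'}$ in item 1 are immediate from event inclusion; the third bound $\ttprobl{\lambda,\lambda'}\le \tprobl{\lambda \oplus \lambda'}$ splits into cases, with the anticommuting case vacuous via item 2 (noting that $P^{\lambda}$ and $P^{\lambda \oplus \lambda'}$ then also anticommute, as one checks using (b) and $P^{\lambda}P^{\lambda}=\mathbb{I}$), and the commuting case handled by (a) together with (b), which show that the joint event forces $UP^{\lambda\oplus\lambda'}U^{\dagger} \in \pm\mathcal{Z}$. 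The symmetry statement in item 4 is definitional, and the equality $\ttprobl{\lambda,\lambda'}=\ttprobl{\lambda,\lambda\oplus\lambda'}$ follows by applying the same closure argument in both directions: conditional on $UP^{\lambda}U^{\dagger}\in\pm\mathcal{Z}$, the two further conditions $UP^{\lambda'}U^{\dagger}\in\pm\mathcal{Z}$ and $UP^{\lambda\oplus\lambda'}U^{\dagger}\in\pm\mathcal{Z}$ are equivalent.

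Item 5 will be the main obstacle. The hypothesis that $UP^{\lambda}U^{\dagger}$ and $UP^{\lambda'}U^{\dagger}$ have disjoint support for \emph{every} $U$ in the support of $\sensemble{d}$ is stringent, and I would argue it forces the forward light cones of $P^{\lambda}$ and $P^{\lambda'}$ under the brickwork architecture to be disjoint. If some qubit $q$ lay in both light cones, one could exhibit a specific $U$ in the support of $\sensemble{d}$ for which both conjugated Paulis acquire a non-identity factor on $q$ simultaneously, using the freedom in choosing two-qubit Cliffords along paths reaching $q$ from the respective original supports, together with the $0^{\rm th}$ layer of single-qubit Cliffords that can rotate boundary factors between $X$, $Y$ and $Z$; this would contradict the hypothesis. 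Once the light cones are known to be disjoint, the product structure of $\sensemble{d}$ partitions its random two-qubit gates into those in the $P^{\lambda}$ light cone, those in the $P^{\lambda'}$ light cone, and the remainder, and the indicators of the two events in \cref{eq:tau_as_a_prob} depend only on the first and second group, respectively. Independence then yields $\ttprobl{\lambda,\lambda'}=\tprobl{\lambda}\tprobl{\lambda'}$. The technical work lies almost entirely in making the light-cone argument rigorous for the circular brickwork and handling the $0^{\rm th}$-layer boundary carefully; everything else is algebraic bookkeeping.
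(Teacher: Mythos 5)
Your proposal is correct, but note that the paper does not actually supply a proof of this lemma: it simply declares the five properties ``self evident from the definition.'' Your treatment of items 1--4 is exactly the reasoning the authors leave implicit --- event inclusion for the first two bounds in item 1, preservation of (anti)commutation under Clifford conjugation for item 2, and the closure of $\pm\mathcal{Z}$ under multiplication (together with $P^{\lambda}P^{\lambda'}=\pm P^{\lambda\oplus\lambda'}$ in the commuting case, and $A,AB\in\pm\mathcal{Z}\Rightarrow B=A(AB)\in\pm\mathcal{Z}$ for the converse direction) for the third bound and for item 4 --- and all of these steps check out. Item 5 is the one place where ``self evident'' is an overstatement, and you correctly identify that disjointness of supports alone cannot yield independence (one genuinely needs the product structure of the gate randomness), so the light-cone argument is the right mechanism. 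Your reduction ``overlapping light cones $\Rightarrow$ some $U\in\seset{d}$ produces overlapping supports'' is the only step requiring real work, and it does go through: at any two-qubit gate where the two propagating Paulis both have a non-identity restriction, those restrictions either are proportional, or form an independent commuting or anticommuting pair, and in each case one can pick a two-qubit Clifford whose images are both fully supported on the gate's two qubits (e.g.\ $(Z\otimes Z, X\otimes X)$ or $(Z\otimes Z, X\otimes Z)$); inducting over layers makes both conjugated Paulis fully supported on their light cones, contradicting the hypothesis if these intersect. Once the light cones are disjoint, the two indicator variables depend on disjoint, independent collections of gates and the product formula follows. In short, you have supplied a complete and correct argument where the paper supplies none.
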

We move on to the proof of \cref{lem:state_dep_snorm}.
\begin{proof}
Expand $O$ in the expression for the state-dependent shadow norm
\begin{equation}
\begin{aligned}
    2^{-n}\ssshadownorm{O}{d}{\sigma}&=\sum_{\lambda,\lambda'} \beta_{\lambda}\beta_{\lambda'}\frac{1}{t_{\lambda,d}t_{\lambda',d}} \expval{U\sim \mathcal U_d}{\bra{0} U\sigma U^{\dagger}\ket{0}\bra{0} UP^{\lambda} U^{\dagger}\ket{0}\bra{0} UP^{\lambda'} U^{\dagger}\ket{0}}\\&= 2^{-n} \sum_{\lambda,\lambda'} \beta_{\lambda}\beta_{\lambda'}\frac{1}{t_{\lambda,d}t_{\lambda',d}} \expval{U\sim \mathcal U_d}{\bra{0} UP^{\lambda} U^{\dagger}\ket{0}\bra{0} UP^{\lambda'} U^{\dagger}\ket{0}}\\&+2^{-n}\sum_{\lambda,\lambda'}\sum_{\lambda''\neq 0} \beta_{\lambda}\beta_{\lambda'}\tr{\sigma P^{\lambda''}}\frac{1}{t_{\lambda,d}t_{\lambda',d}} \expval{U\sim \mathcal U_d} {\bra{0} UP^{\lambda} U^{\dagger}\ket{0}\bra{0} UP^{\lambda'} U^{\dagger}\ket{0}\bra{0} UP^{\lambda''} U^{\dagger}\ket{0}}
    \end{aligned}
\end{equation}
where we have used $\sigma=2^{-n}\mathbb{I}+\sum_{\lambda\neq 0} \alpha_{\lambda}P^{\lambda}$. By \cref{lem:av_prod_paulis} and the same argument as in \cref{lem:measurement_map_action} we have
\begin{equation}
    \expval{U\sim \mathcal U_d}{\bra{0} UP^{\lambda} U^{\dagger}\ket{0}\bra{0} UP^{\lambda'} U^{\dagger}\ket{0}}= \delta_{\lambda,\lambda'} t_{\lambda, d}
\end{equation}
and again by \cref{lem:av_prod_paulis} the second term vanished unless $P^{\lambda''}=\pm P^{\lambda}P^{\lambda'}$. Then
we have
\begin{equation}
 \sum_{\lambda''\neq 0}\tr{\sigma P^{\lambda''}}\frac{1}{t_{\lambda,d}t_{\lambda',d}} \expval{U\sim \mathcal U_d} {\bra{0} UP^{\lambda} U^{\dagger}\ket{0}\bra{0} UP^{\lambda'} U^{\dagger}\ket{0}\bra{0} UP^{\lambda''} U^{\dagger}\ket{0}}= \tr{\sigma P^{\lambda}P^{\lambda'} } \tau_{(\lambda,\lambda'),d}
\end{equation}
with
\begin{equation}\label{eq:tau_exp_val}
    \tau_{(\lambda,\lambda'),d}=\mathbb E_{U\sim \mathcal U} (\bra{0} UP^{\lambda} U^{\dagger}\ket{0}\bra{0} UP^{\lambda'} U^{\dagger}\ket{0}\bra{0} UP^{\lambda}P^{\lambda'} U^{\dagger}\ket{0})=\prover{U\sim \sensemble{d}}{UP^{\lambda}U^{\dagger}, UP^{\lambda'}U^{\dagger}\in \pm\mathcal Z},
\end{equation}
where we have used \cref{lem:av_prod_paulis} again in the last equality. We then have
\begin{equation}
    \ssshadownorm{O}{d}{\sigma}^2=\sum_{\lambda}\frac{\beta_{\lambda}^2}{t_{\lambda,d}} +\tr{\sigma \sum_{\lambda\neq\lambda'}\beta_{\lambda}\beta_{\lambda'} \frac{\tau_{(\lambda,\lambda'),d}}{t_{\lambda,d}t_{\lambda',d}}P^{\lambda}P^{\lambda'}}=\sum_{\lambda}\frac{\beta_{\lambda}^2}{t_{\lambda,d}} +\tr{\sigma \tilde O}.
\end{equation}
It remains to identify the first term as the locally scrambled shadow norm, to see this, notice that 
\begin{equation}
    \expval{\sigma\sim \mathcal E}{\tr{\sigma \tilde O}}= \frac{1}{2^n}\tr{\tilde O}=0.
\end{equation}
\end{proof}
Next, we prove \cref{lem:shadow_norm_bounds}.

\begin{proof}
Recall that (cf. \cref{lem:state_dep_snorm})
\begin{equation}
    ||O||_{s(d),\sigma}^2=||O||_{s(d),\mathrm{LS}}^2+\tr{\sigma \tilde O}
\end{equation}
where
     \begin{equation}
        \tilde O=\sum_{\substack{\lambda', \lambda \in \bitstring{2n}\\\lambda'\neq\lambda}}P^{\lambda}P^{\lambda'} \beta_{\lambda} \beta_{\lambda'} \frac{\ttprobl{\lambda,\lambda'}}{\tprobl{\lambda} \tprobl{\lambda'}}.
    \end{equation}
    Notice that by Hölder's inequality $\tr{\sigma \tilde O}\leq ||\sigma||_1||\tilde O||_{\infty}=||\tilde O||_{\infty}\leq ||\tilde O||_F$, which proves
    \begin{equation}
        ||O||_{s(d)}^2\leq ||O||_{s(d),\mathrm{LS}}^2+||\tilde O||_F
    \end{equation}
    Next, we need to show that this expression can be computed efficiently via tensor contractions. Consider the operator
  \begin{equation}
        \tilde O'=\sum_{\lambda', \lambda \in \bitstring{2n}}P^{\lambda}P^{\lambda'} \beta_{\lambda} \beta_{\lambda'} \frac{\ttprobl{\lambda,\lambda'}}{\tprobl{\lambda} \tprobl{\lambda'}}.
    \end{equation}
The only difference between $\tilde O$ and $\tilde O'$ is that the latter includes terms with $\lambda=\lambda'$. We have
\begin{equation}
    ||\tilde O'||_F^2=||\tilde O||_F^2+2^n\sum_{\lambda\in \{0,1\}^n}\frac{\beta_{\lambda}^2}{t_{\lambda,d}}=||\tilde O||_F^2+2^n||O||_{s(d),\mathrm{LS}}^2
\end{equation}
and consequently
\begin{equation}
    ||\tilde O||_F^2= ||\tilde O'||_F^2- 2^n||O||_{s(d),\mathrm{LS}}^2
\end{equation}
    $||O||_{s(d),\mathrm{LS}}^2$ can be computed efficiently given an efficient MPS representation for $1/t_{\lambda,d}$ if $O$ is a shallow observable, as then we have access to an efficient MPS representation for $\beta_{\lambda}$, what's left is to compute $||\tilde O'||_F^2$. We do this by constructing an MPS representation of $\tilde O'$, after which the norm is easily computable. Recall that we can construct an MPS representation with polynomial bond dimension for $\tau_{(\lambda,\lambda'),d}$. Given access to MPS representations of $\beta_{\lambda}$, $\frac{1}{t_{\lambda,d}}$ of bond dimensions $\chi_{O},\chi_{\mathrm{inv}}$ respectively, can construct an MPS with bond dimension $\chi_{O}^2\chi_{\mathrm{inv}}^22^{4(d-1)}$ for
    \begin{equation}
        \beta_{\lambda}\beta_{\lambda'}\frac{\ttprobl{\lambda,\lambda'}}{\tprobl{\lambda} \tprobl{\lambda'}}.
    \end{equation}
    We then contract this with the local tensors
    \begin{equation}
        \Delta_{x,y,z}=\delta_{x, y\oplus z}
    \end{equation}
    which does not increase the bond dimension. The resulting MPS equals $\tilde O'$. See \cref{fig:snorm_bound_tensor} for a pictorial representation.
    \begin{figure}
        \centering
        \includegraphics[width=.5\textwidth]{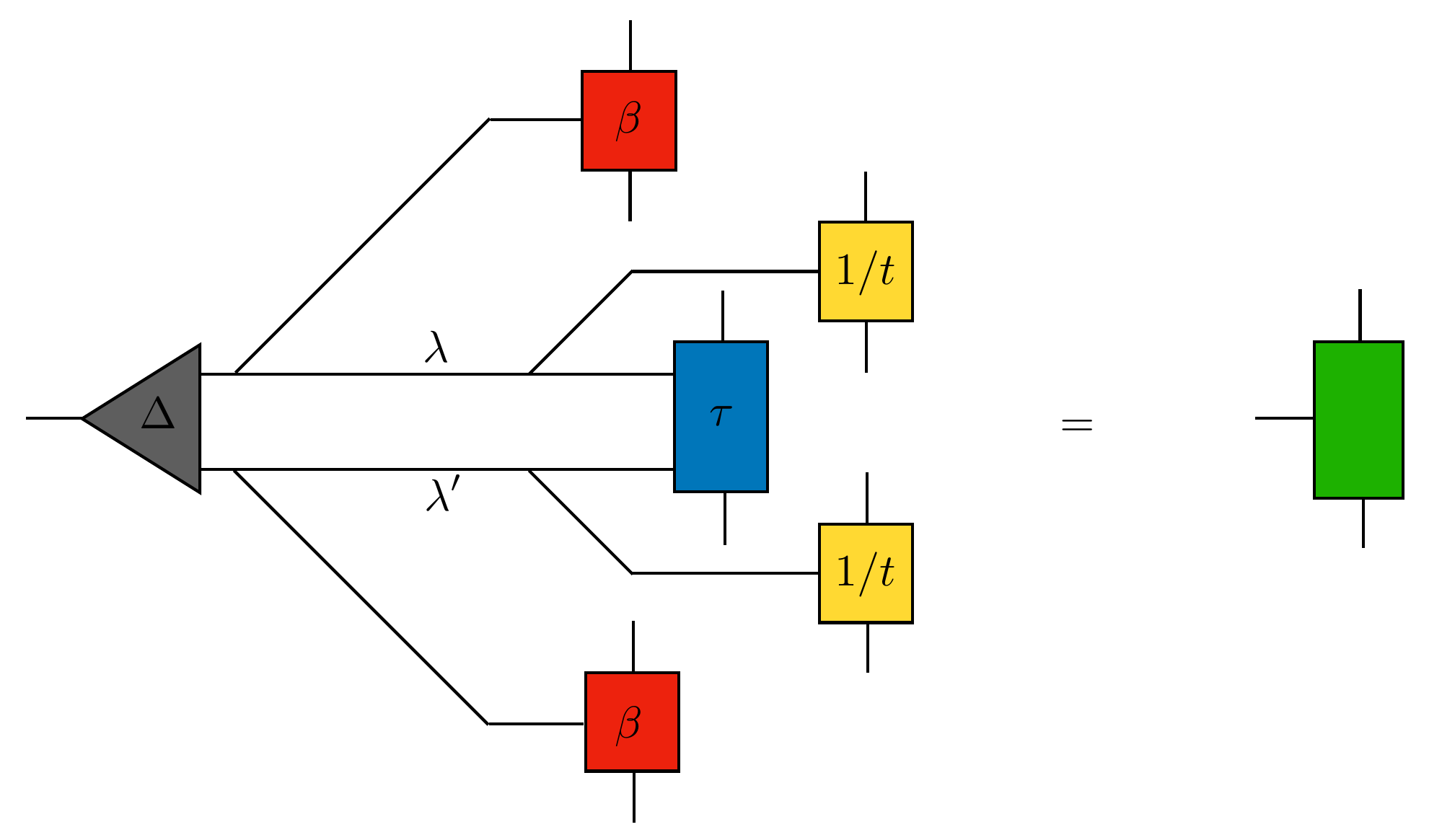}
        \caption{Construction of the local tensors for the MPS of $\tilde O'$.}
        \label{fig:snorm_bound_tensor}
    \end{figure}
\end{proof}
It is worth noticing that the Frobenius norm bound in \cref{eq:sn_frob_bound}  can also be obtained as a relaxation of the optimization over states obtained by dropping the positivity condition: let
\begin{align}
    \mathbb{A}^{+}&:=\set{\sigma \in {(\bbc^2)}^{\otimes n} ~|~\sigma=\sigma^{\dagger}, \tr{\sigma}=1, \tr{\sigma^2}\leq 1, \sigma\succeq 0},\\
    \mathbb{A}&:=\set{\sigma \in {(\bbc^2)}^{\otimes n} ~|~\sigma=\sigma^{\dagger}, \tr{\sigma}=1, \tr{\sigma^2}\leq 1}.
\end{align}
then
\begin{equation}
||O||_{s(d)}^2=||O||_{\mathrm{LS}}^2+\max_{\rho\in \mathbb A^+} \tr{\tilde O\rho}
\end{equation}
let us relax the optimization by removing the positivity condition
\begin{equation}
    \max_{\rho\in \mathbb A^+}\tr{\tilde O\rho}\leq  \max_{\rho\in \mathbb A}\tr{\tilde O\rho}
\end{equation}
Since $\mathbb A$ is convex, the optimizer must satisfy $\tr{\sigma_*}=\tr{\sigma_*^2}=1$. We can then take $\sigma=2^{-n}I+\sqrt{1-2^{-n}}A$ imposing $\tr{A}=0$, so that $\tr{\sigma}=1$, and $\tr{A^2}=1$, so that $\tr{\sigma^2}=1$. Then 
\begin{equation}
    \max_{\rho\in \mathbb A}\tr{\tilde O\rho}=\sqrt{1-2^{-n}}\max_{A}\tr{\tilde O A}
\end{equation}
the maximizer is then given by $A=\frac{\tilde O}{||\tilde O||_F}$ yielding
\begin{equation}
    \max_{\rho\in \mathbb A^+}\tr{\rho \tilde O} \leq \sqrt{1-2^{-n}} \frac{1}{||\tilde O||_F}\tr{\tilde O^2}= \sqrt{1-2^{-n}}||\tilde O||_F\leq ||\tilde O||_F.
\end{equation}
\section{The shadow norm of a stabilizer projector}\label{app:stab_state_snorm}
Here, we prove \cref{thm:stab_state_snorm}.
\begin{proof}
We begin by the following observation
\begin{lem}\label{lem:snorm_conjugation}
Let $\mathcal U=(\mathbb U,\mu)$ be an ensemble of unitaries where $\mathbb U$ is a subset of the unitary group on $n$ qubits and $\mu$ is a measure thereon. Let
\begin{equation}
    \mathcal M: \rho\mapsto \frac{1}{2^n}\sum_{b\in\{0,1\}^n}\mathbb E_{U\sim \mathcal U}(\langle b|U\rho U^{\dagger} |b\rangle U^{\dagger}|b\rangle\langle b|U )
\end{equation}
be the measurement channel obtained by this measure and let $||\cdot ||_{s,\rho}$ be the corresponding shadow norm.
Suppose $\mu$ is invariant under the application of unitaries from some subset $\mathbb V\subseteq \mathbb U$, for $V\in \mathbb V$
\begin{equation}
    ||VOV^{\dagger}||_{s,\rho}= ||O||_{s,V^{\dagger}\rho V}
\end{equation}
\end{lem}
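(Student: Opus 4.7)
The plan is to exploit the invariance of the measure $\mu$ under $\mathbb{V}$ to perform two coordinated changes of variables inside the expectation defining the shadow norm. Writing the squared shadow norm of $VOV^\dagger$ explicitly from the definition,
\begin{equation}
\|VOV^\dagger\|_{s,\rho}^2 = \mathbb{E}_{U\sim\mathcal{U}} \sum_{b} \langle b| U\rho U^\dagger|b\rangle \, \langle b| U\,\mathcal{M}^{-1}(VOV^\dagger) U^\dagger|b\rangle^2,
\end{equation}
I first establish that the measurement channel is covariant under conjugation by $V$, i.e., $\mathcal{M}(V\sigma V^\dagger) = V\mathcal{M}(\sigma)V^\dagger$ for every $\sigma$. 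This follows by starting from the defining expression for $\mathcal{M}$, substituting $W = UV$, and using invariance of $\mu$ under the corresponding (right) multiplication by $V$ to conclude that $W$ is also distributed according to $\mu$. The factors of $V$ and $V^\dagger$ then land cleanly on the outside of the snapshot $W^\dagger |b\rangle\langle b|W$. Covariance of $\mathcal{M}$ immediately implies covariance of its inverse, giving $\mathcal{M}^{-1}(VOV^\dagger) = V\,\mathcal{M}^{-1}(O)\,V^\dagger$.

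Plugging this identity back into the shadow norm moves $V$ and $V^\dagger$ next to $U$ and $U^\dagger$ in the second factor. Performing the same change of variables $W = UV$ a second time on the outer expectation then transports $V^\dagger$ and $V$ onto $\rho$, yielding
\begin{equation}
\|VOV^\dagger\|_{s,\rho}^2 = \mathbb{E}_{W\sim\mathcal{U}} \sum_b \langle b| W (V^\dagger\rho V) W^\dagger|b\rangle \, \langle b| W\,\mathcal{M}^{-1}(O) W^\dagger|b\rangle^2 = \|O\|_{s, V^\dagger\rho V}^2,
\end{equation}
which is the claimed identity.

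The only subtlety I foresee is pinning down the direction of the invariance assumption precisely — whether $\mu$ is invariant under right or left multiplication by elements of $\mathbb{V}$ — and ensuring that both the covariance step for $\mathcal{M}$ and the final substitution use the same convention. Once the convention is fixed (the natural one here is right-invariance, so that $W = UV$ remains $\mu$-distributed), the argument reduces to a direct change of variables and needs no further machinery beyond the linearity of expectations and the definitions already in place.
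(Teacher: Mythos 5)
Your proposal is correct and follows essentially the same route as the paper: first prove the covariance $\mathcal M(VXV^{\dagger})=V\mathcal M(X)V^{\dagger}$ via the change of variables $U\mapsto UV$ using invariance of $\mu$, deduce the same covariance for $\mathcal M^{-1}$, and then apply the change of variables once more in the shadow-norm expectation to transport $V$ onto $\rho$. The subtlety you flag about the direction of invariance is handled the same way in the paper, which simply uses invariance under multiplication by elements of $\mathbb V$ in the form needed at each step.
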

\begin{proof}
Start by noticing that, by invariance under $\mathbb V$, 
\begin{equation}\label{eq:M_comm_V}
\begin{aligned}
    \mathcal M(VXV^{\dagger})&=\frac{1}{2^n}\sum_{b\in\{0,1\}^n}\mathbb E_{U\in \mathcal U} \langle b|UVXV^{\dagger}U^{\dagger}|b\rangle U^{\dagger} |b\rangle\langle b|U\\&=\frac{1}{2^n}\sum_{b\in\{0,1\}^n}\mathbb E_{U\in \mathcal U} \langle b|UXU^{\dagger}|b\rangle VU^{\dagger} |b\rangle\langle b|UV^{\dagger}=V\mathcal M(X) V^{\dagger}
    \end{aligned}
    \end{equation} 
    Then the same holds for the inverse: we have, using \cref{eq:M_comm_V},
    \begin{equation}
        \mathcal M(V\mathcal M^{-1}(X) V^{\dagger})= V\mathcal M(\mathcal M^{-1}(X))V^{\dagger}=VXV^{\dagger}
    \end{equation}
    applying $\mathcal M^{-1}$ to both sides we get
    \begin{equation}\label{eq:inv_comm_V}
        \mathcal M^{-1}(VXV^{\dagger})=V\mathcal M^{-1}(X)V^{\dagger}.
    \end{equation}
    By \cref{eq:inv_comm_V} we have,
    \begin{equation}
    \begin{aligned}
        ||VOV^{\dagger}||_{s,\rho}^2&=
    \expval{U\sim \mathcal U}{\sum_{b \in \bitstring{n}}
    \matrixel{b}{U \rho U^{\dagger}}{b} \matrixel{b}{U \invmeasmap{VOV^{\dagger}} U^{\dagger}}{b}^2 }\\&=\expval{U\sim \mathcal U}{\sum_{b \in \bitstring{n}}
    \matrixel{b}{U \rho U^{\dagger}}{b} \matrixel{b}{U V\invmeasmap{O}V^{\dagger} U^{\dagger}}{b}^2 }\\&=\expval{U\sim \mathcal U}{\sum_{b \in \bitstring{n}}
    \matrixel{b}{UV^{\dagger} \rho VU^{\dagger}}{b} \matrixel{b}{U \invmeasmap{O}U^{\dagger}}{b}^2 }= ||O||_{s,V^{\dagger}\rho V}^2.
    \end{aligned}
    \end{equation}
\end{proof}
In particular for the case at hand, the ensemble $\mathcal U_d$ is Pauli invariant. $\Pi=2^{-k}\sum_{s\in \mathcal S} s$ commutes with all stabilizers, hence we have
\begin{equation}
||\Pi ||_{s(d),\rho}^2=\frac{1}{2^k}\sum_{s\in \mathcal S} ||s\Pi s||^2_{s(d),\rho}=\frac{1}{2^k}\sum_{s\in \mathcal S} ||\Pi ||^2_{s(d),s\rho s}= ||\Pi ||_{s(d),\mathcal P(\rho)}^2
\end{equation}
where
\begin{equation}
    \mathcal P: \rho\mapsto \frac{1}{2^k}\sum_{s\in \mathcal S} s\rho s
\end{equation}
projects onto the commutant of $\mathcal S$, $\mathrm{Comm(\mathcal S})=\{X|[X,s]=0,\,\forall s\in \mathcal S\}$. In the second equality, we used \cref{lem:snorm_conjugation}, in the last one, we used the fact that $||\cdot||_{s(d),\rho}^2$ is linear in $\rho$. Then we have that
\begin{equation}
    ||\Pi||_{s(d)}^2=\max_{\rho}||\Pi||_{s(d),\rho}^2= \max_{\rho\in \mathrm{Comm(\mathcal S})}||\Pi||_{s(d),\rho}^2
\end{equation}
Since $\mathrm{Comm(\mathcal S})$ is convex and $||\Pi||_{s(d),\rho}^2$ is linear in $\rho$, the optimizer must be a pure state $|\psi\rangle\langle \psi| \in \mathrm{Comm(\mathcal S})$. But if $|\psi\rangle\langle \psi|$ commutes with all stabilizers this implies that for all $s\in \mathcal S$
\begin{equation}
    s|\psi\rangle=c(s)|\psi\rangle
\end{equation}
with $c(s)=\langle \psi|s|\psi\rangle=\pm 1$, hence the optimizer is a simultaneous eigenstate of all stabilizers. To finish, we need to prove that the largest value among these eigenvectors is achieved when $c(s)=1$ for all $s$. Let $\Lambda_{\mathcal{S}}=\{\lambda\in \{0,1\}^{2n}| P^{\lambda}\in \mathcal S\}$ and let $\tilde c({\lambda})=\langle \psi|P^{\lambda}|\psi\rangle$, then we have (cf. \cref{lem:state_dep_snorm})
\begin{equation}
    ||\Pi||_{s(d),|\psi\rangle\langle \psi|}^2=4^{-k}\sum_{\lambda,\lambda' \in \Lambda_{\mathcal S}}\langle \psi|P^{\lambda}P^{\lambda'}|\psi\rangle \frac{\ttprobl{\lambda,\lambda'}}{\tprobl{\lambda} \tprobl{\lambda'}} = 4^{-k}\sum_{\lambda,\lambda' \in \Lambda_{\mathcal S}}\tilde c(\lambda)\tilde c(\lambda')\frac{\ttprobl{\lambda,\lambda'}}{\tprobl{\lambda} \tprobl{\lambda'}}.
\end{equation}
Since $\frac{\ttprobl{\lambda,\lambda'}}{\tprobl{\lambda} \tprobl{\lambda'}}>0$, the maximum is attained then $c(\lambda)=1$ for all $\lambda \in \Lambda_{\mathcal{S}}$. Then the optimizer $|\psi\rangle$ is any state stabilized by $\mathcal S$ and as promised 
\begin{equation}
    ||\Pi||_{s(d)}^2=4^{-k}\sum_{\lambda,\lambda' \in \Lambda_{\mathcal S}} \frac{\ttprobl{\lambda,\lambda'}}{\tprobl{\lambda} \tprobl{\lambda'}}.
\end{equation}
\end{proof}
A natural question is why should stabilizers play a special role, if after all we can draw the local unitaries in the circuit from the full unitary group rather than just the Clifford group. As a matter of fact, when $d\to \infty$ and the measure becomes the uniform distribution on the Clifford group, the above theorem holds more generally, that is, the maximizer shadow norm of any projector is given by any state in the image of the projector. What breaks as $d$ is lowered is that the measure is no longer invariant under arbitrary rotations, but the Pauli invariance of the ensemble allows us to recover this result for stabilizer projectors. In fact, this theorem holds for any Pauli invariant ensemble by replacing $t_{\bullet,d}$ and $\tau_{(\bullet),d}$ by the appropriate corresponding quantities.

\section{Proof of \cref{thm:av_shadow_norm_UB}}\label{app:av_shadow_proof}
We will prove the following statement.
\begin{lem}\label{lem:tp_bound}
For $d\geq \Omega(\log(n))$, 
\begin{equation}
    t_{\lambda, d}\geq \frac{1}{2^{\min\{|\lambda|+2d, n\}}+1}\frac{1}{1+\frac{1}{n^{\mathcal O(1)}}}
\end{equation}
where $|\lambda|$ is the maximum distance between any two qubits where $P^{\lambda}$ is supported.
\end{lem}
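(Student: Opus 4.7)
\textbf{Proof plan for \cref{lem:tp_bound}.}
The strategy is a light cone reduction followed by an approximate $2$-design argument. First, I would exploit the locality of the brickwork architecture: in a depth-$d$ circuit the support of the Heisenberg-evolved Pauli $U P^{\lambda} U^{\dagger}$ is contained in a set $S(\lambda,d)$ of at most $\min\{|\lambda|+2d, n\}$ consecutive qubits (each brickwork layer can enlarge the support by at most one site on each end). Consequently, only the gates whose entire future light cone intersects $S(\lambda,d)$ influence $U P^{\lambda} U^{\dagger}$, and the event $U P^{\lambda} U^{\dagger} \in \pm \mathcal Z$ is equivalent to requiring that the restriction to $S(\lambda,d)$ lies in $\pm \{\mathbb{I},Z\}^{\otimes |S(\lambda,d)|}$ (the complement is trivially identity). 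Thus, writing $k := |S(\lambda,d)|$ and letting $\widetilde{\mathcal U}_d$ be the induced ensemble of Clifford circuits on $S(\lambda,d)$, we have
\begin{equation}
t_{\lambda,d} \;=\; \prover{V\sim \widetilde{\mathcal U}_d}{V P^{\lambda}\!\!\upharpoonright_{S} V^{\dagger} \in \pm\{\mathbb{I},Z\}^{\otimes k}}.
\end{equation}

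Next, I would invoke an approximate $2$-design bound. If $\widetilde{\mathcal U}_d$ were an exact Clifford $2$-design on $k$ qubits, the same symmetry argument that gives $t_{\lambda,\infty}=1/(2^n+1)$ in the body of the paper yields
\begin{equation}
\prover{V\sim \mathrm{Haar}}{V P^{\lambda}\!\!\upharpoonright_{S} V^{\dagger}\in \pm\{\mathbb{I},Z\}^{\otimes k}} \;=\; \frac{1}{2^{k}+1}.
\end{equation}
The classical result of Brandão–Harrow–Horodecki (already cited in the main text as Ref.~\cite{brandao2016local}), specialized to $1$D brickwork circuits of depth $\Theta(\log n)$, guarantees that $\widetilde{\mathcal U}_d$ is an $\varepsilon$-approximate $2$-design for $\varepsilon = 1/\mathrm{poly}(n)$. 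Concretely, using the moment-operator characterization, the second moment of $|\langle 0|V Q V^{\dagger}|0\rangle|^2$ over $\widetilde{\mathcal U}_d$ differs from the Haar value by at most a multiplicative $(1+1/n^{\mathcal O(1)})$ factor. Combined with the probabilistic expression $t_{\lambda,d}=\mathbb E_{U}[\langle 0|UP^{\lambda}U^{\dagger}|0\rangle^{2}]$ from \cref{eq:t_exp_val}, this gives
\begin{equation}
t_{\lambda,d} \;\geq\; \frac{1}{2^{k}+1}\cdot \frac{1}{1+1/n^{\mathcal O(1)}}.
\end{equation}

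Finally, substituting $k\leq \min\{|\lambda|+2d,n\}$ and noting that $1/(2^{k}+1)$ is decreasing in $k$ yields the claimed bound. The only subtlety worth emphasizing is the choice of which $2$-design bound to import: since $\widetilde{\mathcal U}_d$ is not quite a standard brickwork on $k$ qubits (the induced circuit is trapezoidal near the boundary of the light cone), I would verify that at least $\Omega(\log k)$ rows of two-qubit gates in $\widetilde{\mathcal U}_d$ lie entirely within $S(\lambda,d)$—which holds whenever $d\geq \Omega(\log n)\geq \Omega(\log k)$—and apply the design bound to that sub-circuit, absorbing the remaining gates (which only move probability mass around Haar-invariantly in conjunction with the local single-qubit Cliffords of layer $0$) into the multiplicative error.

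The main obstacle is quantitative control of the $2$-design error on the truncated light cone: one must show that the truncation does not destroy the $1/\mathrm{poly}(n)$ approximation, which requires being careful about boundary gates and how Pauli invariance of the ensemble interacts with the light cone restriction. Everything else is a deterministic light cone observation plus a standard second-moment computation for exact $2$-designs.
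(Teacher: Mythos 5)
Your light cone reduction is sound and matches how the paper handles the locally supported case: only the gates in the radius-$d$ light cone of the support matter, so $t_{\lambda,d}$ on $n$ qubits equals $t_{\lambda,d}$ on $k=\min\{|\lambda|+2d,n\}$ qubits and one reduces to a fully supported Pauli on $k$ sites. The gap is in your second step. You invoke Ref.~\cite{brandao2016local} to claim that a depth-$\Theta(\log n)$ one-dimensional brickwork circuit is a $1/\mathrm{poly}(n)$-approximate $2$-design with \emph{multiplicative} control of second moments. That is false: the Brand\~{a}o--Harrow--Horodecki bound for relative-error approximate designs on $k$ qubits requires depth $\Omega(k)$, not $\Omega(\log k)$ (the paper itself only asserts that $\sensemble{d}$ becomes an approximate $3$-design at $d=\mathcal O(n)$). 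An additive $1/\mathrm{poly}(n)$ design guarantee is useless here because the target value $1/(2^{k}+1)$ is exponentially small in $k$, and the multiplicative guarantee you need simply does not hold at logarithmic depth. Your argument would go through when $k=\mathcal O(\log n)$, but the lemma -- and the first half of \cref{thm:av_shadow_norm_UB}, which relies on it for arbitrary traceless $O$ -- must cover fully supported Paulis with $k=n$, where your design step fails.

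The reason logarithmic depth suffices at all is that the required bound is one-sided: one only needs $t_{\lambda,d}\geq t_{\lambda,\infty}/(1+1/n^{\mathcal O(1)})$, not two-sided closeness to the Haar value (indeed $t_{\lambda,d}\gg t_{\lambda,\infty}$ for low-weight Paulis, so a genuine $2$-design statement would prove too much). The paper obtains the one-sided bound via the Weingarten/statistical-mechanics mapping from the anti-concentration literature: $t_{\lambda,d}$ becomes a weighted sum over domain-wall configurations on a triangular lattice with boundary conditions fixed to the flip permutation wherever the Pauli is supported; every configuration contributing to $t_{\lambda,\infty}$ but not to $t_{\lambda,d}$ must contain a wall penetrating to depth $>d$ and hence carries weight at most $(4/5)^{2d}$ times a path-counting factor, and summing these shows the deficit is a $1/n^{\mathcal O(1)}$ multiplicative correction once $d\geq \Omega(\log n)$. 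To salvage your plan you would need to replace the $2$-design citation with an anti-concentration-type one-sided moment bound, which is essentially what the paper's domain-wall computation establishes from scratch.
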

Combining this with the expression for the shadow norm, we obtain \cref{thm:av_shadow_norm_UB}. More specifically, we have that if $d\geq \Omega(\log(n))$
\begin{equation}
    \ssshadownorm{O}{d}{\mathrm{LS}}^2=\sum_{\lambda}\frac{\beta_{\lambda}^2}{t_{\lambda_d}}\leq \left(1+n^{-\Omega(1)}\right)(2^{n}+1)\sum_{\lambda}\beta_\lambda^2=\left(1+n^{-\Omega(1)}\right)\frac{2^n+1}{2^n}||O||_{F}^2
\end{equation}
and if $d= \Theta(\log(n))$, $|\lambda|\leq \mathcal O(\log(n))$, we have
\begin{equation}
    \frac{1}{t_{\lambda,d}}\leq \frac{n^{\mathcal O(1)}}{1+\frac{1}{n^{\mathcal O(1)}}}=n^{\mathcal O(1)},
\end{equation}
hence if $O$ only contains Paulis such that $|\lambda|\leq \mathcal O(\log(n))$, we get 
\begin{equation}
    \ssshadownorm{O}{d}{\mathrm{LS}}^2=\sum_{\lambda}\frac{\beta_{\lambda}^2}{t_{\lambda_d}}\leq n^{\mathcal O(1)}\sum_{\lambda}\beta_\lambda^2=n^{\mathcal O(1)}2^{-n}||O||_{F}^2.
\end{equation}
To prove \cref{lem:tp_bound}, we rely on a well known mapping from moments of random quantum circuits to a path counting problem \cite{Nahum_2018,Zhou_2019,NHJ2019} which we only briefly summarize. For an in depth derivation, we direct the reader to Ref.~\cite{NHJ2019}. We have
\begin{equation}
    \mathbb E_{U\sim\mathcal U}\langle 0|UPU^{\dagger}|0\rangle^2 = \mathbb E_{U\sim\mathcal U} \langle P |^{\otimes 2} U^{\otimes 2}\otimes \bar{U}^{\otimes 2}|0\rangle^{\otimes 4}
\end{equation}
where $|P\rangle$ is the vectorization of the Pauli $P$ and $U$ is a random depth $d$ Clifford circuit. The average over the circuit is simply the average over the individual two-local gates, which can be seen as drawn from the Haar measure of the unitary group in dimension 4, $\mu_H(4)$, since the Clifford group forms a unitary 3-design \cite{Zhu_2017}. Using a tool known as \emph{Weingarten calculus} \cite{Brouwer_1996,Collins_2006}, an application of Schur-Weyl duality, we can write
\begin{equation}
    \mathbb E_{U\sim \mu_H(4)}U^{\otimes 2}\otimes \bar U^{\otimes 2}= \sum_{\sigma \in S_2} \mathrm{Wg}(\sigma^{-1}\tau, 4) |\sigma\rangle\langle\tau|  
\end{equation}
where $S_2$ is the permutation group of two elements, $\mathrm{Wg}$ is the Weingarten function, and for $\sigma \in S_2$ $|\sigma\rangle$ is the vectorization of the operator on $\mathbb C^4$ which permutes the 4 tensor product elements as $\sigma$.
Graphically, we can represent this equation as follows
\begin{equation}
    \mathbb E \begin{gathered}
    \includegraphics[height=1.5cm]{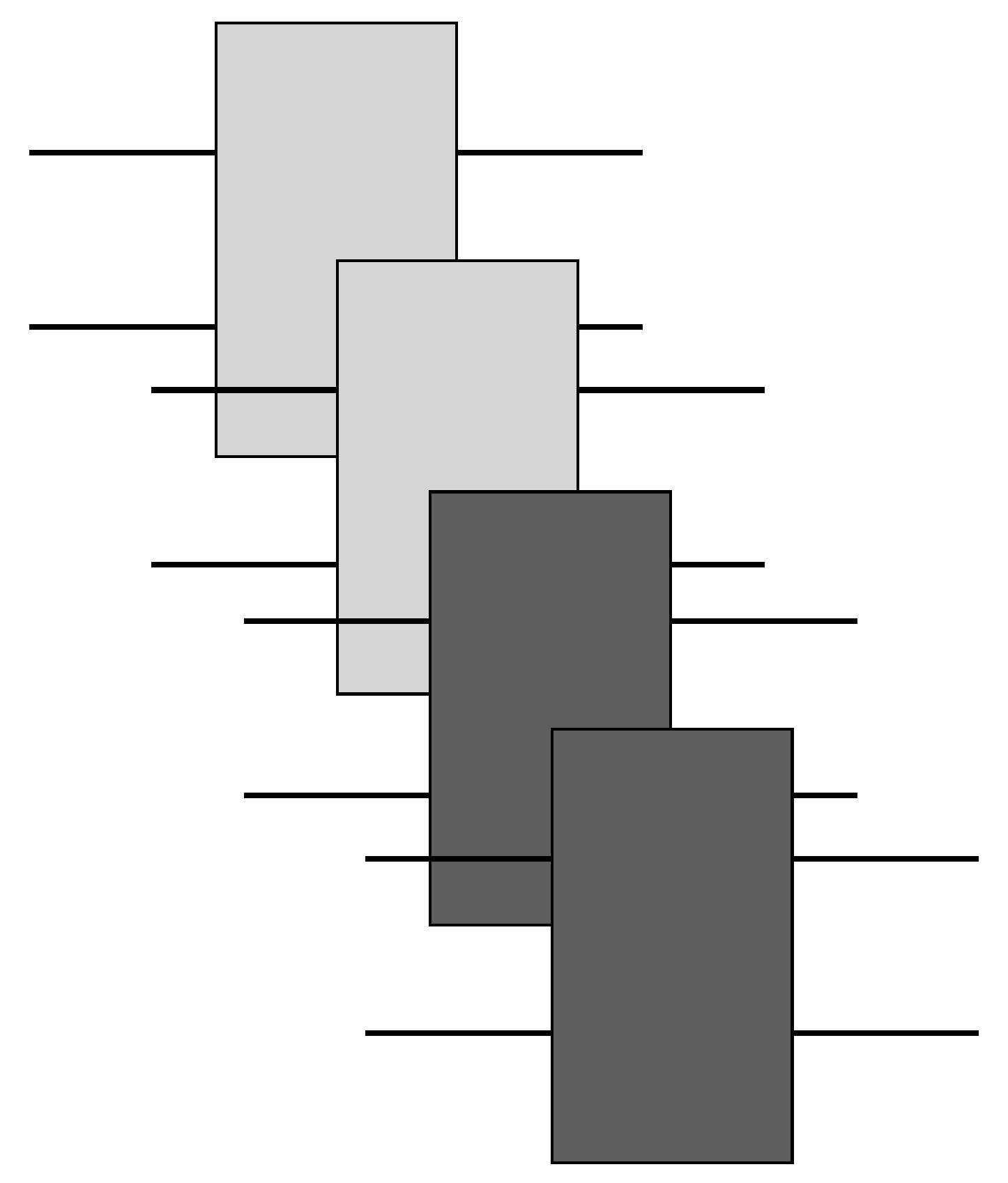}\end{gathered}=\begin{gathered}\includegraphics[height=1.5cm]{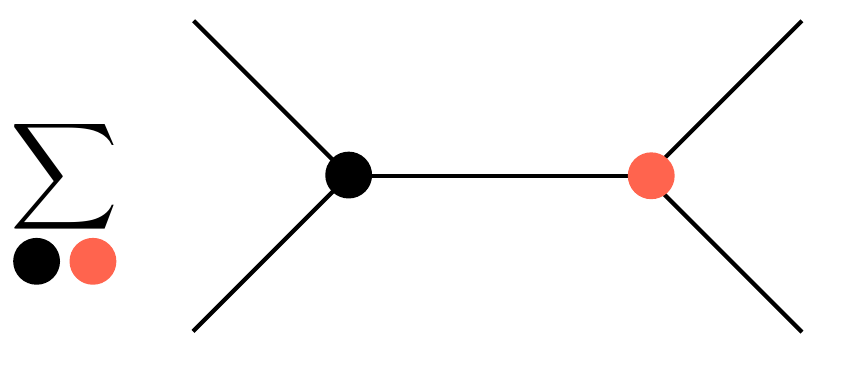}\end{gathered}.
\end{equation}
Here, the light boxes represent the two copies of $U$ and the dark boxes the two copies of $\bar U$, the black and red circles represent choices of permutations. The outgoing and ingoing legs are permutation vectorizations, the leg connecting the two permutation represents the Weingarten function $\mathrm{Wg}(\sigma^{-1}\tau,4)$. Applying this to the whole circuit, we get the sum over all black and red circles in the hexagonal lattice in Fig.~\ref{fig:hexagonallattice}.
\begin{figure}[h]
    \centering
    \includegraphics[width=.5\textwidth]{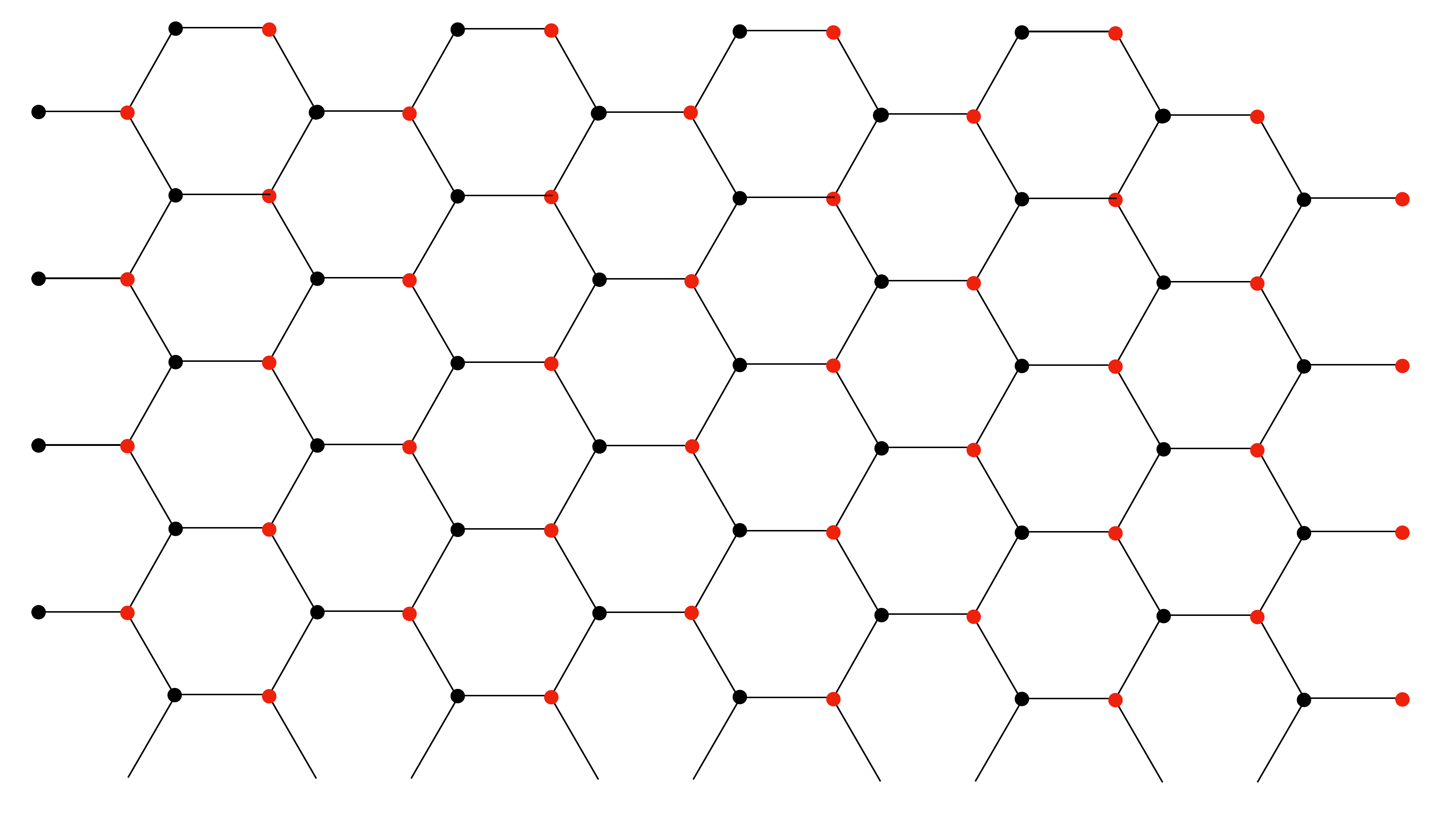}
    \caption{Summing over all black and red circles in this hexagonal lattice corresponds to averaging over all two-local gates in the brickwork random quantum circuit.}
    \label{fig:hexagonallattice}
\end{figure}
This is reminiscent of the partition function of a statistical mechanics model in which the spins are given by choices of permutations, and the interactions are given by either Weingarten functions or overlaps of permutation vectorizations,
performing this sum becomes substantially simpler if we first sum over all red circles, we then get an equivalent model on a triangular lattice
\begin{equation}
\begin{gathered}
    \includegraphics[height=1.5cm]{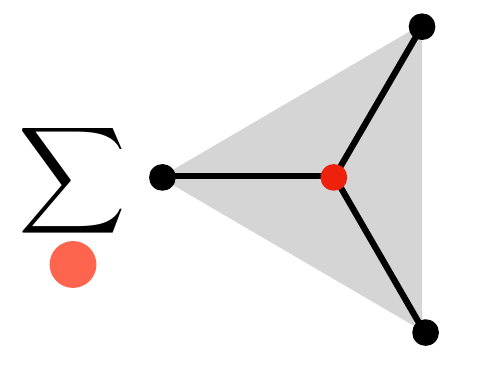}\end{gathered}=\begin{gathered}\includegraphics[height=1.8cm]{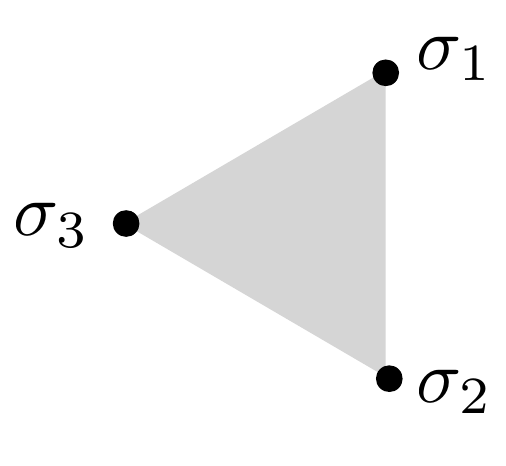}\end{gathered}= \begin{cases}
    1 \textrm{ if } \sigma_1=\sigma_2=\sigma_3 \\
    \frac 2 5 \textrm{ if } \sigma_1=\sigma_3,\, \sigma_2=\sigma_3\\
    0\textrm{ otherwise }
    \end{cases}
\end{equation}
sums over vertices in this new triangular model can be seen as sums over domain walls between regions of identity and flip permutations, each domain wall of length $l$ has a weight of $(2/5)^l$. Fig.~\ref{fig:path_example} shows the triangular lattice with an example of a two domain walls configuration
\begin{figure}[h]
    \centering
    \includegraphics[width=.4\textwidth]{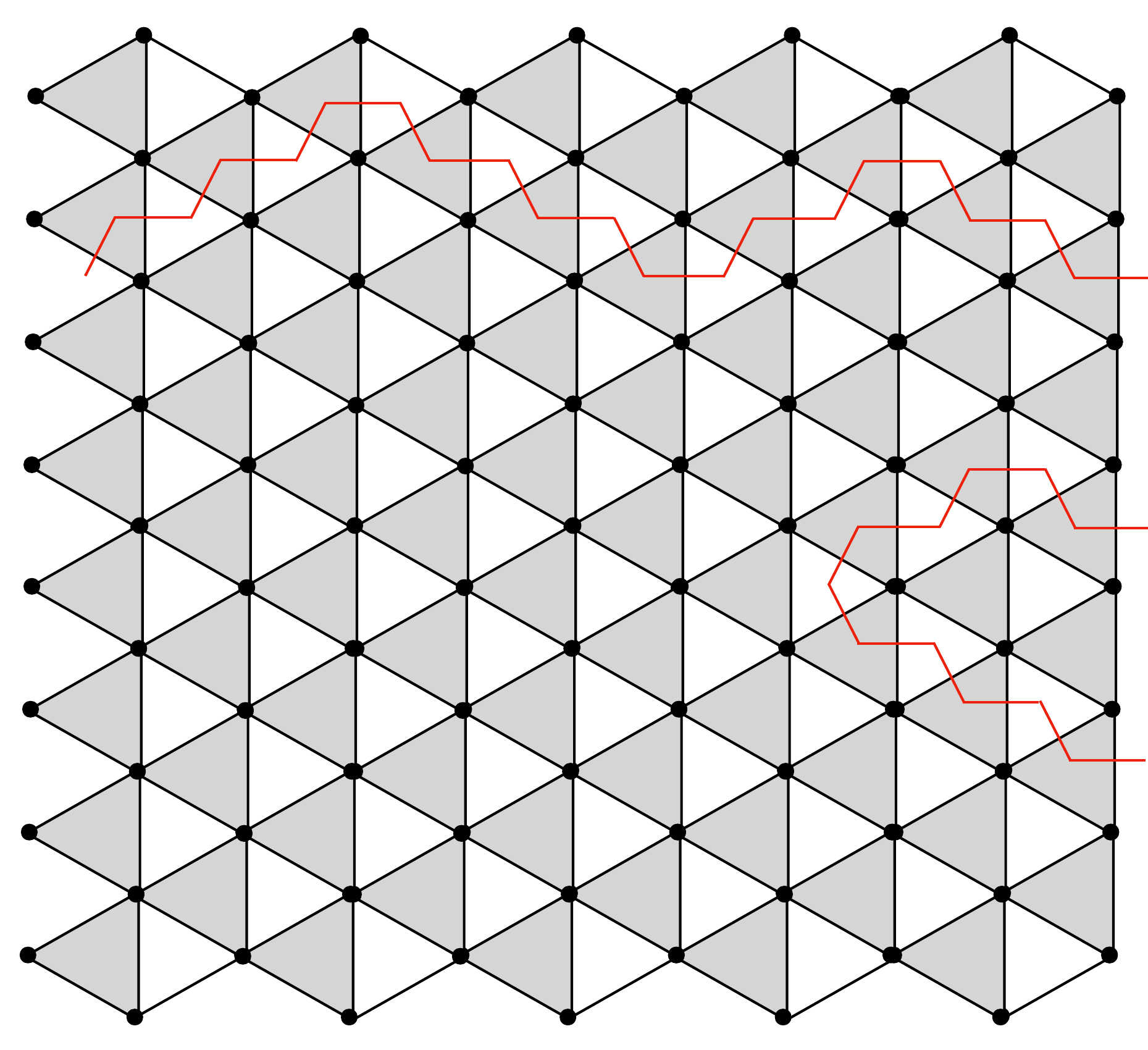}
    \caption{Triangular lattice with two domain walls.}
    \label{fig:path_example}
\end{figure}
We still have to discuss the boundary condition. On the right boundary every permutation is contracted with the vector $|0\rangle ^4$, and $\langle \sigma|0,0,0,0\rangle=1$. We still have to sum over the red circles at the right boundary, the value of this sum is independent of the neighboring black circle and from the whole right boundary, we get 
\begin{equation}
    (\mathrm{Wg}(\mathbbm 1,4)+\mathrm{Wg}(F,4))^{n/2}= \frac{1}{\sqrt{5}^n}
\end{equation}
for every boundary spin, where $\mathbbm 1, F\in S_2$ are the identity and flip permutation respectively.
The left boundary is more interesting. For a non trivial Pauli, we have
\begin{equation}
    \langle \mathbbm 1|(|P\rangle)^{\otimes 2}= \tr{P}^2=0 \quad \langle F|(|P\rangle)^{\otimes 2} =\tr{P^2}=2.
\end{equation}
For the identity, instead 
we have
\begin{equation}
    \langle \mathbbm{1}|(|\mathbbm{I}\rangle)^{\otimes 2}= 4\quad \langle F|(|\mathbbm{I}\rangle)^{\otimes 2} = 2.
\end{equation}
Hence, wherever there is a non trivial Pauli at the left hand boundary, only the flip permutation is allowed. We are now ready to lower bound $t_{\lambda,d}$. First of all, notice that if $P^{\lambda}=P\otimes \mathbb{I}$ is not fully supported
\begin{equation}
    t_{\lambda,d}=\expval{U\sim\mathcal U_d} {\langle 0|UP\otimes \mathbb{I} U^{\dagger} |0\rangle^2}\geq \expval{U\sim\mathcal U_d}{\langle 0|UP\otimes Q U^{\dagger} |0\rangle^2}
\end{equation}
for any Pauli $Q$, since $Q<\mathbb{I}$ as operators. Hence for a lower bound we just need to lower bound $t_{\lambda,d}$ for fully supported $P^{\lambda}$. We then have
\begin{equation}
    t_{\lambda,d}= \left(\frac{2}{\sqrt{5}}\right)^n\sum_{p\in \textrm{paths}(d)}\textrm{weight}(p)
\end{equation}
  since the left boundary is constrained to be comprised entirely of flip permutations, there cannot be any domain walls 
  reaching it, hence all the paths in the sum must begin and end at the right boundary. $\textrm{paths}(d)$ are all paths that do not reach beyond a distance $d$ from the right boundary. We have
\begin{equation}
\begin{aligned}
    t_{\lambda,\infty}&=\left(\frac{2}{\sqrt{5}}\right)^n\sum_{p\in \textrm{paths}(\infty)}\textrm{weight}(p)\leq \left(\frac{2}{\sqrt{5}}\right)^n\sum_{p\in \textrm{paths}(d)}\textrm{weight}(p)\left(1+\sum_{p\in \mathrm{paths}(> d)}\textrm{weight}(p)\right)\\&=:t_{\lambda,d}\left(1+\sum_{p\in \mathrm{paths}(> d)}\textrm{weight}(p)\right)
    \end{aligned}
\end{equation}
where $\mathrm{paths}(> d)$ are all paths that reach beyond a distance $d$ from the left boundary, and we have used that a configuration containing both paths that stay within a distance $d$ from the boundary and paths that do not has a weight equal to the product of the weights of these two paths. See \cref{fig:path_example_limit}.
\begin{figure}[h]
    \centering
    \includegraphics[width=.4\textwidth]{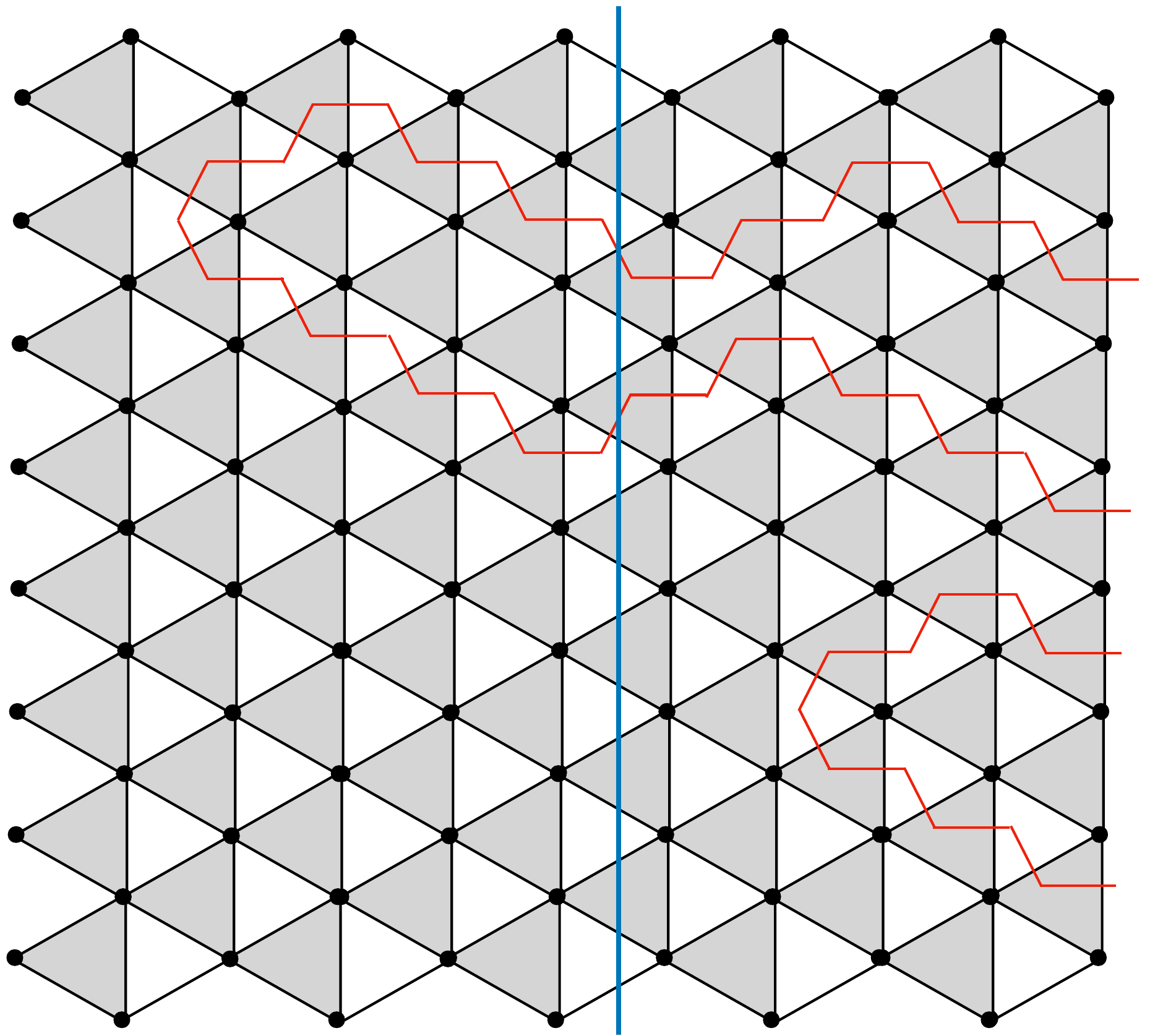}
    \caption{Two paths beginning and ending at the right boundary, the blue line represents the depth $d$, the longer path is not counted in $t_{\lambda,d}$, but it is in $t_{\lambda,\infty}$, where the weight of this configuration is the product of the weights of the two paths.}
    \label{fig:path_example_limit}
\end{figure}
We then need to upper bound this sum. We have
\begin{equation}
\sum_{p\in \mathrm{paths}(> d)}\textrm{weight}(p)=\sum_{t>d} R(t).
\end{equation}
Where $R(t)$ is the sum of all configurations that only contain paths that reach beyond a distance $t$ from the right boundary and contain at least one path reaching exactly a distance $t$. Consider all paths reaching some distance $t$ from the boundary: each of these paths can go either up or down at every of the $t$ steps from the boundary to the turn around point, and similarly for the way back to the boundary. Then for every choice of turn around point at a distance $t$ from the boundary there are fewer than $2^{2t}$ paths, since there are $n/2$ possible turn around points, we conclude that there are at most $n/2\, 2^{2t}$ of these paths, and each of them has a weight $(2/5)^{2t}$. We then have
\begin{equation}
    R(t)\leq \frac{n}{2}\left(\frac{4}{5}\right)^{2t} \sum_{m=0}^{\infty} \left(\frac{n}{2}\right)^m\left(\sum_{w=t}^{\infty}\left(\frac{4}{5}\right)^{2w}\right)^{m}.
\end{equation}
In the above expression the case $m=0$ represents one path reaching a distance $t$ from the boundary, and the other terms contain that one path plus $m$ other paths that may reach a distance $t$ or higher from the boundary. There can clearly be at most $n/2$ such non overlapping paths in a given configuration, but summing to infinity is good enough for our purposes, as the leftover sum will be exponentially small in $n$,
\begin{equation}\label{eq:R(t)}
\begin{aligned}
    R(t)&\leq \frac{n}{2}\left(\frac{4}{5}\right)^{2t} \sum_{m=0}^{\infty} \left(\frac{n}{2}\right)^m\left(\sum_{w=t}^{\infty}\left(\frac{4}{5}\right)^{2w}\right)^{m}=\frac{n}{2}\left(\frac{4}{5}\right)^{2t}  \sum_{m=0}^{\infty} \left(\frac{n}{2}\right)^m\left(\frac{4}{5}\right)^{2t}\left(\frac{(4/5)^{2t}}{1-(4/5)^{2}}\right)^{m}\\&=\frac{n}{2}\left(\frac{4}{5}\right)^{2t} \sum_{m=0}^{\infty} \left(\frac{25}{18}\left(\frac{4}{5}\right)^{2t}n\right)^m.
    \end{aligned}
\end{equation}
Now
suppose that
\begin{equation}\label{eq:d_greater_log}
    d\geq \frac{\alpha \log(n)+\log(1/c)}{\log(16/25)}
\end{equation} for some $\alpha>1, c>0$, then $\left(\frac{16}{25}\right)^t\leq cn^{-\alpha}$ and for $n> \left(\frac{25}{18}c\right)^{\frac{1}{\alpha-1}}$, we have
\begin{equation}
    R(t)\leq \frac{n}{2}\left(\frac{4}{5}\right)^{2t} \sum_{m=0}^\infty \left(c\frac{25}{18}\, n^{1-\alpha}\right)^m = \frac{n}{2}\left(\frac{4}{5}\right)^{2t} \frac{1}{1-c\frac{25}{18}\, n^{1-\alpha}}
\end{equation}
we then have 
\begin{equation}
\begin{aligned}
    \sum_{p\in \mathrm{paths}(> d)}\mathrm{weight}(p)&\leq\sum_{t> d}R(t)\leq \frac{1}{1-c\frac{25}{18}\, n^{1-\alpha}} \frac{25}{18} n\left(\frac{4}{5}\right)^{2(d+1)}\leq \frac{16}{25} \frac{1}{\frac{18}{25 c}n^{\alpha-1}-1\, } .
    \end{aligned}
\end{equation}
Putting everything together, we have
\begin{equation}
    t_{\lambda,d}\geq t_{\lambda,\infty}\frac{1}{1+\frac{16}{25}\frac{1}{\frac{18}{25 c}n^{\alpha-1}-1\, }} .
\end{equation}
Suppose now that $P^{\lambda}$ is fully supported only on a subregion of the chain of length $p<n-2d$. Let us make the notation more specific and call $t_{\lambda,d}(n)$ the value of $t_{\lambda,d}$ on a system of size $n$. In this case we can see that $t_{\lambda,d}(n)=t_{\lambda,d}(n')$ with $n'=p+2d<n$, as the unitaries acting outside the light-cone of a radius $d$ around the support of the Pauli play no role. The same exact argument as earlier applies, the only difference is the limiting value
\begin{equation}
   t_{\lambda,d}(n)= t_{\lambda,d}(n')\geq  \frac{1}{2^{p+2d}+1}\frac{1}{1+\frac{16}{25}\frac{1}{\frac{18}{25 c}n^{\alpha-1}-1\, }}
   .
\end{equation}
We can even make the bound better by noticing that the factor $n$ in $n^{1-\alpha}$ came from counting the number of possible turn around points for the paths, which in the case of $t_{\lambda,d}(n')$ is $n'$ instead. So if we have, in addition to \cref{eq:d_greater_log}, $p+2d\leq r\log(n)$, and hence $d=\Theta(\log n)$, we can replace $n^{\alpha-1}$ with $\frac{n^{\alpha}}{r\log(n)}$ in the bound above.

\bibliography{bibRef}

\end{document}